\newcommand{\place}{{\sf Placement}}
\newcommand{\placel}{{\sf Placement on a Line}}
\newcommand{\onee}{\mathbbm{1}}
\newcommand{\ep}{\mathbb{E}}
\def\opt{\textsc{OPT}}
\def\alg{\textsc{ALG}}
\DeclarePairedDelimiter{\ceil}{\lceil}{\rceil}
\DeclarePairedDelimiter{\floor}{\lfloor}{\rfloor}
\newcounter{omarcounter}
\newcounter{rajancounter}
\begin{document}
	\TITLE{
   {When Location Shapes Choice: \\ Placement Optimization of Substitutable Products}
	}
	\ARTICLEAUTHORS{%
		\AUTHOR{Omar El Housni}
	 	\AFF{School of Operations Research and Information Engineering, Cornell Tech, Cornell University, \EMAIL{oe46@cornell.edu }} 
	 \AUTHOR{Rajan Udwani}
	 \AFF{Departnent of Industrial Engineering and Operations Research, UC Berkeley,  \EMAIL{rudwani@berkeley.edu}}
    \medskip

    \medskip

    \medskip

    \medskip
}
\ABSTRACT{%
	Strategic product placement can have a strong influence on customer purchase behavior in physical stores as well as online platforms. Motivated by this, we consider the problem of optimizing the placement of substitutable products in designated display locations to maximize the expected revenue of the seller. We model the customer behavior as a two-stage process: first, the customer visits a subset of display locations according to a browsing distribution; second, the customer chooses at most one product from the displayed products at those locations according to a choice model. Our goal is to design a general algorithm that can select and place the products optimally for \emph{any browsing distribution and choice model}, and we call this the \place\ problem. We give a randomized algorithm that utilizes an $\alpha$-approximate algorithm for cardinality constrained assortment optimization and outputs a $\frac{\Theta(\alpha)}{\log m}$-approximate solution (in expectation) for \place\ with $m$ display locations, i.e., our algorithm outputs a solution with value at least $\frac{\Omega(\alpha)}{\log m}$ factor of the optimal and this is tight in the worst case. We also give algorithms with stronger guarantees in some special cases. In particular, we give an efficient deterministic $\frac{\Omega(1)}{\log m}$-approximation algorithm for the Markov choice model, and a tight $(1-1/e)$-approximation algorithm for the problem when products have identical prices.
}

\KEYWORDS{Product Placement; Assortment Optimization; Choice Model; Approximation Algorithms}
\maketitle


\section{Introduction}





Assortment optimization is a fundamental problem in revenue management with applications in a variety of industries including retailing and online advertising. Given a universe of products, the goal in assortment optimization is to select a subset of products to maximize the seller's objective such as expected revenue or expected sales. The expected revenue (or sales) of a given assortment depends on the preferences of customers and this is typically captured using a discrete choice model. A variety of choice models have been proposed in the literature, and there is a large stream of work dedicated to the development of efficient algorithms for finding optimal or approximately optimal assortments  under different choice models. 

An underlying assumption in most discrete choice models is that 
a	customer's choice depends only on his/her preferences over the assortment of products. 
In practice, when a customer browses a store - whether physical or online - he/she typically encounters only a subset of the products on display. Customers 
choose based on their preferences over the products \emph{that they actually see}, and the likelihood of observing a given product depends on the location of this product in the store. {\color{black} High-visibility placements significantly boost a product's chances of being seen by customers in both physical stores \citep{larson2005exploratory} and online environments such as online advertising \citep{ ghose2009empirical, agarwal2011location} and e-commerce \citep{ursu2018power}. }
In general, customer choice behavior is influenced by the placement of products at different locations as well as the overall assortment of products on display. 


Product placement is particularly important in brick-and-mortar stores, where the arrangement of products on various shelves and prominent display areas such as mannequins and end caps (display locations at the end of aisles), can influence customer purchase behavior. By placing popular or high-margin products at eye level or in high-traffic areas, retailers can capture customers' attention, leading to impulse purchases and increased sales~(\cite{flamand2023store}). In fact, both physical and online retailers often display a given product at multiple locations in the store in order to increase the product's visibility (example shown in Figure \ref{fig:one}). More broadly, designing product placement is part of \emph{visual merchandising} in the retail industry, which is the practice of optimizing the presentation of products in a store to create an appealing and engaging environment for the customer. The practice is also common in the context of online browsing, where the optimal placement of product tiles or ads on a webpage affects the click behavior of the customer~(\cite{victor}). In several online settings, products are displayed in a vertical list and customers typically browse the list from top to bottom. The problem of optimizing a vertical list of products has been actively studied in the literature and it is referred to as \emph{display optimization}~(\cite{alidany}) or \emph{product framing}~(\cite{gallego2020approximation}). Many online platforms display a matrix of products instead of a vertical list. To the best our knowledge, no algorithmic results are known for more general settings where the display locations are not vertically stacked.	

\begin{figure}[h]
	\begin{center}
		\includegraphics[scale=0.85]{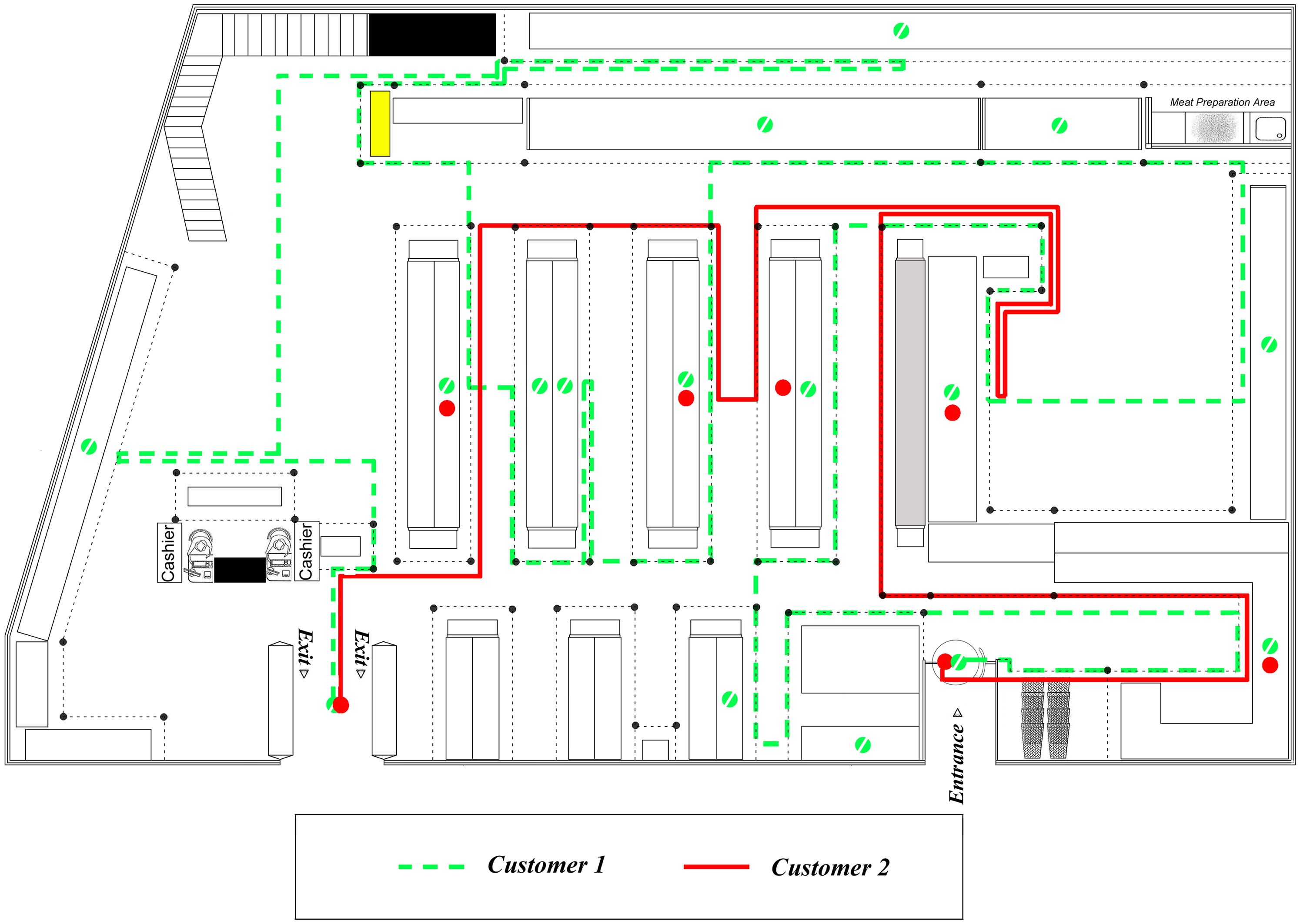}
	\end{center}
	\caption{Journeys of two different customers through a store. Figure reproduced from \cite{flamand2023store} with permission.}\label{fig:two}
\end{figure}

\begin{figure}
	\centering
	\includegraphics[scale=0.25]{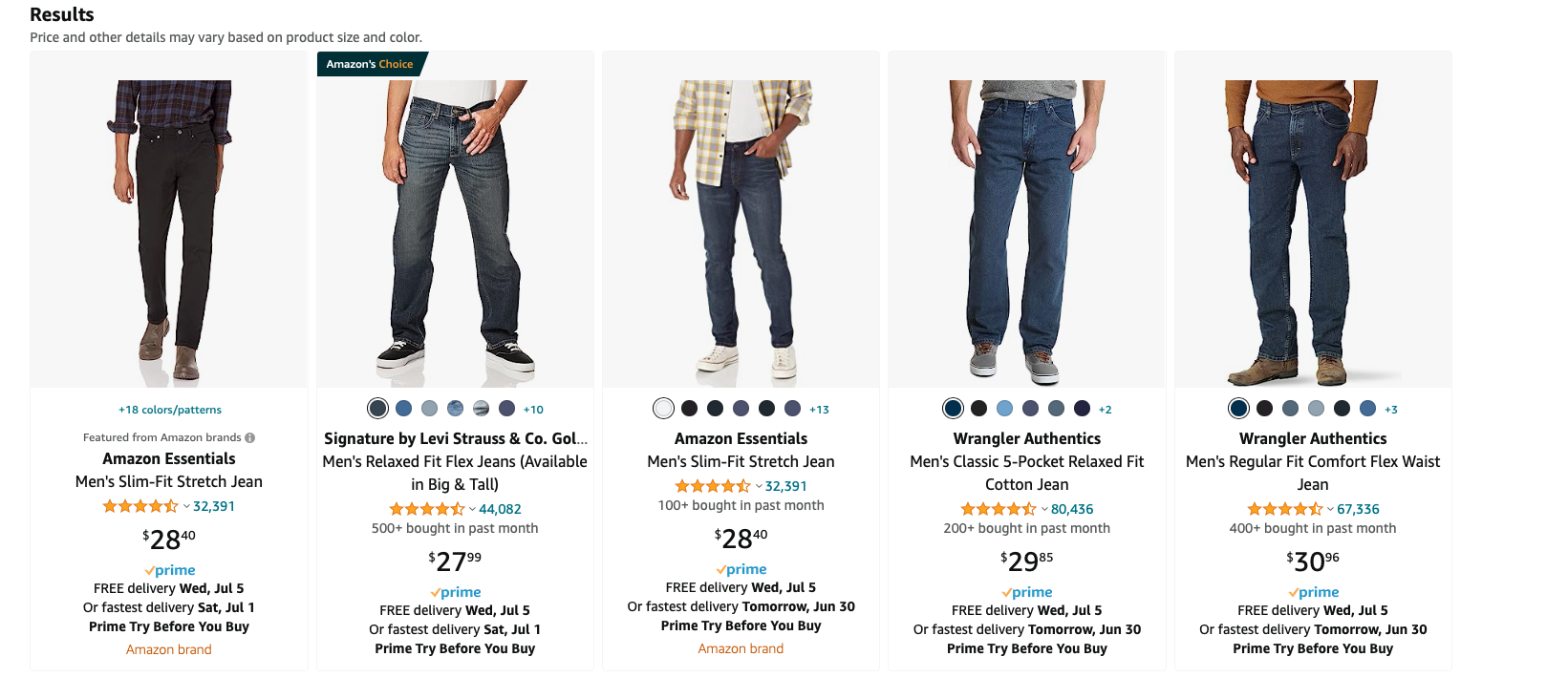}
	\caption{Screenshot from Amazon.com. The product on the extreme left (Amazon Essentials) is the same as the product in middle (also Amazon Essentials) and links to the same product page.}
	\label{fig:amazon}\label{fig:one}
\end{figure}

\vspace{2mm}
\noindent
{\bf Informal model description:} 
We propose a general formulation of the problem of placing substitutable products in a designated set of display locations in a store (physical or online). 

We model the customer behavior as a two-stage process: first, a customer entering the store visits a subset of display locations according to a \emph{browsing distribution}; second, the customer chooses from the products placed at the locations that they visited according to a discrete choice model. {\color{black} We refer to the probability distribution of the set of visited locations as \emph{the browsing distribution}}.  Note that customers may take different paths through the store (for example, see Figure \ref{fig:two}).  
	A customer's path may depend on the layout of the store and the customer's patience level and capacity for browsing but we assume that it does not depend on the product placement, i.e., the browsing distribution is independent of the placement of products.  We also assume that in the second stage, the customer's choice depends only on the set of products that they saw in the first stage and not on where these products were displayed, 
	i.e., the discrete choice model is also independent of the placement of products. Later, we generalize some of our results to settings where the choice model depends on the placement.
	
	The objective is to select an assortment of products and determine their placement in order to maximize the expected revenue from a customer visit. Each location holds at most one product but a given product may be placed at multiple display locations. { Our model also captures settings where multiple products may be placed at each location, as well as, the setting where locations cannot be left empty.} To capture the different types of choice behavior and browsing patterns that may arise in the diverse array of applications of this problem, we focus on designing a general algorithm that can select and place the products optimally for \emph{any browsing distribution and almost any choice model}. We call this the \place\ problem. 
	
	Our two-stage process is inspired by consider-then-choose choice models~(\cite{consider}), where the customer first shortlists a subset of the assortment (a consideration set) and then chooses from the subset according to a discrete choice model. The main difference is that our model endogenizes the consideration set of customers through their browsing behavior and the product placement in a store. 

\vspace{2mm}
\noindent
{\bf Fundamental challenges:} 
The \place\ problem involves two challenging and interrelated tasks: selecting the assortment of products to display in the store, and finding the optimal placement of this assortment over the available locations. 
Given an assortment, the optimal placement of products in the assortment depends on the manner in which customers enter and move around the store, i.e., the browsing distribution. { In fact, \place\ is NP-hard even in the special case where the locations are vertically stacked and customer chooses according to the Multinomial Logit (MNL) model (see Remark \ref{pline} and Lemma \ref{lem:hardness} in Section \ref{sec:prelim}).} In general, the browsing behavior can be quite diverse (for example, see Figure \ref{fig:two}) and this makes it significantly more challenging to find the optimal placement 
from the enormous number of	possible solutions: $s^m$ possible placements for an assortment of $s$ products and $m$ locations since every product may be repeated at multiple locations. 

In the setting where customers visit all locations with certainty (a trivial browsing distribution), the placement problem is equivalent to the classical assortment optimization problem with a cardinality constraint due to the limited number of display locations. For a general Random Utility Maximization (RUM)-based choice model \citep{block1959random}, this problem is computationally intractable and no non-trivial approximation guarantees are possible~(\cite{aouad2018approximability}). However, for many RUM-based { single-purchase choice models}, such as the Multinomial Logit (MNL) choice model, Nested Logit (NL) choice model, or the Markov choice model, there exist efficient algorithms or approximation schemes for finding the optimal or near-optimal assortment under a cardinality constraint (see Section \ref{sec:previous}). We show that an assortment that is optimal when customers visit all locations may not be a good assortment of products for other (non-trivial) browsing distributions. 
More broadly, while the literature on constrained assortment optimization is extensive, it is not immediately apparent if its algorithmic results can be used for the \place\ problem.


\subsection{Main contributions }

Our goal is to design an efficient algorithm for a general setting of the \place\ problem. Prior work on assortment optimization has demonstrated the benefits of using the specific characteristics of a choice model in order to design provably good and efficient algorithms. We focus on developing a general algorithmic framework that 
can generate a provably good placement solution for any given choice model and browsing distribution by utilizing an algorithm for assortment optimization in the same choice model. Such a framework allows us to transform a customized assortment optimization algorithm for a given choice model into an approximation algorithm for \place\ with the same choice model.

\vspace{2mm}
\noindent	{\bf Placement for Maximizing Revenue:} Our main technical contribution is a general approximation algorithm for the \place\ problem when products have 
fixed prices and the objective is to maximize expected revenue. Our result holds for a very general family of choice models, { including models for multi-purchase behavior. Specifically, we only require the choice model to satisfy the weak-rationality and monotonicity properties, a condition met by a large class of single-purchase choice models, including for any RUM-based choice model, and several multi-purchase choice models}. 
Given a choice model that satisfies this condition, any browsing distribution, and an 
oracle (black box algorithm) that provides an $\alpha$-approximate solution to the cardinality constrained assortment optimization problem under the given choice model, we give an algorithm that achieves an approximation ratio of { $\frac{\Theta(\alpha)}{\log \rho}$ for the \place\ problem, where $\rho$ is a problem parameter that depends on the variation in the \emph{number} of locations visited by different customers. We have $\log \rho\approx 1$ if all the customers visit nearly the same number of locations, regardless of the total number of display locations. In the worst case, $\rho=m+1$, where $m$ is the total number of display locations, so we get $\frac{\Theta(\alpha)}{\log m}$-approximation for the \place\ problem. }

Through a family of problem instances,  we show that the approximation factor of $\frac{1}{\log m}$ is tight for our algorithm (up to constants). Our algorithm is randomized and utilizes the oracle for finding $\alpha$-approximate cardinality constrained assortments multiple times. We also investigate a restricted version of our algorithm that uses the oracle for assortment optimization at most once and show that this has a significantly worse approximation guarantee of $\frac{\Theta(\alpha)}{ \sqrt{m}}$. This demonstrates 
the	power of using multiple assortments from the oracle. Finally, we give an efficient deterministic $\frac{\Omega(\alpha)}{ \log m}$   algorithm for the Markov Chain choice model, enhancing the solution's applicability and practicality in that context.



\vspace{2mm}
\noindent
{\bf Placement for Maximizing Sales:} 
Sales maximization is a special case of revenue maximization. Specifically, one can find a placement that optimizes expected sales by solving an instance of the revenue maximization problem with identical prices for all the products. When prices are identical and the {(single-purchase)} choice model fits within the RUM framework, the problem of cardinality constrained assortment optimization reduces to an instance of cardinality constrained submodular maximization for a general class of choice models~(\cite{berbeglia2020assortment}). 
We prove that the more general {\sf Placement} problem for products with uniform/identical prices is equivalent to maximizing a monotone submodular function under a \emph{matroid constraint}. Submodular maximization subject to a matroid constraint is a well-studied problem that has a $(1-1/e)$-approximation algorithm~(\cite{calinescu2011maximizing}). This gives us a $(1-1/e)$-approximate algorithm for \place\ with uniform prices. We also show that this is the best possible approximation guarantee for any polynomial time algorithm (unless P=NP). Our hardness result is based on the hardness of finding the optimal cardinality constrained assortment for a Mixture of MNL choice model when products have identical (or uniform) prices. 

Finally, we conduct numerical experiments to evaluate the performance of our algorithms for the \place\ problem on synthetic data under two browsing distribution scenarios: placement on a line and a two-dimensional grid, and assuming a MNL choice model. To benchmark our algorithms, we design a mixed integer linear programming (MILP) formulation to compute the optimal solution of the \place\ problem for MNL choice model (Appendix \ref{appendix:IP}). Our algorithms are highly efficient, consistently delivering near-optimal solutions across all instances we tested. In contrast, the MILP approach struggles to solve larger instances within practical time limits. These results underscore the scalability and effectiveness of our algorithms in solving the \place\ problem across a variety of settings. 

\color{black}


\subsection{Related Literature}\label{sec:previous}
The impact of product placement has been studied in a variety of fields including operations research, marketing, and computer science. {\color{black} Recall that 
the two-stage process in \place\ is inspired by consider-then-choose choice models~(\cite{consider,jagabathula2024demand}), where the customer first shortlists a subset of the assortment or products and then chooses from the subset according to a discrete choice model. The main difference is that the probability that a product is in the consideration set depends on its placement and in this way our model endogenizes the consideration set of customers.
 } We assume that the browsing distribution and the choice model are known and do not depend on the placement. Our goal is to find the revenue optimizing assortment and product placement. This problem has connections to a variety of formulations considered in the literature.

\vspace{2mm}
\noindent \textbf{Product Framing and Display Optimization:}  
The literature on product framing and display optimization is perhaps the most closely related to our work. 
This literature considers settings where customers browse through the display locations in a fixed direction, such as from top to bottom in a list of ads on an online search platform, and then choose at most one product from the set of products that they see. As we discuss in Section \ref{sec:prelim}, this corresponds to a special case of our setting where the browsing distribution is supported on nested sets of locations. We call this  problem \placel.  To the best of our knowledge, finding efficient algorithms for general browsing distributions remained open prior to our work. { Most of the literature on \placel\ assumes that the browsing distribution and choice model are independent of the placement of products. There are exceptions, such as  \cite{gallego2020approximation} and \cite{cultural}, that we discuss in more detail below.}

\cite{davis2015assortment} introduce the problem of \emph{assortment over time} where retailers build their assortment incrementally. Their setting is a special case of \placel\ where the browsing distribution is uniform. They give a $0.5\alpha$-approximation for the problem for any monotone choice model that admits an $\alpha$-approximation to the cardinality constrained assortment optimization problem. \cite{gallego2020approximation} consider a generalization of this setting where the browsing distribution is non-uniform. They give a $6\alpha/\pi^2$-approximation algorithm when the browsing distribution satisfies the \emph{new better than used in expectation} (NBUE) property. {  They also develop novel approximation algorithms for settings where the choice model either depends on the product placement or depends on the number of locations visited by the customer.} 
\cite{alidany} consider \placel\ with MNL choice model and non-uniform browsing distribution. They give a polynomial time approximation scheme (PTAS) based on approximate dynamic programming. A distinctive feature of their model is that a product cannot be displayed at multiple locations in the list and no location can be left empty. This makes the optimization problem more challenging. \cite{feldman2021display} consider a model with additional ``desirability" constraints on the products displayed at each level and give a PTAS. \cite{asadpour2023sequential} consider \placel\ with additional constraints and develop a $(1-1/e)$-approximation for the setting where all products have the same price.

Beyond \placel, \cite{cultural} considered the problem of placing art in a gallery and proposed a Pathway-MNL (P-MNL) model that captures the browsing behavior of customers as a Markovian random walk over the various locations. In their model: 
(1) visitors are not necessarily making a choice between substitutes but are instead visiting several artworks before leaving the gallery and the goal is to engineer visitors' experience, (2) each product (artwork) can be displayed at no more than one location, and (3) the visitor's browsing distribution can depend on the placement of artwork. They show that the problem of optimizing the placement of artwork in the gallery is NP hard. They propose various heuristics to solve the problem and conduct numerical experiments to demonstrate the efficacy of these heuristics but do not analyze the performance of these heuristics theoretically. In our model, the goal is to engineer customer choice and we assume that the browsing distribution is independent of the placement decision but is otherwise arbitrary (and not necessarily Markovian) and we consider a general choice model (not just MNL). We also allow a product to be displayed at multiple locations (in line with retail practice) and focus on developing efficient algorithms with theoretical performance guarantees.  

\vspace{2mm}
\noindent
{ \textbf{Placement Dependent Choice Models:} 
The effect of placement on customer choice can also be captured without the notion of consideration sets by using a placement dependent choice model. In such models, the utility received by the customer is function of both the product and its location. 

One common approach is to consider a choice model over an expanded universe that consists of all (product, location) pairs. The revenue optimal placement in this model is the optimal solution to an assortment optimization problem with constraints ensuring that every location has at most one product (\cite{abeliuk2016assortment}). This constraint can be expressed using a partition matroid or a system of totally unimodular (TU) constraints. For MNL choice model, \cite{avadhanula2016tightness} and \cite{sumida2021revenue} give polynomial sized LP formulations to solve the TU constrained assortment optimization problem. 
For the Markov choice model (\cite{blanchet2016markov}), \cite{SO} gave a $\frac{1}{4}$-approximation algorithm for assortment optimization subject to a matroid constraint and \cite{desir} showed that assortment optimization with TU constraints is NP hard to approximate with a factor of $O(n^{1/2-\epsilon})$ for any fixed $\epsilon>0$. 

A related stream of literature considers the assortment/placement optimization problem for a generalization of MNL choice model in which the customer is shown a finite sequence of assortments until they choose a product (\cite{flores2019assortment,feldman2022multinomial, gao2021assortment, najafi2024multiproduct}). The sequential process induces a placement dependent choice model 
(over the expanded universe) 
in which the customer chooses a product placed in the $\ell$-th position (assortment) only if they do not like any of the products placed in the first $\ell-1$ positions (assortments). In particular, \cite{gao2021assortment} propose a formulation that combines the considerations sets from \placel\ problem with the multi-stage MNL choice model.}

\vspace{2mm}
\noindent
{ \textbf{Sequential Search:} Our model also bears some resemblance to discrete choice models based on sequential search~(\cite{seq0}). In sequential search, the customer initially does not know the value of the product at any location and searches for the best alternative by examining the locations one by one. 
	{\color{black} There is a cost for examining each location and the search process induces a placement dependent browsing distribution over display locations because the customer's decision to stop searching may depend on the set of products that they have seen. }
	The setting considered by \cite{derakhshan2022product} is perhaps the most closely related to our work. They consider a product placement problem with sequential search where customers examine the display locations in a fixed order (similar to \placel) to form a consideration set and then choose at most one product from this set according to a Mixture of MNL choice model. 
	They give novel approximation algorithms and polynomial time approximation schemes for various types of objectives. {\color{black} In contrast, we consider a general store layout but assume that the search process is fully captured by the browsing distribution and independent of product placement.}  
	We refer to \cite{wang2018impact, seq1, seq2} for more detailed reviews of the literature on assortment optimization and discrete choice models based on sequential search.}

\vspace{2mm}
\noindent
\textbf{Assortment Optimization:}  Recall that the problem of cardinality constrained assortment optimization is a special case of our setting where customers see all products on display. The placement of products does not influence customer choice but the finite number of display locations induces a cardinality constraint. 
There is a long line of work on assortment optimization for {single-purchase choice models.} \cite{rusmevichientong2010dynamic} gave a polynomial time algorithm for the problem for MNL choice model. For the Nested Logit choice model, \cite{gallego2014constrained} showed that the problem can be solved in polynomial time using linear programming. For the Mixed MNL choice model with a constant number of mixtures, there is a fully polynomial time approximation scheme (FPTAS) (\cite{mittal2013general,desir2022capacitated}). \cite{desir} gave a  $0.5-\epsilon$ approximation for the problem (with algorithm runtime proportional to $\frac{1}{\epsilon}$) in the Markov choice model and also showed that the problem is APX-hard. {Cardinality constrained assortment optimization has also been studied for various consider-then-choose choice models with placement independent consideration sets. We refer to \cite{aouad2019click, consider} for a detailed review of these settings.

 Several recent works consider the assortment optimization problem for multi-purchase choice models (\cite{bai2023assortment, tula, jasin2024assortment, chenmulti, abdallah2024multi}). In particular, both \cite{bai2023assortment} and \cite{abdallah2024multi} propose novel multi-purchase choice models that satisfy the weak-rationality assumption (see Assumption \ref{ration} in Section \ref{sec:prelim}) and give algorithms for cardinality constrained assortment optimization.} 


	%

\vspace{2mm}
\noindent	\textbf{Shelf-Space Allocation:} There is a significant body of work on the problem of allocating shelf-space in physical stores that sell a variety of  (not necessarily substitutable) products~(\cite{flamand2016promoting, flamand2018integrated, flamand2023store, hariga2007joint}). To the best of our knowledge, the model and methodologies considered in this related literature and our paper are complementary. 
The common solution methodology in this literature is to formulate a mathematical program (typically a non-linear mixed integer program) that incorporates a variety of important practical considerations in store wide allocation of shelf-space for different product categories that are not substitutable. 
We refer to \cite{flamand2023store} for a more detailed review of work on shelf-space allocation.

\vspace{2mm}
\noindent
\textbf{ Outline.}	
The primary contribution of this paper is to present an algorithmic framework that offers theoretically proven guarantees for solving the {\sf Placement} optimization problem. The rest of the paper is organized as follows. In Section \ref{sec:prelim}, we introduce the mathematical formulation of our model.  As a warm-up, in Section \ref{sec:homo}, we examine the scenario where products are uniformly priced. We study the theoretical complexity of this problem and propose an algorithm with a theoretical performance that  matches the hardness lower bound. In Section \ref{sec:hetero}, we consider the general case of products with heterogeneous prices and present our main algorithmic results. 

\section{Model Formulation}\label{sec:prelim}

\paragraph{Choice model.} Consider a ground set $N$ of $n$ substitutable products with fixed prices $(r_i)_{i\in N}$. Let $\phi$ denote a choice model on the ground set $N$. {We start with classic single-purchase discrete choice models where the customer chooses at most one product.} The option of not selecting any product is symbolically represented as product $0$, referred to as the no-purchase option. When any given assortment $S \subseteq N$ is offered to a customer, the choice model prescribes a probability of $\phi(i,S)$ for picking product $i \in S\cup \{0\}$ as the one to be purchased.  
Let $R(S)$ denote the expected revenue of an assortment $S\subseteq N$ which is given by
\[R(S)=\sum_{i\in S} r_i \phi(i,S).\]
{ We also consider multi-purchase choice models to capture settings where the customer may choose more than one product. For multi-purchase models, we use $\phi(i,S)$ to denote {\color{black}\emph{the expected number of times that the customer purchases product $i$}.} 
Note that the expected revenue is still given by $R(S)=\sum_{i\in S} r_i\, \phi(i,S)$.}

\paragraph{Set of locations and placement.} Let $G=[m]$ denote a set of $m$ locations. Let $2^G$ denote the set of all subsets of locations. {\color{black} We have to place/display at most one product at each location. }As we discuss later (Remarks \ref{fill} and \ref{multiple}), this constraint 
is without loss of generality (w.l.o.g.) in our formulation.  
We have many copies of each product and copies of a product can be placed/displayed at multiple locations in $G$. 	
Let $X:{G}\to {N }$ denote a placement  of products over $G$.   More precisely, for each \(j \in G\), the value \(X(j)\) represents the product placed at location~\(j\). 
For notational convenience and to facilitate a more succinct representation, given a subset \(L \subseteq G\), we use \(X(L)\) to represent the set of (distinct) products placed at the locations specified by \(L\), i.e., $X(L)= \cup_{j \in L} \{X(j)\}$. configuration.

\paragraph{Browsing distribution.}  A customer randomly browses/visits a subset of locations on $G$ (see Figure \ref{fig:browse} for an example) and then chooses from the products placed at the visited locations according to choice model $\phi$. Customers' browsing behavior induces a distribution over $G$, we call this the browsing distribution $B$. Let $\mathbb{P}_{B}(L)$ denote the probability that the customer visits/browses the locations in $L$ in some order and then chooses from the set $X(L)\cup\{0\}$ of products (and no-purchase option).   
{\color{black} 	Let $j_{\min}$ ($\geq 1$) denote the size of the smallest (non-empty) set with non-zero support in the browsing distribution. Similarly, let $j_{\max}$ ($\leq m$)  denote the size of the largest set with non-zero support in $\mathbb{P}_B$.} Let $\rho=\frac{j_{\max}+1}{j_{\min}}$. This parameter captures the variation in the number of locations visited by different customers. 
	Observe that $\rho\leq m+1$ and $\rho$ is independent of $m$ when $j_{\max}$ is within a constant factor of $j_{\min}$. W.l.o.g., let $\rho=\frac{j_{\max}+1}{j_{\min}}=2^d$ for some integer $d\in \{1,2,\cdots,\ceil{\log_2 (m+1)}\}$. This assumption is without loss of generality because we can simply increase $j_{\max}$ until $j_{\max}=2^d\, j_{\min} -1$, for some $d\geq 1$, by introducing dummy locations that are visited with zero probability.

	The browsing distribution may depend on the layout of the store and the customers' patience level and capacity for browsing. For example, locations in high traffic areas may be visited with a higher probability, and customers who are pickier and more patient may spend a longer time in store and browse more locations before choosing.   

\begin{figure}[h]
\centering
\includegraphics[scale=0.35]{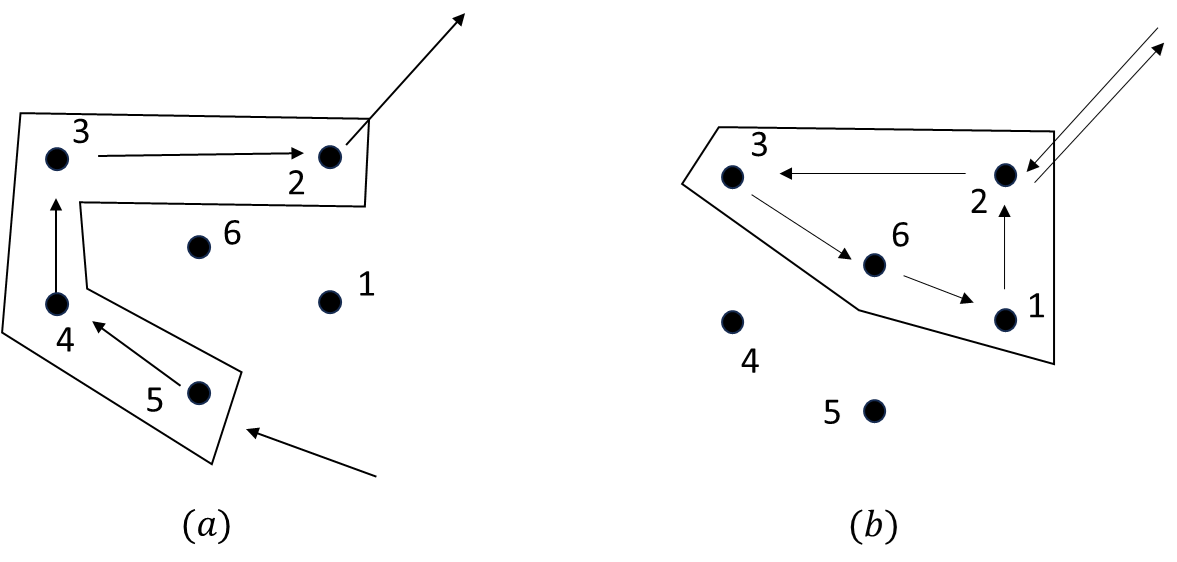}
\caption{Examples of different customer browsing behavior. There are 6 display locations. In $(a)$ the customer looks at products displayed on locations $\{2,3,4,5\}$ and in $(b)$ the customer looks at locations $\{1,2,3,6\}$. The sequence in which customer visits the locations is not important in our model.}
\label{fig:browse}
\end{figure}

\paragraph{Objective.} The expected total revenue of placement $X$ is given by
\[\mathcal{R}(X)=\sum_{L\in 2^{G}} \mathbb{P}_B(L)\,R(X(L)).\]

Let $\mathcal{X}$ denote the set of all placements. Our goal  is to find a placement with the highest expected revenue. We refer to this problem as \place\ and it is given by 
\begin{equation*} \tag{\sf Placement} \label{place}
\qquad\max_{X\in \mathcal{X}}\, \mathcal{R}(X)
\end{equation*}

We make the following two assumptions in our model.

\begin{assumption}\label{ration}\label{weaks}
{Choice model $\phi$ has the following properties:
	\begin{eqnarray*}
		\text{Weak-rationality (or substitutability): }&\quad &\phi(i,S)\geq \phi(i,S\cup\{j\})\quad \forall i\in S, j\not\in S, S\subseteq N.\\
		\text{Monotonicity: }&\quad \sum\limits_{i\in S} & \phi(i,S\cup \{j\})\geq \textstyle \sum\limits_{i\in S}\phi(i,S)\quad \forall j\not\in S, S\subseteq N.
\end{eqnarray*}}
\end{assumption}

{This is a standard assumption in the literature on assortment optimization for single-purchase choice models. The weak-rationality property indicates that the probability of choosing a specific product from a given assortment cannot increase when new products are added to the current assortment. Instead, it can only decrease, as people might potentially opt for the new product as a substitute. The monotonicity property states that in expectation, the number of purchased products increases as we add more products to the assortment. Assumption \ref{ration} is satisfied by all choice models that fall within the RUM framework, which encompasses popular (single-purchase) choice models such as MNL, Markov Chain, Mixture of MNLs, Nested Logit. Assumption \ref{ration} is also satisfied by the multi-purchase choice models proposed by \cite{bai2023assortment} and \cite{abdallah2024multi}. Multi-purchase choice models that capture synergistic effects between complementary products do not satisfy Assumption \ref{ration} (\cite{tula, chenmulti,crosscat}).}
Our second assumption is as follows.

\begin{assumption}\label{decouple}
{
	The browsing distribution $B$ and customer choice model $\phi$ are independent of the placement $X$.} 
\end{assumption}

{
Apart from the independence assumption above, we make no other structural assumptions on the probability distribution $B$, i.e., we allow for all possible customer browsing behavior over the set of $G$. Assumption \ref{decouple} is also ubiquitous in the literature on special cases of \place\ and other closely related problems discussed in Section \ref{sec:previous}~(\cite{davis2015assortment, gallego2020approximation, alidany, asadpour2023sequential}). At the same time, there is  also  literature on models where the browsing distribution or the choice model is placement dependent (see Section \ref{sec:previous}). 	

Overall, we believe that our model is fairly reasonable in settings such as fashion where there are many unfamiliar products and customers may spend some time looking around to shortlist (and try on) a few products before deciding which products to purchase. In a setting such as a grocery store, customers may not be as likely to consider-then-choose and may be more inclined to purchase the first suitable product they come across. In an effort to also capture such settings, in Appendix \ref{appx:decouple} we partially relax Assumption \ref{decouple} and consider a placement dependent MNL model where customers are more likely to purchase products that 
they saw earlier on their path. We show that our main algorithm and its performance guarantee generalize to this setting.}

In order to build our optimization framework for \place, we assume access to two oracles. The first oracle computes an optimal or approximate solution for the cardinality {constrained} assortment optimization problem under the choice model $\phi$. The second oracle outputs any number of independent samples from the browsing distribution. In the following sections, we formally present the two oracles.
{\color{black}
\begin{remark}[Product Repetition] Placing a product in multiple locations is common in both physical and digital retail environments.  In grocery stores, for instance, a product might appear on end-cap displays, checkout lanes, or promotional aisles. Big-box retailers often use promotional zones or pallet displays to emphasize high-volume or discounted items. Apparel stores highlight featured styles using mannequins and dedicated displays. Similarly, e-commerce platforms showcase repeated products in “best-selling” or “featured” sections, often at the top of the page. As illustrated in Figure \ref{fig:one}, a single product search may return multiple listings for the same base item, differentiated by size, color, or configuration.
\end{remark}}
\begin{remark}[Filling empty locations with highest price product] \label{fill} {\color{black} Since we are allowed to repeat a product at multiple locations, we can ensure that all locations are filled without losing revenue. }
    Suppose we are given a 
    placement solution where some locations are left empty. We describe a simple process to transform a placement with empty locations into a feasible assortment without decreasing the expected revenue of the placement. Let $i^*$ denote the product with the highest price, i.e., 
\[i^*=\argmax_{i\in N} r_i.\]
Using the weak-rationality {and monotonicity of $\phi$ (Assumption \ref{ration})}, we have the following well known inequality, 
\[R(S\cup\{i^*\})\geq R(S) \quad \forall S\subseteq N\backslash\{i^*\}.\]
Let $Y$ denote a placement where some locations are empty. Let $\hat{Y}$ denote a placement that is identical to $Y$ except that $i^*$ is placed in every location that is empty in $Y$. We have,
\[\mathcal{R}(\hat{Y})=\sum_{L\in 2^G} \mathbb{P}_B(L)R(\hat{Y}(L))\geq \sum_{L\in 2^G} \mathbb{P}_B(L)R(Y(L))=\mathcal{R}(Y).\]
From the above, the optimal solution to \place\ has the same revenue as the optimal solution to \place\ with the additional constraint that all locations must be filled. 
Thus,  if $Y$ is a $\beta$-approximate solution to \place\ 
then $\hat{Y}$ is a $\beta$-approximate solution to the constrained problem as well. 
\end{remark}
\begin{remark}[Multiple products in a location] \label{multiple} In practice, there may be settings where more than one product can be  simultaneously displayed at a given display location. This setting is referred to as \emph{product framing} in related work (\cite{gallego2020approximation}). We can capture this in our model by using a set of locations to represent each real/physical location and defining a browsing distribution such that a customer either visits the whole set of locations (that represent the real display location) or none of the locations in the set.
	\end{remark}


\begin{remark}[Placement on a Line.]\label{pline} In some settings, the browsing patterns are highly structured and the browsing distribution may be represented explicitly using a small number of parameters. For example, in an online recommendation platform or a search engine, products are often placed vertically on a web page and customers visit a subset of consecutive locations starting from the topmost location~(\cite{alidany}). In this setting the browsing distribution is defined using $m$ parameters that represent the probability of customer visiting locations $1$ (topmost location) through $j$  ($\leq m$) (See Figure \ref{fig:line}). This setting will be useful for motivating many of our ideas in the subsequent discussion and we refer to it as \placel.


\begin{figure}[h]
	\centering
	\includegraphics[scale=0.35]{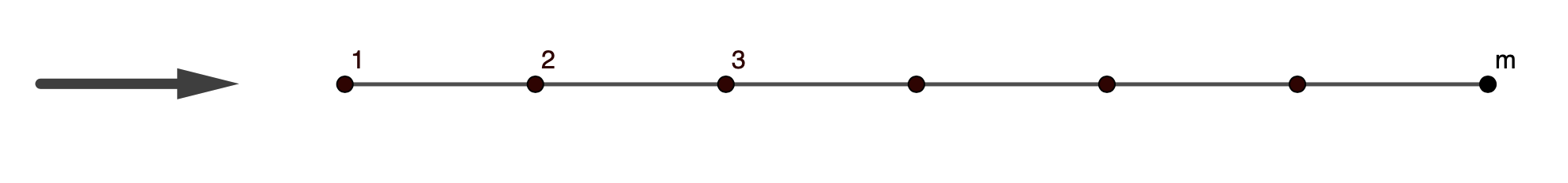}
	\caption{Example of \placel}
	\label{fig:line}
\end{figure}
  {\color{black}  For \placel, there is no advantage in repeating a product:  If a product is at location $j$, then a copy of the product at location $j+k$ does not contribute additional revenue because a customer who visits location $j+k$ will also visit location $j$. So if a product is repeated at $j$ and $j+k$ then we can simply leave $j+k$ empty without changing revenue. In Section \ref{subsec:general}, we show that repeating products makes a big difference to the revenue for the general \place\ problem.}
\end{remark}


\subsection{Oracle for Cardinality Constrained Assortment Optimization} \label{sec:oracle}

Consider the following browsing distribution,
\begin{equation*}
\mathbb{P}_B(L)=
\begin{cases}
	1 & \quad L=G,\\
	0& \quad \forall L\subset G. 
\end{cases}
\end{equation*}
In this case, the expected revenue of a placement $X$ reduces to the expected revenue of the assortment of all products placed over $G$, i.e., $\mathcal{R}(X)=R(X(G))$. Thus, \place\ is a (strict) generalization of cardinality constrained assortment optimization. Recall that our goal is to give a general algorithm for the placement problem that uses existing algorithms for cardinality constrained assortment optimization as a subroutine to find a provably good placement of products. We assume access to an oracle that outputs an $\alpha$-approximate solution to the cardinality constrained assortment optimization problem, i.e., given a cardinality parameter $k$, the oracle gives a set $S^*_k$ such that
\begin{equation} \label{eq:approx}
R(S^*_k)\geq \alpha \,\, \max_{|S|\leq k}R(S).
\end{equation}

Our algorithms will use this oracle as a black box and the approximation guarantee of our algorithm will be proportional to $\alpha$. { As we discussed in Section \ref{sec:previous}, there is an extensive literature cardinality constrained assortment optimization under various single-purchase choice models. 
 \cite{bai2023assortment} introduce a multi-purchase choice model that satisfies Assumption \ref{ration} and give a PTAS for cardinality constrained assortment optimization. \cite{abdallah2024multi} introduce a multi-purchase model based on the RUM framework and give a near-optimal algorithm for cardinality constrained assortment optimization in an asymptotic regime (large number of products).}


Without loss of generality (w.l.o.g.), let 
\[|S^*_k|=k\quad \forall k\in [m].\]
To see this, observe that we can consider an expanded ground set that includes several dummy products with purchase probability 0 in every assortment.  
Adding these dummy products to an assortment does not change its revenue. On the expanded ground set we have, $|S^*_k|=k\quad \forall k\in [m].$
Given a placement solution over the expanded ground set, we can obtain a placement solution over the original ground set $N$ by replacing the dummy products with copies of product $i^*$ (the product with the highest price in $N$). Recall that adding $i^*$ does not decrease the expected revenue of any placement solution.






\subsection{Sampling Oracle for Browsing Distribution}\label{sec:sample}
For the sake of simplicity, 
we state and analyze the approximation guarantee of our algorithms assuming that we have \emph{exact knowledge of the expected revenue of any placement}. In practice, the expectation may be hard to compute as the support of the browsing distribution can be as large as $2^m$, where $m$ is the number of locations. 
Our algorithms and their approximation guarantees hold even when we only have access to an oracle that outputs independent samples from the browsing distribution. 
 We can use estimates of the expected revenue obtained by generating a small number (quantified later) of independent random samples of visited locations using a sampling oracle for the browsing distribution.





Using the sampling oracle, we can obtain a good approximation of $\mathcal{R}(X)$ for any given placement $X$ using the Monte-Carlo method. Let \opt\ denote the expected total revenue of the optimal solution.

\begin{lemma}\label{est}
Given parameters $(\epsilon,\delta)\in[0,1]^2$ and a placement $X$, the sample average revenue of $\frac{m^2\log \delta^{-1}}{2\epsilon^2}$ independent samples from the sampling oracle gives an estimate $\hat{\mathcal{R}}(X)$ of the expected total revenue such that with probability $1-2\delta$,
\[\hat{\mathcal{R}}(X)\in [\mathcal{R}(X)-\epsilon\,\opt, \mathcal{R}(X)+\epsilon\,\opt].\] 
\end{lemma}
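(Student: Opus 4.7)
The plan is a standard Monte Carlo concentration argument via Hoeffding's inequality. I would first observe that $W(X) = \mathbb{E}_{L \sim B}[R(X(L))]$, so the sample mean $\hat{W}(X) = \frac{1}{T}\sum_{t=1}^{T} R(X(L_t))$ of $T$ iid draws $L_t \sim B$ is an unbiased estimator of $W(X)$. To apply Hoeffding I need an almost-sure upper bound on $R(X(L_t))$ phrased in terms of $\opt$; the crux of the proof is to show that $R(X(L)) \leq m \cdot \opt$ for every $L$ in the support of $B$.

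I would establish this range bound in three sub-steps. Since $|X(L)| \leq |L| \leq m$, the definition of $S^*_m$ as the cardinality-$m$ revenue optimum immediately gives $R(X(L)) \leq R(S^*_m)$. Next, weak rationality (Assumption~\ref{ration}) yields $\phi(i,S^*_m) \leq \phi(i,\{i\})$ for every $i \in S^*_m$, and chaining this term-by-term gives
\[R(S^*_m) = \sum_{i \in S^*_m} r_i\, \phi(i, S^*_m) \;\leq\; \sum_{i \in S^*_m} r_i\, \phi(i, \{i\}) \;=\; \sum_{i \in S^*_m} R(\{i\}) \;\leq\; m \cdot \max_i R(\{i\}).\]
Finally, placing $i_0 = \argmax_i R(\{i\})$ at every location of $G$ yields a feasible placement whose expected revenue equals $\mathbb{P}_B(L \neq \emptyset)\cdot R(\{i_0\})$; taking $\mathbb{P}_B(L \neq \emptyset) = 1$ without loss of generality (empty browses contribute $0$ to both $W$ and to $\opt$ for every placement), we get $\opt \geq \max_i R(\{i\})$. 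Chaining the three inequalities produces the key bound $R(X(L)) \leq m \cdot \opt$ deterministically.

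With the range $R(X(L_t)) \in [0, m\cdot\opt]$ in hand, Hoeffding's inequality gives
\[\mathbb{P}\bigl(|\hat{W}(X) - W(X)| \geq \epsilon\, \opt\bigr) \;\leq\; 2\exp\left(-\frac{2T (\epsilon\, \opt)^2}{(m\, \opt)^2}\right) \;=\; 2\exp\left(-\frac{2T \epsilon^2}{m^2}\right),\]
and choosing $T = \frac{m^2 \log \delta^{-1}}{2\epsilon^2}$ makes the right-hand side at most $2\delta$, exactly as claimed. The main obstacle is the middle step $R(S^*_m) \leq m \cdot \max_i R(\{i\})$: this is where weak rationality is essential (it is what lets us replace $\phi(i,S^*_m)$ by the larger $\phi(i,\{i\})$), and this single term-by-term comparison is also where the factor of $m$ in the sample complexity originates; once the range is controlled in terms of $\opt$, the rest of the argument is a direct application of Hoeffding.
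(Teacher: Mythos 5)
Your proof is correct and follows essentially the same route as the paper: a Hoeffding/Monte-Carlo bound on the sample mean, with the per-sample range controlled by $R^* \le m\,\opt$ where $R^*=\max_{|S|\le m}R(S)$. The only substantive difference is that you actually derive the inequality $R^*\le m\,\opt$ (via weak rationality and the single-product placement, modulo the harmless assumption that customers visit at least one location), whereas the paper invokes it as a known fact; note only that the paper reserves the notation $S^*_m$ for the oracle's $\alpha$-approximate assortment, so your argument should be phrased in terms of the exact cardinality-$m$ optimum $R^*$.
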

\begin{corollary} \label{coro}
Given a set of placements $\{X_1,X_2,\cdots, X_t\}$, using $\frac{m^2\log \delta^{-1} }{2\epsilon^2}  \cdot \log t$ samples from the sampling oracle we can obtain revenue estimates $\{\hat{\mathcal{R}}(X_1),\cdots, \hat{\mathcal{R}}(X_t)\}$ such that,
\[|\hat{\mathcal{R}}(X_k)-\mathcal{R}(X_k)|\leq \epsilon\,\opt\quad \forall k\in[t],\]
with probability at least $1-2\delta$.
\end{corollary}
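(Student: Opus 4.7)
The plan is to reuse the Hoeffding-based analysis of Lemma~\ref{est} but apply a union bound across the $t$ placements, rather than treating each placement separately with its own independent sample set. First I would draw a \emph{single} batch of $T$ i.i.d.\ samples $L_1,\ldots,L_T$ from the browsing oracle and, for every placement $X_k$, form the empirical estimate
\[\hat{W}(X_k) \;=\; \frac{1}{T}\sum_{j=1}^T R(X_k(L_j)).\]
All $t$ estimates then share the same underlying draws, which is what makes the sample complexity scale only logarithmically in $t$.

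Next, I would fix an index $k\in[t]$ and repeat the calculation in the proof of Lemma~\ref{est} verbatim: the random variables $R(X_k(L_j))/R^*$ lie in $[0,1]$ and are i.i.d., so Hoeffding's bound (Lemma~\ref{chernoff}) gives
\[\mathbb{P}\bigl(|\hat{W}(X_k)-W(X_k)| \;>\; \eta R^*\bigr) \;\leq\; 2e^{-2T\eta^2},\]
and using $R^*\leq m\,\opt$ with $\eta=\epsilon/m$ yields a per-placement tail bound of $2e^{-2T\epsilon^2/m^2}$. A union bound over $k\in[t]$ then bounds the probability that \emph{any} of the estimates deviates by more than $\epsilon\,\opt$ by $2t\,e^{-2T\epsilon^2/m^2}$. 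Requiring this to be at most $2\delta$ and solving for $T$ yields the stated sample size (after grouping $\log t$ with $\log\delta^{-1}$).

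I do not expect a genuine obstacle here; the only point that needs a sentence of justification is that reusing one sample set for all $t$ placements does not invalidate the Hoeffding step. This is immediate because, for each fixed $k$, the variables $R(X_k(L_j))$ are still i.i.d.\ across $j$ (the browsing samples are drawn independently of any choice of $k$), so the concentration inequality applies to each $k$ separately; the correlation between different placements is absorbed entirely by the union bound at the additive cost of a $\log t$ factor in sample complexity.
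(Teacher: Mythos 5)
Your proof takes exactly the route the paper does --- Lemma~\ref{est} applied with failure probability $\delta/t$ per placement, followed by a union bound --- and the reasoning, including the observation that a single shared sample batch is fine because Hoeffding is applied to each fixed $k$ separately, is correct. One caveat: your derivation yields $T=\frac{m^2\log(t\delta^{-1})}{2\epsilon^2}$, with $\log t$ entering \emph{additively in the numerator}; this does not equal the corollary's stated count $\frac{m^2\log\delta^{-1}}{2\epsilon^2\,\log t}$, which has $\log t$ in the \emph{denominator} and appears to be a typo in the paper, since no union-bound argument can require fewer samples than Lemma~\ref{est} already needs for a single placement. Your parenthetical ``after grouping $\log t$ with $\log\delta^{-1}$'' glosses over this mismatch rather than flagging it; otherwise the argument is complete.
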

We give the proof of Lemma \ref{est} in Appendix \ref{appx:linehard}. The corollary follows by using Lemma \ref{est} and applying the union bound. The next lemma demonstrates the usefulness of measuring estimation error as a factor of \opt. The proof of this lemma is also included in Appendix \ref{appx:linehard}.
\begin{lemma}\label{err}
Consider a set of placements $\{X_1,X_2,\cdots, X_t\}$ with estimates $|\hat{\mathcal{R}}(X_k)-\mathcal{R}(X_k)|\leq \epsilon\,\opt\,\, \forall k\in[t]$. Let $\hat{X}^*$ denote the best placement according to $\hat{\mathcal{R}}(\cdot)$, i.e., $\hat{X}^*=\argmax_{k\in[t]}\hat{\mathcal{R}}(X_k)$. Similarly, let $X^*=\argmax_{k\in [t]} \mathcal{R}(X_k)$. Suppose that $\mathcal{R}(X^*)\geq \beta \opt$. Then, we have
\[\mathcal{R}(\hat{X}^*)\geq \mathcal{R}(X^*) -2\epsilon\, \opt\geq  (\beta-2\epsilon) \opt. \]
\end{lemma}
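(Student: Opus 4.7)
The plan is to prove this by a direct chain of inequalities exploiting the definition of $\hat{X}^*$ and the two-sided estimation guarantee assumed on each placement. There is no real technical obstacle here; the argument is the standard "empirical maximizer vs. true maximizer" comparison used throughout statistical learning, and the only subtle point is keeping the signs of the additive error straight.

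First, I would invoke the defining property of $\hat{X}^*$ as the argmax of $\hat{W}$ over the finite list $\{X_1,\dots,X_t\}$, which gives $\hat{W}(\hat{X}^*)\ge \hat{W}(X^*)$. Next I would apply the estimation bound $|\hat{W}(X_k)-W(X_k)|\le \epsilon\,\opt$ in both directions: at $\hat{X}^*$ in the form $\hat{W}(\hat{X}^*)\le W(\hat{X}^*)+\epsilon\,\opt$, and at $X^*$ in the form $\hat{W}(X^*)\ge W(X^*)-\epsilon\,\opt$. Chaining these three inequalities yields
\[
W(\hat{X}^*)+\epsilon\,\opt\;\ge\;\hat{W}(\hat{X}^*)\;\ge\;\hat{W}(X^*)\;\ge\;W(X^*)-\epsilon\,\opt,
\]
and rearranging produces $W(\hat{X}^*)\ge W(X^*)-2\epsilon\,\opt$, which is the first claimed inequality.

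Finally, I would plug in the hypothesis $W(X^*)\ge \beta\,\opt$ to conclude $W(\hat{X}^*)\ge (\beta-2\epsilon)\opt$, giving the second inequality. The proof is two or three lines and requires no additional machinery beyond the optimality of $\hat{X}^*$ in the empirical objective and the additive $\epsilon\,\opt$ concentration bound supplied by the preceding corollary; the main conceptual point to highlight is simply that the two-sided additive errors compound to a factor of two, which is why the loss in the approximation ratio is $2\epsilon$ rather than $\epsilon$.
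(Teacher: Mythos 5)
Your proof is correct and is exactly the paper's argument: the same three-inequality chain $W(\hat{X}^*)+\epsilon\,\opt\geq \hat{W}(\hat{X}^*)\geq \hat{W}(X^*)\geq W(X^*)-\epsilon\,\opt$, using the two-sided estimation bound at each endpoint and the optimality of $\hat{X}^*$ for $\hat{W}(\cdot)$ in the middle. No differences worth noting.
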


\begin{remark}
    [Usage of Exact Revenue Oracle]
    Recall that the approximation guarantee of our algorithms is stated under the assumption that we have \emph{exact knowledge of the revenue of any placement}. Relaxing this assumption leads to a small loss in the approximation guarantee due to estimation errors. In particular, the final step in our main algorithm is to find the placement that maximizes $\hat{\mathcal{R}}(\cdot)$ from a small set of candidate placement solutions.

    From Lemma \ref{err}, given an approximation guarantee of $\beta$, our approximation guarantee in the absence of an exact revenue oracle is simply $\beta-2\epsilon$. In special cases such as when the product prices are identical (Section \ref{sec:homo}) or when the choice model is Markovian (Section \ref{subsec:deran}), we propose greedy algorithms that construct a placement solution over several iterations. 
As we discuss later, the error due to estimates of revenue is sufficiently small even in these cases.
\end{remark}	 

\begin{remark}[Deterministic vs. Randomized Algorithms]
    Given an exact revenue oracle, we say that an algorithm is \emph{deterministic} if it does not use any randomness. In contrast, a randomized algorithm uses additional randomness and may output different solutions each time the algorithm is executed.
\end{remark}  
{\color{black}
\begin{remark}[Constructing the Sampling Oracle from Data]
In practice, one may use customer traffic data collected via RFID \citep{larson2005exploratory} and other technologies \citep{heatmap} in physical stores, and browsing data in online stores, to construct a sampling oracle for the browsing distribution. We refer to \cite{larson2005exploratory} and \cite{heatmap} for more details on how these technologies are used in physical stores.  \cite{flamand2023store} propose a different approach based on learning a parametric model of customers' browsing behavior using sales transaction data. The idea of learning something about the customer's path in the store from their set of purchased products is based on the assumption that customers shop in a way that minimizes their walking distance \citep{hui2009research}. Under this assumption, given the store layout, the product placement, and set of products purchased by a customer, one can use optimization techniques to infer the customer's path through the store. \cite{flamand2023store} use this technique to learn the parameters of a model that predicts the probabilities of a customer visiting a product in a given location. For a fixed product placement our model reduces to a consider-then-choose choice model and one can estimate the revenue by using techniques for estimating consider-then-choose models \citep{jagabathula2024demand}. 
\end{remark}
}

\section{Uniformly Priced Products}\label{sec:homo}

In this section, {\color{black} we consider the special case of \place\ where products have the same (uniform) price,} i.e., $r_i = 1\; \forall i \in N$. This captures the case where the objective function is to maximize the sales of products rather than the expected revenue. { This section serves as a warm-up before we present our main technical results for general prices in the subsequent section.
 
In this section, we focus on single-purchase choice models that belong to the general framework of Random Utility Maximization (RUM)} and 
we tackle the \place\ problem with uniformly priced products. In particular,   we present a $(1-1/e)$ approximation algorithm for the problem in Section \ref{sec:unif}. In Section \ref{sec:hard}, we show that this approximation is tight; that is, the \place\ problem with uniformly priced products cannot be approximated within a factor better than $(1- \frac{1}{e}+\epsilon)$ for any $\epsilon >0$, unless P=NP. The hardness result follows from the hardness of cardinality constrained assortment optimization under a (single-purchase) Mixture of MNL (MMNL) model when products have uniform prices.

\subsection{Placement of Uniformly Priced Products}\label{sec:unif}

For uniformly priced products, we demonstrate that the \place\ problem can be transformed into an instance of constrained maximization of a monotone submodular function. The foundation of our reduction is based on the established lemma below.

\begin{lemma} [\cite{berbeglia2020assortment}]
\label{submod}
For any RUM-based choice model, the expected revenue function $R(\cdot)$ is both monotone and submodular when all products are priced identically.
\end{lemma}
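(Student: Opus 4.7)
The plan is to leverage the uniform-price simplification $R(S) = \sum_{i \in S} \phi(i,S) = 1 - \phi(0,S)$, which reduces the claim to showing that the purchase probability $S \mapsto 1 - \phi(0,S)$ is monotone and submodular. I would then unfold the RUM representation: associate with each customer realization a random utility vector $(U_i)_{i \in N \cup \{0\}}$ so that the customer picks the product maximizing $U_i$ over $S \cup \{0\}$. Writing $M_S := \max_{i \in S} U_i$, this gives $R(S) = \mathbb{P}(M_S > U_0)$ under any fixed tiebreaking convention, and both target properties should drop out of event-level comparisons in the underlying probability space rather than algebraic manipulation of the choice probabilities.

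Monotonicity is immediate: if $S \subseteq T$, then $M_S \leq M_T$ pointwise, so $\{M_S > U_0\} \subseteq \{M_T > U_0\}$ and $R(S) \leq R(T)$. For submodularity, I would compute the marginal gain of adding $j \notin S$ by splitting the event $\{\max(M_S, U_j) > U_0\}$ according to whether $M_S > U_0$ already holds:
\[R(S \cup \{j\}) - R(S) \;=\; \mathbb{P}\bigl(M_S \leq U_0,\ U_j > U_0\bigr).\]
Now fix $S \subseteq T$ with $j \notin T$. The inequality $M_T \geq M_S$ gives $\{M_T \leq U_0\} \subseteq \{M_S \leq U_0\}$, and intersecting both sides with $\{U_j > U_0\}$ preserves the inclusion. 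Taking probabilities yields $R(T \cup \{j\}) - R(T) \leq R(S \cup \{j\}) - R(S)$, which is exactly the diminishing-returns inequality.

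The main subtlety I anticipate is making the RUM representation work uniformly, in particular for rank-based choice models whose utility distributions have atoms so that ties may occur with positive probability. The cleanest remedy is to fix a tiebreaking order on $N \cup \{0\}$ in advance (or, equivalently, to perturb utilities by independent continuous noise of vanishing variance). Once this is in place, the pointwise ordering $M_S \leq M_T$ and the event inclusions used above remain valid, and no quantitative calculation is required beyond the two short probabilistic comparisons; the whole proof is a coupling argument in the implicit random-utility space.
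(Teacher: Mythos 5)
The paper does not actually prove this lemma; it is imported verbatim from \cite{berbeglia2020assortment} (their Lemma 4.2), so there is no in-paper proof to compare against. Your argument is correct and is essentially the standard proof of that result: with uniform prices, $R(S)=1-\phi(0,S)=\mathbb{P}\bigl(\bigcup_{i\in S}A_i\bigr)$ where $A_i$ is the event that product $i$ beats the outside option in the underlying utility realization, so $R$ is a probabilistic coverage function, and your two event inclusions are precisely the monotonicity and diminishing-returns properties of such functions. The tie-breaking caveat you flag is handled adequately by fixing a tie-breaking order up front (this is how one passes between the utility-based and ranking-based descriptions of RUM), after which the pointwise inclusions $\{M_T\leq U_0\}\subseteq\{M_S\leq U_0\}$ and the identity $R(S\cup\{j\})-R(S)=\mathbb{P}(M_S\leq U_0,\ U_j>U_0)$ hold verbatim.
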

Lemma \ref{submod} is a well-established result observed across various RUM-based choice models. As an example, \cite{desir} demonstrates this result for the Markov Chain choice model. \cite{berbeglia2020assortment} provides a proof of this lemma for general RUM-based choice models (see Lemma 4.2 in their paper). Note that there are choice models that are not RUM-based but adhere to Assumption \ref{ration}, in which the revenue function is not submodular, and therefore do not verify the above lemma. \cite{berbeglia2020assortment} presents such an example in their paper. Hence, we restrict ourselves to RUM-based choice models in this section.

Building on Lemma \ref{submod}, we establish that the \place\ problem with identical prices, can be formulated as a problem of maximizing a monotone submodular function under a matroid constraint\footnote{A matroid on ground set $U$ is given by a family of subsets of $U$, denoted $\mathcal{F}$. Every set in $\mathcal{F}$ is called an independent set of the matroid. $\mathcal{F}$ satisfies the following properties: (i) The empty set, $\phi$, is independent. (ii) If $A\in \mathcal{F}$ then $B\in\mathcal{F}$ for all $B\subseteq A$. (iii) If $A,B\in\mathcal{F}$ and $|B|<|A|$, then $ B\cup\{x\}\in \mathcal{F}$ for some $x\in A\backslash B$.}. There are several algorithms for maximizing a monotone submodular function subject to a matroid constraint. \cite{nemhauser1978analysis}  showed that a greedy algorithm that iteratively builds the solution by adding elements one by one 
is $0.5$ approximate. \cite{calinescu2011maximizing} give $(1-1/e)$-approximation algorithms for the problem which is the best known approximation. We refer to \cite{badanidiyuru2014fast} for a review of other algorithms for the problem. The next theorem shows how to obtain $(1-1/e)$-approximation algorithm  for the \place\ problem with identical prices.

\begin{theorem}\label{uniform}
There is a $(1-1/e)$-approximation algorithm for the \place\ problem  with uniformly priced products and a RUM-based choice model.
\end{theorem}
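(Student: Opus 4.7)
The plan is to reduce the uniformly-priced \place\ problem to an instance of monotone submodular maximization subject to a partition matroid constraint and then invoke the $(1-1/e)$-approximation of \cite{calinescu2011maximizing}. I would take the ground set $U = G \times N$, where the element $(j,i)$ represents placing a copy of product $i$ at location $j$, and impose the partition matroid $\mathcal{M}$ whose blocks are $U_j = \{(j,i) : i \in N\}$, one per location, with independence meaning that at most one element is chosen per block. Independent sets of $\mathcal{M}$ then correspond bijectively to partial placements, and bases of $\mathcal{M}$ correspond to feasible full placements $X \in \mathcal{X}$.

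Next I would extend the objective $W$ to every subset $A \subseteq U$ via
\[ f(A) \;=\; \sum_{L \in 2^G} \mathbb{P}_B(L)\, R\bigl(Y_A(L)\bigr), \qquad Y_A(L) \;=\; \bigcup_{j \in L} \{i \in N : (j,i) \in A\}, \]
so that $f(A) = W(X_A)$ whenever $A$ is independent in $\mathcal{M}$ (with $X_A$ the induced partial placement). The key technical step is verifying that $f$ is monotone and submodular on $2^U$. Monotonicity is immediate, since adding $(j,i)$ can only enlarge $Y_A(L)$ for sets $L \ni j$ and $R$ is monotone by Lemma \ref{submod}. Submodularity I would check by decomposing the marginal gain of $f$ term-by-term over $L$: for each $L \ni j$ the marginal gain of adding $(j,i)$ equals $R(Y_A(L) \cup \{i\}) - R(Y_A(L))$, which is non-increasing as $A$ grows by submodularity of $R$ (Lemma \ref{submod}); the corner cases where $i$ already lies in $Y_A(L)$, or only newly appears in $Y_B(L)$ for some larger $B \supseteq A$, are handled by monotonicity of $R$.

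Because $f$ is monotone, the maximum over independent sets of $\mathcal{M}$ is attained at a basis, and by the earlier fill-with-$i^*$ argument this basis optimum equals $\max_{X \in \mathcal{X}} W(X) = \opt$. Applying the continuous greedy plus pipage rounding procedure of \cite{calinescu2011maximizing} then yields an independent set $A^*$, from which we read off a placement $X_{A^*}$ with $W(X_{A^*}) = f(A^*) \geq (1 - 1/e)\, \opt$. The main obstacle is choosing the extension of $W$ to subsets of $U$ carefully enough that submodularity of $R$ (on subsets of $N$) actually transfers to submodularity of $f$ (on subsets of $G \times N$, where the same location may be tagged with multiple candidate products); using the union $Y_A(L)$ across locations is precisely what makes this transfer go through, and once the reduction is in place the approximation guarantee is an immediate invocation of off-the-shelf matroid-constrained submodular maximization.
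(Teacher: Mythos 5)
Your proposal is correct and matches the paper's own proof essentially step for step: the same expanded ground set $N\times G$, the same partition matroid with one block per location, the same extension of $W$ via unions $\cup_{j\in L}U(j)$, the same term-by-term verification of monotonicity and submodularity from Lemma \ref{submod}, and the same invocation of \cite{calinescu2011maximizing}. Your explicit handling of the corner case where a product already appears in $Y_A(L)$ is a minor point the paper glosses over, but the argument is the same.
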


{We include the proof of Theorem \ref{uniform} in Appendix \ref{appx:homo}.} Note that \cite{asadpour2023sequential} show a $(1-1/e)$ guarantee for a similar problem setting by generalizing ideas from submodular optimization. The browsing distribution in their setting is the same as \placel\ but they consider additional constraints arising out of fairness considerations. Our result is for a general browsing distribution without the latter constraints.

An extension of Theorem \ref{uniform}, gives us the following approximation guarantee for the general setting of heterogeneous prices. Let $r_{\max}=\max_{i\in N} r_i$ and $r_{\min}=\min_{i\in N} r_i$. W.l.o.g., $r_{\min}>0$ as all products with price 0 can be discarded from the ground set. 

\begin{corollary} \label{coro:one}
There is a $\frac{O(1)}{\log \frac{r_{\max}}{r_{\min}}}$ approximation algorithm for \place\ with arbitrary prices under a RUM-based choice model.
\end{corollary}

{The proof of this corollary is also included in Appendix \ref{appx:homo}.}

\subsection{Hardness of Placement}\label{sec:hard}

Recall that the cardinality constrained assortment optimization problem is a special case of \place\ (Section \ref{sec:oracle}). In this subsection, we argue that, unless P=NP, a $(1-1/e)$ approximation is the best achievable guarantee in polynomial time for cardinality constrained assortment optimization under a Mixture of Multinomial Logit choice model  (MMNL)  and uniformly priced products. This proves that the $(1-1/e)$-approximation guarantee for the \place\ problem with uniformly priced products (derived in the previous section) is best possible. 


\begin{theorem} \label{thm:hardness}
Unless P=NP, there is no polynomial-time algorithm that approximates the problem of cardinality constrained assortment optimization better than than $(1-\frac{1}{e}+ \epsilon)$ for any $\epsilon>0$  when the choice model is given by a MMNL choice model and all the products have the same price. 
\end{theorem}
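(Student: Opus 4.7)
The plan is to reduce from the Maximum $k$-Coverage problem (Max-$k$-Cover), which Feige (1998) proved cannot be approximated within any factor strictly larger than $1-1/e$ in polynomial time unless P=NP. Since MMNL belongs to the RUM class (and hence satisfies Assumption~\ref{ration}), any NP-hardness established for cardinality constrained assortment optimization under MMNL with uniform prices yields the theorem.

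Given a Max-$k$-Cover instance with universe $\{u_1,\ldots,u_p\}$ and sets $S_1,\ldots,S_n$, I would construct the following MMNL instance. Create one product per set and one customer segment per element, with mixing probability $\lambda_\ell = 1/p$ for segment $\ell$. Within segment $\ell$, set the MNL preference weight $v_j^{(\ell)} = M$ whenever $u_\ell \in S_j$ and $v_j^{(\ell)} = 0$ otherwise, where $M$ is a large constant chosen as a function of $\epsilon$ alone. Set $r_j = 1$ for every product and impose cardinality constraint $k$. For any assortment $T$ with $|T|\leq k$, the expected revenue becomes
\[
R(T) \;=\; \frac{1}{p}\sum_{\ell=1}^{p} \frac{c_\ell(T)\,M}{1+c_\ell(T)\,M},
\]
where $c_\ell(T)$ counts the products in $T$ that cover element $u_\ell$. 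Each summand equals $0$ when $c_\ell(T)=0$ and lies in $[1-1/M,\,1]$ otherwise, so
\[
\frac{1-1/M}{p}\,|\mathrm{Cover}(T)| \;\leq\; R(T) \;\leq\; \frac{1}{p}\,|\mathrm{Cover}(T)|.
\]

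Let $C^*$ denote the optimal $k$-cover size. The sandwich above implies that the optimal MMNL revenue is at least $(1-1/M)\,C^*/p$, so a polynomial-time $(1-1/e+\epsilon)$-approximation $T$ for the assortment problem must satisfy
\[
|\mathrm{Cover}(T)| \;\geq\; p\cdot R(T) \;\geq\; (1-1/e+\epsilon)(1-1/M)\,C^*.
\]
Choosing $M$ sufficiently large (depending only on $\epsilon$, e.g., $M \geq 4/\epsilon$) makes the right-hand side exceed $(1-1/e+\epsilon/2)\,C^*$, producing a polynomial-time $(1-1/e+\epsilon/2)$-approximation for Max-$k$-Cover and contradicting Feige's theorem. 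The main subtlety is in propagating the multiplicative loss $(1-1/M)$ through the approximation ratio cleanly so that a strictly positive gap above $1-1/e$ is preserved; verifying that the MNL probabilities match those prescribed by the preference weights, that the construction is polynomial in $n$ and $p$, and that the cardinality constraint transfers directly, is routine bookkeeping.
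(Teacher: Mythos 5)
Your proposal is correct and follows essentially the same route as the paper: a reduction from Maximum Coverage (via Feige's $1-1/e$ inapproximability) to cardinality constrained assortment optimization under MMNL, with one product per set, one uniformly weighted customer segment per element, a large preference weight $M$ for covering products and $0$ otherwise, and the same sandwich bounds $\frac{1-1/M}{p}\lvert\mathrm{Cover}(T)\rvert \leq R(T) \leq \frac{1}{p}\lvert\mathrm{Cover}(T)\rvert$ to propagate the approximation ratio. The paper instantiates $M=\frac{1}{\epsilon}-1$ so that $\frac{M}{1+M}=1-\epsilon$, but this is only a cosmetic difference from your choice of $M$ as a sufficiently large function of $\epsilon$.
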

The proof of this theorem is included in Appendix \ref{appx:homo}.

\section{Heterogeneous Prices}\label{sec:hetero}

In this section, we present our primary technical contribution: a general approximation algorithm for the \place\ problem  with heterogeneous prices. {Our result holds for any choice model that satisfies the weak-rationality and monotonicity property (Assumption \ref{ration}), including multi-purchase models, and is valid under any placement independent browsing distribution (Assumption \ref{decouple}).} Specifically, given an oracle that offers an $\alpha$-approximate solution to the cardinality-constrained assortment optimization problem under a specified choice model, we propose a randomized algorithm that achieves an approximation ratio of $\frac{\Theta(\alpha)}{\log \rho}$ for the \place\ problem, where $\rho\leq m+1$ and $m$ is the number of display locations. In Appendix \ref{appx:decouple}, we show that this result also holds for a placement dependent MNL choice model.

{\color{black} 
	For heterogeneous prices, we show the following hardness result for \placel.  
	\begin{lemma} \label{lem:hardness}
     \placel\ is weakly NP-hard\footnote{A problem is weakly NP-hard if it is NP-hard, but a pseudo-polynomial time algorithm may exist for its solution.} even when the choice model $\phi$ is MNL.
\end{lemma} We show this using the hardness of product framing problem  shown by \cite{gallego2020approximation}. The proof is included in Appendix \ref{appx:hetero}. 
Thus, unless P=NP, it is not possible to obtain a polynomial time algorithm for solving the \place\ problem exactly even on a line and for the MNL choice model.
}

The subsequent discussion in this section is structured as follows: Section \ref{subsec:warmup} serves as a preliminary exploration, where we analyze the special case of \textit{\placel} and motivate several of our algorithmic ideas. In Section \ref{subsec:general}, we present  the main result which is our general randomized algorithm for \place. Section \ref{sec:tight} demonstrates the tightness of our approximation guarantee. Lastly, a deterministic algorithm for the Markov Chain choice model is presented in Section \ref{subsec:deran}.


\subsection{Warm-up: Placement on a Line} \label{subsec:warmup}

Recall that	for \placel, see Figure \ref{fig:line}, the browsing distribution is defined using $m$ parameters that represent the probability of customer visiting locations $1$ (topmost location) through $j$  ($\leq m$). Let $(1:j)$ denote the set of locations from $1$ through $j$. Let $\theta_j$ denote the probability that customer visits locations $1$ through $j$, i.e.,
\[\theta_j=\mathbb{P}_B(1:j).\]
The expected revenue of a placement $X$ on a line is simply,
\[\mathcal{R}(X)=\sum_{j=1}^m \theta_j\, R(X(1:j)).\]
Recall that our goal is to give a provably good placement by using algorithms for constrained assortment optimization. In particular, given $\alpha$-approximate algorithm for cardinality constrained assortment optimization, our goal is to give an algorithm that generates a placement with revenue at least $\frac{\Theta(\alpha)}{\log m}$ times the revenue of the optimal placement. For each $k \in [m]$, recall $S_k^*$ is the $\alpha$-approximate assortment of size $k$ given by the oracle in Section \ref{sec:oracle}. First, we start by showing the limitation of restricting the solution of \place\ to simply use  products from $S_k^*$ for some fixed $k$.

%

\vspace{2mm}
\noindent
{\bf Limitations of restricting product selection to a Single  ${\bf S_k^*}$}. 
Fix $k \in [m]$ and let $S_k^*$ be the assortment of size $k$ returned by the cardinality-constrained assortment optimization oracle. We begin by examining  the simple solution that relies solely on products in $S_k^*$ for the \placel\ problem. Let \opt\ represent the optimal value of the \placel\ problem. We demonstrate that the performance of this solution can be substantially suboptimal in the worst-case scenario. Specifically, the subsequent lemma reveals that for any $k \in [m]$, there exists an instance of the \placel\ problem such that any placement of $S_k^*$ yields an expected revenue of no more than $\frac{O(1)}{\sqrt{m}} \cdot \opt$. This holds even if $S_k^*$ is the optimal cardinality constrained assortment, i.e., even if we have an oracle with $\alpha=1$.

\begin{lemma}\label{single}
For any $k \in [m]$, any placement of $S^*_k$ is at most $O(1) \cdot \min (\frac{1}{k}, \frac{k}{m}) \cdot \opt$ in the worst case even for the MNL choice model. Hence, it is at most a $\frac{O(1)}{\sqrt{m}} \cdot \opt$ approximation.
\end{lemma}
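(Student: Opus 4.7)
I will exhibit, for each $k \in [m]$, two worst-case \placel\ instances that realize the $k/m$ and $1/k$ factors respectively; the lemma then follows by taking whichever of the two is tighter, and the $O(1)/\sqrt{m}$ consequence drops out because $\min(k/m,\, 1/k) \leq 1/\sqrt{m}$ for all $k$. The starting observation is: if a placement $X$ uses only products in $S_k^*$, then $|X(L)| \leq k$ for every $L \subseteq G$, so by the optimality of $S_k^*$ among assortments of size $\leq k$ (assuming $\alpha = 1$ as the lemma statement explicitly allows) we have $R(X(L)) \leq R(S_k^*)$. Consequently, the expected revenue of any such placement is at most $R(S_k^*)$, and it suffices to produce instances in which $R(S_k^*) \leq \min(1/k,\, k/m) \cdot O(1) \cdot \opt$.

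\noindent
\textbf{Instance for the $k/m$ bound.} I would take $n = m$ products under MNL with unit prices and preference weights $v_i = 1/m$, which gives $R(S) = (|S|/m)/(1 + |S|/m) \leq |S|/m$. Set $\theta_m = 1$ so that every customer visits all $m$ locations. Then $\opt \geq R(\{1, \ldots, m\}) = 1/2$, while any placement restricted to $S_k^*$ yields expected revenue at most $R(S_k^*) \leq k/m$, so the ratio is $O(k/m)$.

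\noindent
\textbf{Instance for the $1/k$ bound.} Here I need a choice model in which the revenue-maximizing cardinality-$k$ assortment \emph{excludes} the best singleton. MNL and Nested Logit are unsuitable because their revenue-ordered optima always contain the top-revenue singleton. I would turn to the Markov Chain choice model: take arrival probabilities $\lambda_1 = 1$ and $\lambda_i = 0$ for $i \geq 2$, uniform transitions from product $1$ to $\{2, \ldots, n\}$ whenever $1$ is not offered, and no-purchase transitions from every product $i \geq 2$. With prices $r_1 = k - \epsilon$ and $r_i = n - 1$ for $i \geq 2$, a direct computation gives $R(\{1\}) = k - \epsilon$, $R(\{i\}) = 1$ for $i \geq 2$, and $R(\{2, \ldots, k+1\}) = k$, so $S_k^* = \{2, \ldots, k+1\}$ strictly. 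Taking $\theta_1 = 1$, any placement using $S_k^*$ yields revenue at most $\max_{i \in S_k^*} R(\{i\}) = 1$, whereas placing product $1$ at location $1$ shows $\opt \geq k - \epsilon$. The ratio is $O(1/k)$.

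\noindent
\textbf{Main obstacle.} The principal difficulty is the second construction: forcing $S_k^*$ to exclude the best singleton. The Markov Chain model supplies exactly the ``cannibalization'' we need, because a high-arrival product absorbs all traffic whenever it is offered and caps the assortment's revenue at its own price; an optimizer will therefore prefer many lower-arrival, higher-priced products, yielding an $S_k^*$ of size $k$ whose individual revenues are much smaller than that of the excluded singleton. Verifying weak rationality of this Markov Chain instance is routine but must be checked to ensure the construction lies within the paper's model.
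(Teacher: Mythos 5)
Your proposal is correct, and both constructions check out, but the route differs from the paper's in an instructive way. For the $k/m$ bound you and the paper use essentially the same idea (many identical MNL products with tiny weights so that a size-$k$ assortment captures only a $k/m$ fraction of the demand); your version with $\theta_m=1$ and unit prices is a slightly cleaner degenerate case than the paper's uniform browsing distribution $\theta_j=1/m$, and your preliminary observation that $W(X)\leq R(S_k^*)$ for any placement $X$ supported on $S_k^*$ makes that half immediate. For the $1/k$ bound you diverge: the paper stays within MNL, taking $k$ products with price $k$ and weight $1/k$ plus one product $q$ with price $k/2$ and weight $1$, so that $R(\{q\})=k/4$ while $S_k^*$ consists of the $k$ high-priced products and any placement of them on the single visited location earns only $k/(k+1)$. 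Your Markov-chain construction achieves the same separation (and is a valid instance, since the Markov chain model satisfies Assumption 1), at the cost of leaving the MNL class. The one genuine error in your write-up is the claim that MNL is ``unsuitable because their revenue-ordered optima always contain the top-revenue singleton'': for cardinality-constrained MNL the optimal assortment is ordered by \emph{price}, not by singleton revenue $R(\{i\})$, so a low-price product with a very large preference weight can have the largest singleton revenue and still be excluded from $S_k^*$ — which is exactly the cannibalization the paper exploits. This misstatement does not affect the validity of your proof, since your alternative instance works, but the detour through the Markov chain model is unnecessary, and keeping the whole lemma inside MNL (as the paper does) makes the lower bound bite even for the most tractable choice model with an exact oracle ($\alpha=1$).
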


The proof of Lemma \ref{single} is included in Appendix \ref{appx:hetero}. From Lemma \ref{single}, we observe that selecting products solely from a fixed $S_k^*$ can be highly restrictive. In the worst case, the expected revenue from such a placement does not exceed $ \frac{O(1)}{\sqrt{m}} \cdot \opt$. Motivated by this lemma, we explore algorithms that, for each $k \in [m]$, compute a placement using products from  $S_k^*$  and then select the best among these placements. We demonstrate that this approach markedly enhances the solution to the \placel\ problem. 

\vspace{2mm}
\noindent
{\bf Importance of product selection from the Best of Many $\bf S_k^*$.} For each $k\in[m]$, let $X^*_k$ represent the placement that positions products from $S^*_k$ in the first $k$ locations of the line (in any given order) while leaving the subsequent locations empty. 
Define $Y$ as the most optimal placement from the set $\{X^*_k\}_{k\in[m]}$. Specifically, consider the placement solution for the problem given by:	
\[{\sf Best\, of\, Many:} \qquad Y= {\argmax_{X\in\{X^*_{k}\}_{k\in[m]}} \mathcal{R}(X).}\]
In the next lemma, we establish that $Y$ gives $\frac{\alpha}{\log \rho}$ approximation to \placel. The result holds for all choice models satisfying Assumption \ref{ration}, including multi-purchase models. Recall from Remark \ref{fill}, that we can always fill empty locations with the highest price product without decreasing revenue. 

{Recall that, $j_{\min}$ ($\geq 1$) and $j_{\max}$ denote the size of the smallest (non-empty) set and largest set (respectively) with non-zero support in the browsing distribution. Also, $\rho=\frac{j_{\max}+1}{j_{\min}}=2^d,$ for some integer $d\in \{1,2,\cdots,\ceil{\log_2 (m+1)}\}$.} 
\begin{lemma}\label{line}
The placement $Y$ gives a $\frac{\alpha}{{\log_2 (\rho)}}$-approximation for \placel.
\end{lemma}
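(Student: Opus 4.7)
The plan is to lower bound $W(X^*_k)$ for every $k$ by a clean quantity $R(S^*_k)\,\tau_k$, and then to show that the aggregate upper bound on $\alpha\cdot\opt$ exceeds the maximum of these lower bounds by only an $O(\log m)$ factor. Define $\tau_k := \sum_{j\geq k}\theta_j$, the probability that the customer browses at least the first $k$ locations. I would first verify $W(X^*_k)\geq R(S^*_k)\,\tau_k$: in $X^*_k$ the set $S^*_k$ fills positions $1,\dots,k$, and after the filling step from Section~\ref{sec:prelim} the highest-priced product $i^*$ occupies positions $k+1,\dots,m$; hence for every $j\geq k$ we have $X^*_k(1:j)\supseteq S^*_k\cup\{i^*\}$, and combining weak-rationality with the property $R(S\cup\{i^*\})\geq R(S)$ gives $R(X^*_k(1:j))\geq R(S^*_k)$. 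Summing only the $j\geq k$ terms of $W(X^*_k)$ produces the bound, so $W(Y)\geq M:=\max_k R(S^*_k)\,\tau_k$.

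Next I would upper bound $\opt$ via the oracle: since any optimal placement has $|O_j|\leq j$ for the set $O_j$ of products visible in the first $j$ locations, the $\alpha$-guarantee gives $R(O_j)\leq R(S^*_j)/\alpha$ and therefore $\opt\leq \tfrac{1}{\alpha}\sum_j \theta_j R(S^*_j)$. Assuming without loss of generality that $R(S^*_k)$ is non-decreasing in $k$ (replace each $S^*_k$ by the most profitable of $S^*_1,\dots,S^*_k$, padded with $i^*$), Abel summation rewrites the right-hand side as $\sum_k \tau_k\,\delta_k$ with increments $\delta_k := R(S^*_k)-R(S^*_{k-1})\geq 0$.

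The crux, and the main obstacle, is to show $\sum_k \tau_k\,\delta_k \leq (1+\ln m)\,M$. Using $\tau_k\leq M/R(S^*_k)$ whenever $\delta_k>0$, this sum is at most $M\sum_k \delta_k/R(S^*_k)$; the inequality $1-x\leq \ln(1/x)$ and telescoping yield $\sum_k \delta_k/R(S^*_k)\leq 1+\ln(R(S^*_m)/R(S^*_1))$, reducing the task to bounding $R(S^*_m)/R(S^*_1)$ by $m$. This is where weak-rationality re-enters decisively:
\[
R(S^*_m)=\sum_{i\in S^*_m} r_i\,\phi(i,S^*_m)\;\leq\;\sum_{i\in S^*_m} r_i\,\phi(i,\{i\})\;=\;\sum_{i\in S^*_m} R(\{i\})\;\leq\; m\cdot R(S^*_1).
\]
Chaining everything gives $\opt\leq \tfrac{1+\ln m}{\alpha}\,M\leq \tfrac{1+\ln m}{\alpha}\,W(Y)$, i.e., $W(Y)\geq \tfrac{\alpha}{1+\ln m}\,\opt$, which is the claimed $\alpha/\log m$-approximation up to constants.
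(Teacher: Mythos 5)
Your argument is correct in structure but genuinely different from the paper's. The paper proves Lemma \ref{line} by a dyadic decomposition: it splits the optimal placement's revenue into $\log_2 m$ blocks $W_{2^{\ell-1}:2^{\ell}}(O)$, uses weak-rationality to bound each block by the revenue of a placement supported only on locations $(2^{\ell-1},2^{\ell}]$, and then charges that block to the single candidate $X^*_{2^{\ell-1}}$ via the oracle guarantee \eqref{eq:approx}. You instead charge everything to the scalar benchmark $M=\max_k \tau_k R(S^*_k)$ with $\tau_k=\sum_{j\ge k}\theta_j$: the lower bound $W(X^*_k)\ge \tau_k R(S^*_k)$ is right (using $R(S\cup\{i^*\})\ge R(S)$ for the padded locations), the upper bound $\opt\le\frac{1}{\alpha}\sum_j\theta_j R(S^*_j)$ is right, and the Abel-summation/telescoping step $\sum_k\tau_k\delta_k\le M\bigl(1+\ln(R(S^*_m)/R(S^*_1))\bigr)$ is a clean way to extract the logarithm. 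Your route avoids the power-of-two reduction and the block-wise charging, and (once repaired, see below) even gives a slightly better constant, $1+\ln m$ versus $\log_2 m$; the paper's decomposition has the advantage that it ports directly to the randomized, general-browsing-distribution setting of Lemma \ref{randline} and Theorem \ref{randgen}, where your reliance on $X^*_k(1:j)\supseteq S^*_k$ for all $j\ge k$ (a property of the nested/line structure) would need additional work.

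Two small repairs. First, the final inequality $\sum_{i\in S^*_m}R(\{i\})\le m\,R(S^*_1)$ is not justified with an $\alpha$-approximate oracle: you only know $R(S^*_1)\ge\alpha\max_i R(\{i\})$, so as written you get $R(S^*_m)/R(S^*_1)\le m/\alpha$ and hence a denominator $1+\ln(m/\alpha)$, which matches the claimed guarantee only for constant $\alpha$. The fix is to run the Abel-summation argument on the exact optima $R^{\opt}_k:=\max_{|S|\le k}R(S)$ rather than on the oracle outputs: these are automatically non-decreasing, satisfy $R^{\opt}_m\le m\,R^{\opt}_1$ exactly by weak-rationality, and give $\opt\le\sum_j\theta_j R^{\opt}_j\le(1+\ln m)\max_k\tau_k R^{\opt}_k\le\frac{1+\ln m}{\alpha}\max_k\tau_k R(S^*_k)\le\frac{1+\ln m}{\alpha}W(Y)$, invoking $\alpha$ only once. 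Second, you should dispose of the degenerate case $R^{\opt}_1=0$ (then all revenues vanish by weak-rationality and the claim is vacuous) before dividing by $R(S^*_1)$.
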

\begin{proof}{Proof.}

Let $O: G \to N$ denote an optimal placement for \placel\ and let $\mathcal{R}(O)=\opt$. {Recall that $O(1:k)$ denotes the set of products placed in locations $\{1,\cdots,k\}$. For $\ell\geq 1$, let $\mathcal{R}_{\ell}(O)$ denote the revenue contribution from products in location $2^{\ell-1}j_{\min}$ to $2^{\ell}j_{\min}-1$, i.e.,
	\begin{equation} \label{eq:Wdef}
		\mathcal{R}_{\ell}(O)= \sum_{j=2^{\ell-1} j_{\min}}^{2^{\ell}j_{\min}-1}\left[r_{O(j)}\,\sum_{k\geq j} \theta_k \,\phi\left(O(j), O(1:k)\right)\right].
	\end{equation}
	Given that $\rho$ is a power of 2, we have 
	\[\mathcal{R}(O)=\sum_{\ell=1}^{d}\mathcal{R}_{\ell}(O).\]
	We have {\color{black} for any $\ell\in[d]$},
	$(2^{\ell-1}j_{\min}:\min\{k,\,2^{\ell}j_{\min}-1\}) \subseteq (j_{\min}:j_{\max}).$ 
	To simplify notation, denote the subset of locations 
	\[G_{\ell,k}= (2^{\ell-1}j_{\min}:\min\{k,\,2^{\ell}j_{\min}-1\}) .\]
	We have $G_{\ell,k} \subseteq (2^{\ell-1}j_{\min}:2^{\ell}j_{\min}-1)$. Therefore, by using \eqref{eq:Wdef} and the weak-rationality of $\phi$ (Assumption~\ref{ration}), we get
	\[\mathcal{R}_{\ell}(O)\leq \sum_{j=2^{\ell-1}j_{\min}}^{2^{\ell}j_{\min}-1}r_{O(j)} \sum_{k=j}^{j_{\max}} \theta_k \phi\left(O(j), O(G_{\ell,k})\right).\]
	Observe that the RHS is the expected total revenue of the placement solution where we place products $O(2^{\ell-1}j_{\min}:2^{\ell}j_{\min}-1)$ between $2^{\ell-1}j_{\min}$ and $2^\ell j_{\min} -1$, same as $O$, and keep all other positions vacant. Let $O_{\ell}$ denote this solution. 
	We have
	\[\mathcal{R}(O)=\sum_{\ell=1}^{d}\mathcal{R}_{\ell}(O)\leq \sum_{\ell=1}^{d} \mathcal{R}(O_\ell).\]
	Consider some $\ell$ in $[d]$. The total revenue of placement $X^*_{2^{\ell-1} j_{\min}}$ can be lower bounded by the revenue of placement $O_\ell$ as follows,
	\begin{eqnarray*}
		\mathcal{R}(O_\ell)&= &\sum_{j=2^{\ell-1}j_{\min}}^{2^{\ell}j_{\min}-1}r_{O(j)} \sum_{k=j}^{j_{\max}} \theta_k \phi\left(O(j), O(G_{\ell,k}) \right),\\
		&= & \sum_{k= 2^{\ell-1}j_{\min}}^{j_{\max}} \theta_k\, R(O(G_{\ell,k})),\\
		&\leq& \frac{1}{\alpha}\sum_{k= 2^{\ell-1}j_{\min}}^{j_{\max}} \theta_k\, R\left(S^*_{2^{\ell-1}j_{\min}}\right),\\
		&\leq&\frac{1}{\alpha}\mathcal{R}(X^*_{2^{\ell-1}j_{\min}}),
	\end{eqnarray*}
	where the first inequality follows from \eqref{eq:approx} and the fact that $|O( G_{\ell,k})| \leq  |G_{\ell,k}| \leq 2^{\ell-1}j_{\min}$.
	Thus,
	\[\mathcal{R}(O)\leq \frac{1}{\alpha}\sum_{\ell=1}^{d}\mathcal{R}(X^*_{2^{\ell-1}j_{\min}})\leq \frac{\log_2 (\rho)}{\alpha}\, \mathcal{R}(Y),\]
}here the last inequality follows from the fact that $\mathcal{R}(Y)\geq \mathcal{R}(X^*_k)\,\, \forall k\in[m]$ and $d=\log_2(\rho)$.
\hfill\Halmos		\end{proof}

\begin{remark}
    Recall that in the proof of Lemma \ref{line}, we make use of at most 
$\log_2 \rho$ candidate placements. In Section \ref{sec:tight}, we show that our analysis is tight even if we take the best placement out of the $m$ candidates $\{X^*_k\}_{k\in[m]}$.
Different methods for placing/ordering the products from sets $S^*_k$ do not improve the worst case guarantee.
\end{remark}

\subsection{Main Results: Approximation Guarantee} \label{subsec:general}

Given an assortment, finding the optimal placement for general browsing distributions is significantly more involved as customers may start browsing from various locations and may also travel between locations in myriad ways. In particular, the optimal solution may now require copies of a product to be placed in multiple locations. We illustrate the importance of repeating products with a simple example. {\color{black} Recall that, for \placel\ the optimal placement without product repetition has the same revenue as the optimal placement with product repetition (see Remark \ref{pline}).}
{
\begin{example}\label{ex1}
	Consider the scenario where customer starts at each location $j\in G$ with probability\ $\frac{1}{m}$ and then either chooses the product at location $j$ or departs, i.e., $\mathbb{P}_B(\{j\})=\frac{1}{m}\,\, \forall j\in G$.
	Let $i_1=\argmax_{i\in N} r_i\phi(i,\{i\})$. Since we can repeat a product at multiple locations, the optimal solution is to place product $i_1$ at all locations. This placement has expected revenue $R(\{i_1\})$. Now, 
    suppose that all products apart from $i_1$ have price 0. 
    Then, the expected revenue of the optimal placement that does not repeat any product is at most $\frac{1}{m} R(\{i_1\})$.  
\end{example}
}
 This sheds light on a new trade off for the general placement problem - between displaying a larger collection of products and repeating the same (high revenue) product(s) at multiple locations.  We address this new difficulty by using randomness in the placement.

\textbf{Algorithm:} Given assortment $S^*_k$ as defined in Equation \eqref{eq:approx}, let $X^r_k$ denote a uniformly random placement of products in $S^*_k$, i.e., at each location we place a uniformly randomly (and interdependently) chosen product from $S^*_k$. 
Note that a single product in $S^*_k$ may be placed at multiple locations (we sample with replacement). Let, 
\[Y=\argmax_{X\in\{X^r_k\}_{k \in [m]}} \mathcal{R}(X),\]
{\color{black} denote the final (randomized) placement solution. We can compute $Y$ by finding the highest revenue placement from the set $\{X^r_k\}_{k\in[m]}$ of $m$ candidate solutions. 
 For any given $k \in [m]$, the random placement $X^r_k$ is oblivious to how customers travel between locations. In contrast, for the \placel\ problem, we greedily placed the products in $S^*_k$ at the top $k$ spots. 
The oblivious nature of randomized placement is pivotal in proving analytical guarantees.

We use a method akin to the probabilistic method to prove a guarantee on the expected performance of the algorithm. Specifically, we will show a lower bound on the expected value $\ep[\mathcal{R}(Y)]$ of the randomized solution $Y$.  Standard repetition can be used to obtain an algorithm that generates a good placement with high probability.  Later, we also give a derandomized algorithm for the Markov chain choice model.} 	
Let \opt\ represent the optimal value of the \place\ problem.
Recall that, $\rho=\frac{j_{\max}+1}{j_{\min}}=2^d$ for some integer $d\in \{1,2,\cdots,{\log_2 (m+1)}\}$.
We establish the following performance guarantee for our algorithm.

\begin{theorem}\label{randgen}
The randomized placement $Y$ gives $\frac{\Omega(\alpha)}{\log \rho}$-approximation for \place, i.e.,  
\[\ep[\mathcal{R}(Y)]=\ep[\max_{k \in [m]} \mathcal{R}(X^r_k)]\geq \alpha\frac{1-1/e}{{\log_2 \rho}} \opt.\]	
\end{theorem} 
{ Theorem \ref{randgen} holds for all choice models that satisfy Assumption \ref{ration}, including multi-purchase models. In Appendix \ref{appx:decouple}, we extend this result to a placement dependent MNL choice model. } 

To prove Theorem \ref{randgen}, we first show the result for a line graph (Lemma \ref{randline}). We then use the result for a line graph as a building block to establish Theorem \ref{randgen}. {\color{black}	We start by describing the main ideas at a high level. 
\smallskip

\noindent \emph {Proving the result for a line:} Recall that in Lemma \ref{line} we showed that the best (deterministic) placement from the set $\{X^*_k\}_{k\in[m]}$, is at least $\frac{\alpha}{{\log_2 \rho}}$-approximate. Also, recall that the exact order of placement had no impact on the result, i.e. $X^*_k$ can be any placement where the products in $S^*_k$ are placed in the first $k$ spots on the line. In the new algorithm, we allow repetition of products and the candidate placement $X^r_k$ may not place some products in $S^*_k$ at any location. In fact, the value of $\mathcal{R}(X^r_k)$ may be very low with some probability, such as, when a ``low revenue" product is placed at every location. In Lemma \ref{helper}, we show that such possibilities have a limited effect on the expected revenue. Then, in Lemma \ref{randline}, we show that the sum of the expected revenues of the $\log_2(\rho)$ candidate placements $\{X^r_{j_{\min}}, X^r_{2j_{\min}},\cdots,X^r_{2^{d-1} j_{\min}}\}$, is within a constant factor of the optimal solution. }

\begin{lemma}\label{helper}
Given the assortment $S^*_j$ for $j\in[m]$, let $S^r_j$ denote  a random assortment of $j$ products where each product is independently sampled from $S^*_j$ (with replacement). We have,
\[			\ep[R(S^r_j)] \geq \left(1-\left(1-\frac{1}{j}\right)^j\right)R(S^*_j)\geq (1-1/e) R(S^*_j).\]
\end{lemma}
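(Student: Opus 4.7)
The plan is to lower bound $R(S^r_j)$ pointwise by an expression that is linear in the indicators $\mathbbm{1}[i \in S^r_j]$ for $i \in S^*_j$, and then take expectations and use a coupon-collector-style calculation.

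The key first step is to exploit weak substitutability (Assumption \ref{ration}). Observe that, as a set, $S^r_j \subseteq S^*_j$, since every draw is an element of $S^*_j$. Hence, for every product $i \in S^r_j$, repeatedly applying Assumption \ref{ration} (adding the missing elements of $S^*_j \setminus S^r_j$ one at a time) yields
\[
\phi(i, S^r_j) \;\geq\; \phi(i, S^*_j).
\]
I would then rewrite the random revenue as a sum over $S^*_j$ weighted by membership indicators:
\[
R(S^r_j) \;=\; \sum_{i \in S^r_j} r_i\,\phi(i, S^r_j) \;\geq\; \sum_{i \in S^*_j} \mathbbm{1}[i \in S^r_j]\, r_i\, \phi(i, S^*_j).
\]

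The second step is to compute $\Pr[i \in S^r_j]$ for a fixed $i \in S^*_j$. Because each of the $j$ draws is independent and uniform over $S^*_j$ (and $|S^*_j|=j$ without loss of generality, by the padding argument of Section \ref{sec:oracle}), the probability that $i$ is missed by every single draw is $(1-1/j)^j$. Hence $\Pr[i \in S^r_j] = 1 - (1-1/j)^j$, and by linearity of expectation,
\[
\ep[R(S^r_j)] \;\geq\; \bigl(1 - (1-1/j)^j\bigr) \sum_{i \in S^*_j} r_i\, \phi(i, S^*_j)
\;=\; \bigl(1 - (1-1/j)^j\bigr)\, R(S^*_j).
\]
The second inequality of the lemma then follows from the standard bound $(1-1/j)^j \leq 1/e$ for all $j \geq 1$, giving $\ep[R(S^r_j)] \geq (1-1/e)\,R(S^*_j)$.

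There is no serious obstacle here; the only subtlety is the use of weak substitutability to pass from $\phi(i, S^r_j)$ to $\phi(i, S^*_j)$, which is crucial because it removes the dependence of the conditional choice probabilities on the random realization of $S^r_j$ and reduces the problem to an elementary sampling-with-replacement calculation. Note that the argument is distribution-free: it uses only Assumption \ref{ration} and the fact that each draw is uniform on $S^*_j$, so it applies to any choice model satisfying weak substitutability.
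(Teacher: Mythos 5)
Your proof is correct and follows essentially the same route as the paper's: both use weak substitutability and the inclusion $S^r_j \subseteq S^*_j$ to lower bound $\phi(i, S^r_j)$ by $\phi(i, S^*_j)$ pointwise, then compute $\Pr[i \in S^r_j] = 1 - (1-1/j)^j$ via independent sampling with replacement from a set of size exactly $j$. No differences worth noting.
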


\begin{lemma}\label{randline}
For a line graph,	$
\sum_{\ell=1}^{\log_2(\rho)}\ep[\mathcal{R}(X^r_{2^{\ell-1}j_{\min}})]\geq \alpha(1-1/e) \opt$.
\end{lemma}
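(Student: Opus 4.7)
My plan is to mimic the dyadic decomposition in the proof of Lemma~\ref{line}, but to replace the deterministic placement $X^*_k$ with the randomized $X^r_k$, using an averaging argument in the spirit of Lemma~\ref{helper} to bridge the gap. First I would reuse the optimal decomposition from Lemma~\ref{line}: with $O$ an optimal placement and $O_\ell$ the placement that agrees with $O$ on positions $2^{\ell-1}+1$ through $2^\ell$ and is empty elsewhere, weak-rationality gives $W(O) \le \sum_{\ell=1}^{\log_2 m} W(O_\ell)$, and the same calculation used in Lemma~\ref{line} (applying \eqref{eq:approx} to the set $O(\{2^{\ell-1}+1\}:\min\{k,2^\ell\})$, which has cardinality at most $2^{\ell-1}$) yields
\[
W(O_\ell) \;\le\; \frac{1}{\alpha}\, R(S^*_{2^{\ell-1}}) \sum_{k \ge 2^{\ell-1}+1} \theta_k.
\]

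The heart of the argument is a randomized counterpart of the lower bound $W(X^*_k) \ge R(S^*_k)\sum_{j\ge k}\theta_j$. For any $j \ge k$, I would show
\[
\ep\bigl[R(X^r_k(1:j))\bigr] \;=\; \sum_{i \in S^*_k} r_i\, \ep\bigl[\onee(i \in X^r_k(1:j))\,\phi(i, X^r_k(1:j))\bigr] \;\ge\; \bigl(1-(1-\tfrac{1}{k})^j\bigr)\, R(S^*_k),
\]
where the decisive move, exactly as in Lemma~\ref{helper}, is to invoke weak-rationality on each realization to replace $\phi(i, X^r_k(1:j))$ by $\phi(i, S^*_k)$ (since $X^r_k(1:j) \subseteq S^*_k$) and then use $\Pr[i \in X^r_k(1:j)] = 1 - (1-1/k)^j$ (with equality because $|S^*_k|=k$ w.l.o.g.). For $j \ge k$ this is at least $(1-1/e)R(S^*_k)$, so, dropping the non-negative $j<k$ terms,
\[
\ep\bigl[W(X^r_k)\bigr] \;\ge\; (1-1/e)\, R(S^*_k)\sum_{j \ge k+1}\theta_j.
\]

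The final step is assembly: setting $k = 2^{\ell-1}$, the lower bound on $\ep[W(X^r_{2^{\ell-1}})]$ and the upper bound on $W(O_\ell)$ combine to give $\ep[W(X^r_{2^{\ell-1}})] \ge \alpha(1-1/e)\,W(O_\ell)$, and summing over $\ell \in [\log_2 m]$ together with $W(O) \le \sum_\ell W(O_\ell)$ yields the desired inequality. The only genuinely new difficulty beyond Lemma~\ref{line} is the randomized lower bound in the second paragraph: because $R(\cdot)$ need not be monotone under the weak-rationality assumption, I cannot simply argue that more random samples give more expected revenue. The right move is the product-by-product decomposition of the revenue, which lets weak-rationality act on each realization independently, at the price of the $(1-1/e)$ factor coming from sampling with replacement.
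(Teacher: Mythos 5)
Your proposal is correct and follows essentially the same route as the paper: the dyadic decomposition $W(O)\le\sum_\ell W(O_\ell)$ from Lemma~\ref{line}, the oracle bound \eqref{eq:approx}, and the product-by-product weak-rationality argument of Lemma~\ref{helper} to absorb the $(1-1/e)$ loss from sampling with replacement. If anything, your version is slightly more careful than the paper's, since you explicitly extend the Lemma~\ref{helper} bound to $j\ge k$ samples (giving $1-(1-1/k)^j\ge 1-1/e$) rather than invoking it only for exactly $k$ samples, which is what the paper's final chain of (in)equalities implicitly requires.
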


 Using the fact that, 
\[\sum_{\ell=1}^{\log_2(\rho)}\ep[\mathcal{R}(X^r_{2^{\ell-1}j_{\min}})]\leq (\log_2 \rho)E[\mathcal{R}(Y)],\] 
we have that $Y$ is a $\alpha \frac{(1-1/e)}{\log_2 \rho}$-approximation for a line graph as a direct corollary of Lemma \ref{randline}. The proofs of Lemma \ref{helper} and \ref{randline} are included in Appendix \ref{appx:hetero}.

{\color{black}
\noindent \emph {From a line to the general case:} At a high level, our approach is based on the fact that the browsing distribution of a general instance can be 'decomposed' into a weighted 'sum' of browsing distributions on line graphs. To make this precise, we introduce an equivalent (though perhaps less intuitive) characterization of the browsing distribution. Let $\pi:[m]\to[m]$ denote a permutation over $[m]$ and let $\Pi$ denote the set of all such permutations. For a permutation $\pi$, define $\theta^{\pi}_j$ as the probability that customer visits locations $\{\pi(1), \pi(2),\cdots, \pi(j)\}$ exactly in this order\footnote{Let $L=\{\pi(1), \pi(2),\cdots, \pi(j)\}$. Recall that, $\mathbb{P}_B(L)$ represents the probability that $L$ is the set of locations visited by the customer in \emph{any order}. In contrast, $\theta^{\pi}_j$ is the probability that the customer visits the locations in the specified order.}. Let $X^\pi(1:j)$ denote the set of products placed at $\{\pi(1),\cdots,\pi(j)\}$. Then the total expected revenue can be expressed as a sum over all permutations,
\begin{equation} \label{decompooo}
\mathcal{R}(X)=\sum_{\pi\in \Pi} \sum_{j\in[m]} \theta^{\pi}_jR(X^{\pi}(1:j)).
\end{equation}
Now fix a permutation $\pi\in \Pi$. Consider the line graph with ordered set of locations $\{\pi(1),\cdots, \pi(m)\}$, where the customer visits the first $j$ locations with probability $\theta^\pi_j$, and visits none  with probability $1-\sum_{j\in[m]}\theta^\pi_j$. Using Lemma \ref{randline} on this line graph, we have that the sum of the expected revenues of $\log_2(\rho)$ (randomized) candidate placements is a constant factor of the optimal revenue. Since the candidate solutions constructed are independent of the specific permutation $\pi$, combining this result with the decomposition above yields Theorem \ref{randgen}.
} 

\begin{proof}{Proof of Theorem \ref{randgen}.}
Fix an arbitrary permutation $\pi\in \Pi$ and let $W^\pi(X)$ denote the expected total revenue from placement $X$ when customers visit a subset of locations $\{\pi(1), \pi(2),\cdots, \pi(m)\}$ in order, i.e., 
\[W^{\pi}(X)=\sum_{j\in[m]} \theta^{\pi}_jR(X^{\pi}(1:j)).\] 
Let $W^{\pi}(O)$ denote the expected total revenue of the optimal placement when customers visit locations in the order given by $\pi$. { Note that fixing $\pi$ reduces the problem to an instance of \placel\ where, for every $j\in[m]$, the customer visits the first $j$ locations with probability $\theta^\pi_j$, and visits none of the locations with probability $1-\sum_{j\in[m]}\theta^\pi_j$. From Lemma \ref{randline}, we have 
\[ \sum_{\ell=1}^{\log_2(\rho)}\ep[W^{\pi}(X^r_{2^{\ell-1}j_{\min}})]\geq \alpha (1-1/e)W^{\pi}(O),\]
here $W^{\pi}(O)$ is a lower bound on the optimal revenue for this instance of \placel. 
Now, the total revenue of placement $X$ can be written as,
\[\mathcal{R}(X)=\sum_{\pi\in\Pi} W^{\pi}(X).\]
Thus, for the randomized placement $Y$, we have,}
\begin{eqnarray*}
	(\log_2 \rho)\, \ep[\mathcal{R}(Y)]= (\log_2 \rho)\,\sum_{\pi\in \Pi}\ep[W^{\pi}(Y)]
	& \geq &\sum_{\pi\in \Pi}\sum_{\ell=1}^{\log_2 \rho}\ep[W^{\pi}(X^r_{2^{\ell-1}j_{\min}})],\\
	&\geq &\sum_{\pi\in \Pi}\alpha(1-1/e)\,W^{\pi}(O)=\alpha(1-1/e) \opt.		\end{eqnarray*}
\hfill\Halmos		\end{proof}
\subsection{Tightness of Analysis}\label{sec:tight}

In this section, we construct an instance of the \place\ problem to demonstrate the tightness of our algorithms. Recall that our algorithm depends on utilizing products from $S_k^*$, arranging them in a certain manner (randomized), and then selecting the best resulting placement for all $k \in [m]$. We will present an instance wherein, for any $\epsilon>0$, any placement of products in $S_k^*$ for any $k \in [m]$ yields a revenue that remains a factor $\Omega(\log^{1-\epsilon} m)$ away from the revenue of the optimal placement. This underscores the tightness of our algorithm's analysis in Theorem \ref{randgen}. We now describe the following instance.

\vspace{2mm}
\noindent
{\bf Instance:}
Let $\epsilon>0$. Consider a line with $m$ locations $\{1,2,\ldots,m\}$.  Denote $\theta_j$  the probability of visiting the subset of locations $\{1,2,\ldots,j\}$ and let $\theta_j= 1/ j ^{1+\frac{1}{1+\epsilon}}$. We consider an MNL choice model.
Consider a universe of products such that for each product  with a preference weight  $v$, we set the revenue of the product to $r = v^{-\frac{1}{1+\epsilon}}$. The universe is constituted with the following products:
\begin{itemize}
\item For each $j\in [m]$, there are $j$ different products with preference weights $v_j= \epsilon/ j $.
\item For each $j\in [m]$, there is a single product with preference weight $u_j= 1/( j \log^{1+\epsilon} j)$.
\end{itemize}
So in total we have a total of $m+ m(m+1)/2$ products in the universe.

The next theorem proves the tightness of analysis of our approximation algorithms for the \place\ problem. The proof of Theorem \ref{thm:tigh} is given in Appendix \ref{appx:hetero}.

\begin{theorem} \label{thm:tigh}
Consider the instance above with any $\epsilon>0$. For each $k \in [m]$, let $S^*_k$ be the optimal assortment of cardinality at most $k$. Let $X_k$ be the optimal placement using only products in $S^*_k$.
Let $ X^*$ be the optimal placement using any product from the universe. Then, for any $k \in [m]$, 
$$ \mathcal{R}(X^*)=\Omega(\log^{1-\epsilon}m)\mathcal{R}(X_k).$$
\end{theorem}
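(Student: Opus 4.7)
My plan is to lower-bound $W(X)$ by constructing a specific placement $X^\dagger$ and upper-bound $W(X_k)$ by first identifying $S_k^*$. The ratio of these bounds will deliver the claimed $\Omega(\log^{1-\epsilon} m)$ gap. For the lower bound, let $X^\dagger$ place the single type-2 product with weight $u_j = 1/(j \log^{1+\epsilon} j)$ at location $j$ for each $j \in [m]$ (starting from $j = 2$ to avoid the log singularity at $j=1$). Under MNL, $R(X^\dagger(1:k)) = \sum_{j \le k} u_j r_{u_j} / (1 + \sum_{j \le k} u_j)$, and since $\sum_j u_j$ converges, the denominator is $O(1)$ while $u_j r_{u_j} = 1/(j^{\epsilon/(1+\epsilon)} \log^\epsilon j)$. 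Interchanging sums in $W(X^\dagger) = \sum_k \theta_k R(X^\dagger(1:k))$ and using $\sum_{k \ge j} \theta_k = \Theta(j^{-1/(1+\epsilon)})$ gives $W(X^\dagger) = \Omega\bigl(\sum_{j=2}^m 1/(j \log^\epsilon j)\bigr) = \Omega(\log^{1-\epsilon} m)$.

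For the structure of $S_k^*$, I rewrite the MNL objective using $r_i = v_i^{-1/(1+\epsilon)}$ as $R(S) = \sum_{i \in S} v_i^{\epsilon/(1+\epsilon)} / (1 + V(S))$ with $V(S) = \sum_{i \in S} v_i$. Concavity of $v \mapsto v^{\epsilon/(1+\epsilon)}$ and Jensen's inequality imply that, for fixed $|S| = k$ and fixed $V(S) = V$, the numerator is maximized when all $v_i$ are equal to $V/k$. Optimizing the resulting scalar $V^{\epsilon/(1+\epsilon)}/(1+V)$ over $V$ yields $V = \epsilon$, so the ideal is $v_i = \epsilon/k$ for every $i \in S$. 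The instance is engineered precisely so that exactly $k$ products attain this weight (the type-1 products at level $j = k$); hence $S_k^*$ consists of these $k$ products, each with weight $\epsilon/k$ and revenue $(k/\epsilon)^{1/(1+\epsilon)}$, and no substitution involving type-2 products or products from other type-1 levels can match the Jensen upper bound.

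For the upper bound on $W(X_k)$, any placement $X_k$ drawing only from this $S_k^*$ has $|X_k(1:j)| \le \min(j,k)$, and a direct MNL calculation for $t$ identical products of weight $\epsilon/k$ and revenue $(k/\epsilon)^{1/(1+\epsilon)}$ gives $R(X_k(1:j)) = O(\min(j,k) \cdot k^{-\epsilon/(1+\epsilon)})$. Splitting the sum at $j = k$ and using $\sum_{j \le k} j^{-1/(1+\epsilon)} = \Theta(k^{\epsilon/(1+\epsilon)})$ together with $\sum_{j > k} \theta_j = \Theta(k^{-1/(1+\epsilon)})$ yields $W(X_k) \le O(k^{-\epsilon/(1+\epsilon)}) \bigl[\sum_{j \le k} \theta_j\, j + k \sum_{j > k} \theta_j \bigr] = O(k^{-\epsilon/(1+\epsilon)}) \cdot \Theta(k^{\epsilon/(1+\epsilon)}) = O(1)$. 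Combining with the first step, $W(X) \ge W(X^\dagger) = \Omega(\log^{1-\epsilon} m) \cdot W(X_k)$.

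The main obstacle is Step 2: rigorously showing that $S_k^*$ really is the set of $k$ level-$k$ type-1 products and ruling out all mixed alternatives. The whole instance has been calibrated so that the continuous Jensen/scalar optimum $(v_i = \epsilon/k,\; V = \epsilon,\; |S| = k)$ is realized exactly by an actual discrete sub-universe; one must verify both that Jensen is strict unless weights are equal, and that swapping in any high-revenue type-2 product (which has a strictly different weight) strictly decreases the objective on the relevant scale. Once this scale collapse of $S_k^*$ is established, the remaining steps are just the two moment estimates above.
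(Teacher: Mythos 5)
Your proposal is correct and follows essentially the same route as the paper: identify $S_k^*$ as the $k$ equal-weight level-$k$ products via a concavity argument (the paper uses weighted AM-GM where you use Jensen, which is equivalent), show any placement of $S_k^*$ yields $O(1)$ revenue by splitting the sum at $j=k$, and exhibit the placement of the $u_j$-products achieving $\Omega(\log^{1-\epsilon}m)$. The obstacle you flag in Step 2 is handled by the strict concavity of $v\mapsto v^{\epsilon/(1+\epsilon)}$ together with the uniqueness of the scalar optimum $V=\epsilon$, which forces every weight to equal $\epsilon/k$ exactly; the paper's AM-GM version establishes the same tight upper bound and checks it is attained by the level-$k$ products.
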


\subsection{Deterministic Algorithm for Markov Choice Model} \label{subsec:deran}
In practice, business owners may be hesitant to implement a randomized solution, even if one can guarantee that the solution is ``good" with high probability. In our context, the placement decision may not be likely to undergo frequent changes, increasing the importance of a deterministic solution. Our randomized algorithm can be run many times to generate multiple placement solutions and the best of these solutions will have revenue at least $\frac{O(\alpha)}{\log_2 \rho}\,\opt$ with high probability. We now show that given additional structure on the choice model, we can derandomize our algorithms to obtain a deterministic algorithm with the same approximation guarantee (for a general browsing distribution). In particular, we consider the family of (single-purchase) Markov choice models. This family includes several commonly used choice models, such as MNL and GAM as special cases, and also approximates MMNL and other choice models to a reasonable degree. Recall that, we say that an algorithm is deterministic if it does not use any randomness given a revenue oracle. Estimating the revenue requires randomization (due to random sampling) even for the ``deterministic" algorithm that we introduce below. 

Given assortment $S^*_k$, what is the best placement? This is a placement problem on the reduced ground set $S^*_k$. We show that, perhaps surprisingly, the problem is an instance of constrained monotone submodular maximization. Note that in Section \ref{sec:unif}, we showed a similar result for the setting where prices are homogeneous. For heterogeneous prices, it is well known that the revenue function is not monotone or submodular even for MNL choice model (special case of Markov). Our new result utilizes properties of optimal unconstrained assortments for Markov choice models (\cite{desir, SO}). 
\begin{lemma}\label{lem:redsubmod}
For every $k\in[m]$, there is a $(1-1/e)$-approximation algorithm for finding the optimal placement on the reduced ground set $S^*_k$ when $\phi$ is a Markov choice model.
\end{lemma}
We include the  proof of Lemma \ref{lem:redsubmod} in Appendix \ref{appx:deran}. Let $\hat{X}_k$ denote a $(1-1/e)$-approximate placement for $S^*_k$. Consider the placement solution,\[\hat{Y}=\argmax_{k\in[m]} \mathcal{R}(\hat{X}_k).\]
Observe that this is a deterministic solution. We establish the following approximation guarantee for $\hat{Y}$. The proof details of Theorem \ref{thm:morocco} are included in Appendix \ref{appx:deran}.
\begin{theorem} \label{thm:morocco}
The placement $\hat{Y}$ is a  $\frac{\Omega(1)}{\log_2 \rho}$-approximation algorithm for \place\ when $\phi$ is a Markov choice model.
\end{theorem}
Finding a deterministic $\frac{\Omega(\alpha)}{\log_2 m}$-approximation for general choice models is an interesting open problem.

{
\section{Numerical Experiments} \label{sec:numerics}

In this section, we present numerical experiments on synthetic data to evaluate the performance of our algorithm for the \place\ problem (Sections \ref{exp:line} and \ref{exp:grid}), {\color{black} as well as to assess the impact of product repetition (Section \ref{exp:repeat}).}

\subsection{Experimental Setup}

We start by introducing the universe of products \( N \), which consists of \( n \) products. The prices of the products, denoted as \( (r_i)_{i \in N} \), are generated from an exponential distribution with a parameter of 1. We assume that the choice model \( \phi \) follows a Multinomial Logit model where the preference weights of products \( (v_i)_{i \in N} \) are sampled from a uniform distribution over \([0,1]\).
We consider two different browsing distribution scenarios: Placement on a Line and Placement on a two-dimensional grid, as we describe next.

{\bf Placement on a Line:}
The line configuration corresponds to the case detailed in Section \ref{subsec:warmup} and illustrated in Figure \ref{fig:line}. Here, there are \( m \) locations arranged in a linear sequence. The browsing distribution is parameterized by \( m \) probabilities \( (\theta_j)_{1\leq j \leq m} \), where $\theta_j$ represents the likelihood of a customer visiting locations \( 1 \) (the topmost location) through \( j \). In our experiments, we generate \( \theta_j \) from a uniform distribution over \([0,1]\), followed by normalization to ensure that \( \sum_{j=1}^m \theta_j = 1 \). We perform the experiment with $n \in \{10,15,20\}$ and $m \in \{3,5,10,15,20\}$.

{\bf Placement on a Two-Dimensional Grid:}
We also consider a two-dimensional grid, where each location is represented by a point \((i,j)\), with \( i, j \in \{0, 1, \ldots, d\} \), for some non-negative integer $d$. 
The total number of locations is \( m = (d+1)^2 \). The browsing distribution is generated by performing a number \( \Gamma \) of random walks on the grid. Each customer randomly selects one of the $\Gamma$ walks to follow. The random walks obey the following rules: They begin at one of the four corners of the grid:  \((0,0)\), \((0,d)\), \((d,0)\), or \((d,d)\).
From any given location \((i,j)\), if \( i \notin \{0, d\} \) and \( j \notin \{0, d\} \), the walker can move to any of the four neighboring locations (up, down, left, right) with equal probability of $1/4$. If the walker is located on a boundary (e.g., \( i = 0 \)), movement is restricted to three possible directions, each with equal probability of $1/3$. The random walk terminates when the walker reaches a corner. An example of such a random walk is shown in Figure \ref{fig:random_walk}. 

This setup is inspired by the layout of a physical store, where the corners represent entrances and exits, and customers navigate through the store via random walks. {\color{black} We approximate the browsing distribution by generating 
$\Gamma$ random walks, each with equal probability.} 
For our numerical experiments, we choose $n \in \{10,15,20\}$ and  $ \Gamma =10$. We vary  $d \in \{1,2,3,4,5\}$ which means that $m \in \{4,9,16,25,36\}$.

\begin{figure}[h]
	\centering
	\includegraphics[width=0.5\linewidth]{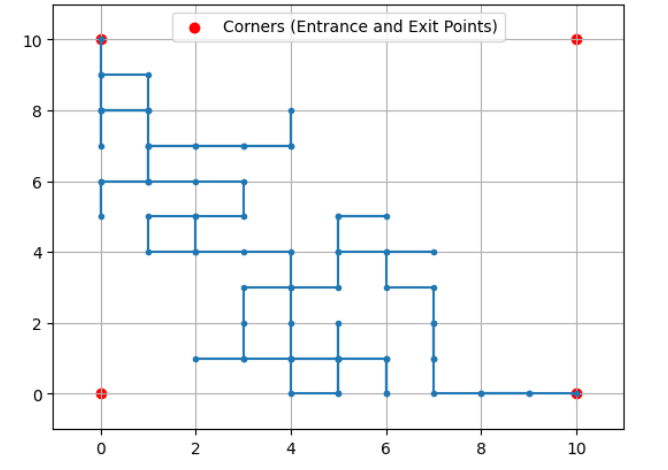}
	\caption{Example of a random walk on the two-dimensional grid.}
	\label{fig:random_walk}
\end{figure}

\subsection{Results}

For each value of $n$ and $m$, we sample 100 instances of the \place\ problem according to the distributions described above. For each instance, we solve the \place\ problem optimally by developing a mixed integer linear programming (MILP) formulation. The details of our formulation are provided in Appendix \ref{appendix:IP}. Note that the MILP formulation uses big-M constraints and can accommodate general browsing distributions. The formulation can be solved efficiently for relatively small instances but it takes a significant amount of time to solve larger ones. {\color{black} Note that because the browsing distribution is approximated via a finite set of sampled random walks, the MILP we solve is in fact an instance of a sample average approximation (SAA)–based integer program, i.e., the browsing distribution is represented by the empirical distribution over the sampled paths.}
We compare the optimal solution with the solution generated by our algorithms. For the line, we use the deterministic algorithm described in Section \ref{subsec:warmup}. For the two-dimensional grid, we use the general randomized algorithm described in Section \ref{subsec:general}. {\color{black} We use Gurobi 10.0.2 as the integer programming solver, and the numerical experiments were conducted in Python on a MacBook Pro equipped with a 2.3-GHz 8-core Intel Core i9 processor and 32 GB RAM.}

\vspace{2mm}
\noindent
\subsubsection{Placement on a Line.}\label{exp:line} The results are reported in Table \ref{tab:results_line}. The first and second columns in Table \ref{tab:results_line} present the values of $m$ and $n$. The third and fourth columns provide the ratio of the objective value obtained by our algorithm to the optimal objective value. Out of the 100 instances, we report the minimum ratio and the mean ratio. The fifth column shows the running time of our algorithm in seconds, while the sixth column reports the running time of the MILP formulation, also in seconds.

We observe that our algorithm performs quite well, achieving an average ratio of 0.97 or higher for all values of $m$ and $n$. Furthermore, the worst-case ratio is also very good, with a minimum value of 0.84. Note that, in theory, our algorithm achieves a worst-case ratio of $1/\log(m+1)$ for a line (as shown in Lemma \ref{line}), but the performance seems much better for the randomly  generated instances in our experiments. Additionally, our algorithm is extremely fast, with a running time of less than one second. In contrast, the running time of the MILP approach increases exponentially with $m$ and $n$, becoming significantly higher as the dimensions increase. For instance, it takes an average of 5 minutes to solve cases where $m = n = 20$, and for larger values, it exceeds practical time limits. This is due to the big-$M$ constraints and the large number of variables and constraints in the formulation.

\begin{table}[t]

	\centering
	\begin{tabular}{|c|c|cc|cc|}
		\hline
		\textbf{$n$} & \textbf{$m$} & \multicolumn{2}{|c|}{\textbf{Ratio}} & \multicolumn{2}{|c|}{\textbf{Time(s)}} \\ 
		&             & \textbf{Min} & \textbf{Mean}       &                        \textbf{ALG} & \textbf{OPT}                        \\ \hline
		\multirow{3}{*}{10} & 3  & 0.861 & 0.981 & 0.03 & 0.05 \\ \cline{2-6}
		& 5  & 0.842 & 0.983 & 0.06 & 0.18 \\ \cline{2-6}
		& 10 & 0.905 & 0.987 & 0.12 & 0.87 \\ \hline
		\multirow{4}{*}{15} & 3  & 0.866 & 0.978 & 0.04 & 0.10 \\ \cline{2-6}
		& 5  & 0.866 & 0.985 & 0.07 & 0.39 \\ \cline{2-6}
		& 10 & 0.895 & 0.986 & 0.15 & 6.45 \\ \cline{2-6}
		& 15 & 0.937 & 0.989 & 0.24 & 16.88 \\ \hline
		\multirow{5}{*}{20} & 3  & 0.854 & 0.975 & 0.04 & 0.15 \\ \cline{2-6}
		& 5  & 0.851 & 0.977 & 0.07 & 0.64 \\ \cline{2-6}
		& 10 & 0.909 & 0.986 & 0.17 & 27.72 \\ \cline{2-6}
		& 15 & 0.926 & 0.990 & 0.24 & 131.81 \\ \cline{2-6}
		& 20 & 0.942 & 0.992 & 0.36 & 305.86 \\ \hline
	\end{tabular}
	\caption{Results of experiments on Line graphs. For each $(n,m)$ pair, we report the minimum and mean performance of ALG relative to OPT and the average running times.}
	\label{tab:results_line}
\end{table}

\begin{table}[t]
    \centering
    {\color{black}
    \begin{tabular}{|c|c|c|cc|cc|c||cc|cc|}
        \hline
        \textbf{$n$} & \textbf{$d$} & \textbf{$m$}
        & \multicolumn{2}{c|}{\textbf{Ratio}}
        & \multicolumn{2}{c|}{\textbf{Time(s)}}
        & \shortstack{\textbf{\% Instances}}
        & \multicolumn{2}{c|}{\textbf{Occurrences}}
        & \multicolumn{2}{c|}{\textbf{Repet. Gain (\%)}} \\
        & & & \textbf{Min} & \textbf{Mean} & \textbf{ALG} & \textbf{OPT}
        & \textbf{$<$1H}  & \textbf{ALG} & \textbf{OPT} & \textbf{Mean} & \textbf{Max} \\ \hline
        \multirow{5}{*}{10}
        & 1 & 4  & 0.740 & 0.960 & 0.12 & 0.12 & 100 & 2.01 & 2.07 & 21.1 & 77.8 \\ \cline{2-12}
        & 2 & 9  & 0.858 & 0.958 & 0.14 & 1.40 & 100 & 3.93 & 2.94 & 31.5 & 63.1 \\ \cline{2-12}
        & 3 & 16 & 0.839 & 0.962 & 0.15 & 2.42 & 100 & 6.42 & 4.10 & 36.2 & 71.3 \\ \cline{2-12}
        & 4 & 25 & 0.882 & 0.968 & 0.16 & 2.02 & 100 & 9.86 & 5.42 & 42.0 & 129.7 \\ \cline{2-12}
        & 5 & 36 & 0.877 & 0.968 & 0.16 & 28.47 & 100 & 15.7 & 6.95 & 43.6 & 81.7 \\ \hline
        \multirow{5}{*}{15}
        & 1 & 4  & 0.841 & 0.960 & 0.19 & 0.25 & 100 & 1.88 & 1.96 & 18.1 & 39.8 \\ \cline{2-12}
        & 2 & 9  & 0.868 & 0.958 & 0.21 & 11.79 & 100 & 3.47 & 2.72 & 27.6 & 74.5 \\ \cline{2-12}
        & 3 & 16 & 0.897 & 0.958 & 0.22 & 18.08 & 100 & 5.55 & 3.84 & 33.2 & 65.3 \\ \cline{2-12}
        & 4 & 25 & 0.844 & 0.965 & 0.26 & 37.63 & 100 & 8.64 & 5.39 & 35.2 & 81.4 \\ \cline{2-12}
        & 5 & 36 & 0.855 & 0.964 & 0.26 & 27.05 & 100 & 13.3 & 6.80 & 37.6 & 79.9 \\ \hline
        \multirow{5}{*}{20}
        & 1 & 4  & 0.851 & 0.960 & 0.27 & 0.35 & 100 & 1.67 & 1.96 & 14.1 & 50.8 \\ \cline{2-12}
        & 2 & 9  & 0.877 & 0.953 & 0.30 & 24.27 & 100 & 3.29 & 2.61 & 24.4 & 49.1 \\ \cline{2-12}
        & 3 & 16 & 0.861 & 0.959 & 0.32 & 165.45 & 94  & 5.70 & 3.72 & 30.5 & 51.6 \\ \cline{2-12}
        & 4 & 25 & 0.897 & 0.967 & 0.33 & 134.46 & 89  & 8.41 & 4.91 & 30.7 & 49.0 \\ \cline{2-12}
        & 5 & 36 & 0.882 & 0.967 & 0.39 & 225.64 & 89  & 11.9 & 6.45 & 35.5 & 56.5 \\ \hline
    \end{tabular}
    \caption{Results of experiments on two-dimensional grids. For each $(n,d)$ pair, we report the minimum and mean performance of ALG relative to OPT, average running times, and the percentage of instances solved by the MILP within one hour. The last few columns show the extent of product repetition and the revenue gains from allowing repetition (see Section \ref{exp:repeat}).}
    \label{tab:results_random_walk}}
\end{table}

\vspace{2mm}
\noindent
\subsubsection{Placement on a Two-Dimensional Grid.}\label{exp:grid} The results  are reported in Table \ref{tab:results_random_walk}. The first, second, and third columns in Table \ref{tab:results_random_walk} present the values of $n$, $d$, and $m$. 
For each pair $(n, d)$, we run 100 instances of the placement problem. We observe that the running time of the MILP used to compute the optimal solution of the placement problem can be extremely high; therefore, we limit the running time to 1 hour. In the eight column of the table (\% Instances $<$1H), we report the percentage of instances that are solved to optimality within the 1-hour limit. For each instance where the optimal solution is computed within the time limit, we generate a solution using the algorithm described in Section \ref{subsec:general}, and then calculate the minimum and average ratios of the objective values over all such instances. Note that we execute our randomized algorithm once for each instance. The fourth and fifth columns of the table (Ratio Min and Ratio Mean) report these ratios: the minimum ratio and the mean ratio of the objective value obtained by our algorithm to the optimal objective value, evaluated over the instances solved to optimality. The sixth column (Time(s) \alg) shows the average running time of our algorithm in seconds. The seventh column (Time(s) \opt) reports the average running time of the MILP approach, also in seconds, with the average calculated only over the instances that are solved to optimality.

Our algorithm achieves an average ratio of 0.96 across all values of $n$ and $d$. In the worst case, the minimum ratio is 0.74, achieved for the smallest instance with $n=10$ and $d=1$. In fact, as we increase the problem size, it seems that the performance of the algorithm improves. Note that the ratios are slightly smaller than those for Placement on a Line, but they are still very high. In theory, our algorithm achieves a worst-case ratio of $(1 - 1/e)/\log(m+1)$ for the grid (as shown in Theorem \ref{randgen}), but we observe significantly better empirical performance. Furthermore, our algorithm is extremely fast, with a running time of less than one second. In contrast, the running time of the MILP approach increases exponentially with $m$ and $n$, becoming significantly higher as the dimensions increase. In fact, obtaining the optimal solution becomes extremely challenging as the dimensions grow. For example, while we are able to solve all instances for $n=10$ and $n=15$ within the 1-hour time limit, solving instances for $n=20$ becomes much harder. For instance, when $n=20$ and $d=5$, 11 instances did not converge to optimality within the 1 hour time limit. The MILP formulation, with its large number of variables and constraints and big-$M$ constraints, becomes impractical. This highlights the necessity of our algorithm, which can quickly compute good solutions.

{\color{black}
\vspace{2mm}
\noindent
\subsubsection{Value of Repetition in the Grid.}\label{exp:repeat} Recall that in our \place\ problem we allow products to be repeated in multiple locations. In the \placel\ setting, such repetition does not provide any benefit (see Remark~\ref{pline}). In the worst case, not allowing product repetition can lead to significantly lower revenue (see Example \ref{ex1}). To quantify the significance of repetition beyond the worst case, we numerically evaluate its effect in the grid setting.  

We first calculate for each product the number of times it is duplicated across locations (we refer to this as its number of occurrences), and we average this quantity over products with at least one occurrence. We compute this measure both for the optimal solution (OPT) and for the solution returned by our algorithm (ALG) and report these averages in the ninth and tenth columns of Table~\ref{tab:results_random_walk} (Occurrences \alg\ and Occurrences \opt).
The results show that both OPT and ALG exhibit duplication, with an average number of occurrences equal to  4.12 for OPT and 6.77 for ALG across all our instances.  As expected, when the number of locations increases while the number of products is fixed, we observe more duplication; conversely, increasing the number of products for a fixed number of locations reduces the extent of duplication. Furthermore, OPT systematically uses fewer duplicates than ALG. This is natural: OPT leverages global structural considerations to minimize unnecessary repetition, while ALG, by construction, samples uniformly from a candidate subset at each location, which tends to generate more duplicates.

To evaluate the value of repetition, we compare the optimal revenue from the \place\ problem (i.e., the optimal objective value of  \place\ obtained through the MILP in Appendix~E) with the optimal revenue of the same problem when repetition is disallowed. The latter is formulated by simply adding a constraint to the MILP in Appendix~E requiring that each product appear in at most one location; we denote its optimal objective value by $\text{OPT}_{\text{norep}}$. By construction, $\text{OPT} \geq \text{OPT}_{\text{norep}}$. We define the \emph{Repetition Gain} as
$
\frac{\text{OPT} - \text{OPT}_{\text{norep}}}{\text{OPT}_{\text{norep}}},
$
and report this quantity in the last two columns of Table~\ref{tab:results_random_walk} (Repet. Gain (\%)), both as an average over all 100 instances (Mean) and as the maximum observed value (Max).  
The results indicate that repetition provides a significant benefit. On average, the gain ranges from roughly 14\% to 43\%, depending on the values of $n$ and $m$, and in some cases the maximum gain can be much higher. For example, when $n=10$ and $d=4$, the repetition gain reaches 129\% in average, representing the highest value observed in our experiments. These findings confirm that, unlike in the line setting where duplicates play no role, repetition in the grid setting can substantially improve revenues by amplifying the placement of revenue generating products across multiple visible locations.

\vspace{2mm}
\noindent
\emph{Structure of Solutions:} To further understand the structure of the solutions, we look at how product prices influence the locations where products are placed and the number of occurrences. We report our results in Figures~\ref{fig:occ-scores} and~\ref{fig:vis-scores} in Appendix \ref{appx:exp}, and summarize our main findings here. 
Our experiments reveal several consistent patterns in the structure of \opt\ and \alg. Both  exhibit a clear ``revenue-ordered" structure: the highest-revenue products are duplicated most often and are preferentially assigned to the most visible locations, with duplication frequency and visibility decreasing as product revenue decreases. While this ordering is stable across different values of $n$ and $m$, ALG tends to employ more duplication than OPT, reflecting its randomized design. Overall, these findings confirm that both methods exploit the MNL model’s tendency to favor revenue-ordered assortments, with ALG trading off optimality for simplicity by duplicating high-revenue products more aggressively.

}

}


\section{Conclusion}
We study the problem of finding an optimal placement of substitutable products by introducing a model where customer choice is a two stage process. First, the customer browses the store and visits a subset of display locations according to a known browsing distribution. Then, the customer chooses from the assortment of products located at the visited locations according to a discrete choice model. We consider the problem for a general browsing distribution and general discrete choice model, a setting for which no approximation results were known in the literature (to the best of our knowledge). We leverage the existing results for cardinality constrained assortment optimization to obtain provably good placement of products in this general setting. In particular, we develop an approximation algorithm that uses an $\alpha$-approximation algorithm for cardinality constrained assortment optimization to generate a $\frac{\Theta(\alpha)}{\log_2 \rho}$-approximation for the placement problem for any browsing distribution. Our algorithm is randomized and we show that it can be derandomized for the Markov choice model. 
When products are homogeneously priced, we show that the problem is an instance of monotone submodular function maximization subject to a partition matroid constraint. This gives a $(1-1/e)$-approximation for our setting in the special case of homogeneous prices. We show that unless P=NP, no polynomial time algorithm can beat this guarantee.

Improving the $\frac{1}{\log_2 \rho}$ factor in our guarantee or showing that no efficient algorithm has a better guarantee remains a challenging open problem. Our model assumes that the browsing distribution does not depend on the placement solution. This may not be true in general, especially in settings such as cultural institutions~(\cite{cultural}), and finding an approximation algorithm without this assumption remains a challenging open problem.

{
\bibliographystyle{informs2014.bst}
\bibliography{assortbib}}

\newpage
\begin{APPENDICES}

{
	\section{Missing Proofs from Section \ref{sec:prelim}}\label{appx:linehard}

}	

\subsection{Proof of Lemma \ref{est}} \label{apx:est}
    
	\begin{proof}{Proof of Lemma \ref{est}.}
		Given a placement $X$, consider $T$ independent samples obtained from the sampling oracle. Let $Y_t$ denote the expected revenue of sample $t\in [T]$. Let \[R^*=\max_{S\subseteq N,\, |S|\leq m} R(S).\] 
		Note that $Y_t\leq R^*$, as $R^*$ is the (expected) revenue of a revenue optimal assortment of size at most $m$. Let $\hat{\mathcal{R}}=\frac{1}{T}\sum_{t\in [T]} Y_t$. Notice that the random variables $\{\frac{Y_t}{R^*}\}_{t\in [T]}$ have range $[0,1]$. Applying Hoeffding's bound (Lemma \ref{chernoff}) on the random variables $\{\frac{Y_t}{R^*}\}_{t\in [T]}$ with $t$ equals $\eta T$, we get
		$$
		\mathbb{P}\left(\Bigg|\sum_{t \in [T]} \frac{Y_t}{R^*} -  \sum_{t \in [T]}  \frac{\mathbb{E}[Y_t]}{R^*}\Bigg|\leq \eta T\right)\geq 1-2e^{-2T\eta^2}\qquad \forall \eta\geq 0,
		$$
		which is equivalent to
		\begin{equation}\label{chern1}
			\mathbb{P}\left(|\hat{\mathcal{R}}-\mathcal{R}(X)|\leq \eta R^*\right)\geq 1-2e^{-2T\eta^2}\qquad \forall \eta\geq 0.
		\end{equation}
		Substituting $T=\frac{m^2\log \delta^{-1}}{2\epsilon^2}$ and $\eta=\epsilon/m$ in \eqref{chern1} and using the fact that $R^*\leq m\opt$, we have
		\[   \mathbb{P}\left(|\hat{\mathcal{R}}-\mathcal{R}(X)|\leq \epsilon\, \opt\right)\geq 1-2\delta,\]
		as desired. 
		\hfill\Halmos\end{proof}
	\begin{lemma}\label{chernoff}{(\cite{hoeffding1994probability})}
		Let $X_1,X_2,\cdots, X_T$ be independent random variables with range $[0,1]$ and let $\mu$ denote the expected value of the sum $\sum_{i=1}^T X_i$. Then, for all $t>0$, 
		\[\mathbb{P}\left(\bigg|\sum_{i=1}^T X_i-\mu \bigg|\leq t\right)\geq 1-2e^{-2t^2/T}.\]
	\end{lemma}

 \subsection{Proof of Lemma \ref{err}}    \label{apx:err}
	\begin{proof}{Proof of Lemma \ref{err}.}
		The following chain of inequalities proves the claim,
		\[\mathcal{R}(\hat{X}^*)+\epsilon\, \opt\geq \hat{\mathcal{R}}(\hat{X}^*)\geq \hat{\mathcal{R}}(X^*)\geq \mathcal{R}(X^*)-\epsilon\, \opt.\]
		The first and last inequalities follow from Corollary \ref{coro} where we show  that $\hat{\mathcal{R}}(\cdot)$ is a good estimate of $\mathcal{R}(\cdot)$ for every placement in the set $X_1,X_2,\cdots, X_t$. The second inequality follows by definition of $\hat{X}^*$. 
		
		\hfill\Halmos		\end{proof}

	\section{Missing Proofs from Section \ref{sec:homo}}\label{appx:homo}

    \subsection{Proof of Theorem \ref{uniform}} \label{apx:uniform}
	\begin{proof}{Proof of Theorem \ref{uniform}.}
		Recall $N$ is the set of products and $G$ is the set of locations.
		Consider the expanded ground set $N\times G=\{(i,j)\mid i\in N, j\in G\}$. A placement $X$ corresponds to set $\left\{(X(j),j)\mid j\in G \right\}\subseteq N\times G$. Given a set $U\subseteq N\times G$, for each $j \in G$, we define the following set of products
		\[U(j)=\{  i\in N     \mid (i,j)\in U \}.\] 
		Consider the matroid with family of independent sets $\mathcal{F}=\{U\mid\, |U(j)|\leq 1\,\, \forall j\in G\}$. Each placement $X$ corresponds to a unique set in $\mathcal{F}$. Similarly, given a set $U\in \mathcal{F}$, we have a unique placement solution where location $j\in G$ has product $i$ if $(i,j)\in U$ for some $i$, and $j$ has no product otherwise.
		Finally, given an arbitrary set $U\subseteq N\times G$, let
		\begin{equation}\label{decompt}
			\mathcal{R}(U)=\sum_{L \in 2^G}  \mathbb{P}_B(L) \,\, R\left(\cup_{j \in L} U(j)\right).
		\end{equation}
		
		The placement problem is equivalent to the following matroid constrained subset selection problem on ground set $N\times G$, 
		\[\max_{U\in \mathcal{F}} \mathcal{R}(U).\]
		It remains to show that $\mathcal{R}(\cdot)$ is monotone and submodular. Fix a subset of locations $L \in 2^G$ and  define the function $f_L(U)= R(\cup_{j \in L} U(j)) $ for any $U \subseteq N \times G$. We know from equation \eqref{decompt} that $\mathcal{R}(.)$ is a linear combination of $f_L$ for $L \in 2^G$. So it suffices to show that $f_L$ is monotone submodular.
		
		Let $U \subseteq  N \times G$ and $ V \subseteq N \times G$. If $U \subseteq V$, then for any $L \in 2^G$, we have $\cup_{j \in L} U(j) \subseteq \cup_{j \in L} V(j)$. Therefore, by monotonicity of $R(.)$  from Lemma \ref{submod}, we get 
		$$f_L(U) =  R\left(\cup_{j \in L} U(j)\right) \leq  R\left(\cup_{j \in L} V(j)\right) =f_L(V),$$ which implies that  $ f_L$ is monotone. Now let us show submodularity. Consider, $(t,s) \in N \times G    $. If $s \notin L$, then we clearly have
		$f_L(U \cup (t,s))= f_L(U )$ and $f_L(V \cup (t,s))= f_L(V )$. Now, if $s \in L$,  by submodularity of $R(.)$ from Lemma \ref{submod}, we get
		\begin{align*}
			f_L(U \cup (t,s))- f_L(U ) & =  R \left(\cup_{j \in L} U(j) \cup t    \right) - R\left(\cup_{j \in L} U(j)\right) \\ &\geq R \left(\cup_{j \in L} V(j) \cup t    \right) - R\left(\cup_{j \in L} V(j)\right) \\
			&=  f_L(V \cup (t,s))- f_L(V ) ,   
		\end{align*}
		which implies that $f_L$ is submodular.

		Therefore, we can obtain  $(1-1/e)$ approximation for the \place\ problem using submodular maximization subject to a matroid constraint~(\cite{calinescu2011maximizing, badanidiyuru2014fast, filmus2012tight}). For submodular function maximization, it is well known (for example, see \cite{calinescu2011maximizing}) that when we use estimates $\hat{f}_L(U)\in [f_L(U)-\epsilon \opt, f_L(U)+\epsilon \opt]$ of the function value instead of exact value then the final output is $(1-1/e)-O(n\epsilon)$ approximate. Thus, replacing the exact revenue oracle with sampling based estimates discussed in Section \ref{sec:sample} gives us a reduced guarantee of $(1-1/e)-O(n\epsilon)$. Let $\text{poly}(n,m)$ denote a polynomial in $n$ and $m$. From Lemma \ref{est}, using $O(\text{poly}(n,m)\log \delta^{-1})$ samples from the sampling oracle gives us estimates with error parameter $\epsilon=o(1)/n$, with probability at least $1-\delta$, resulting in approximation guarantee of $(1-1/e)-o(1)$. 
		\hfill\Halmos		\end{proof}

	    \subsection{Proof of Corollary \ref{coro:one}} \label{apx:coro}
	\begin{proof}{Proof of Corollary \ref{coro:one}.}
		We partition $N$ into $\Delta=\floor{\log \frac{r_{\max}}{r_{\min}}}+1$ disjoint sets $\{S_1,S_2,\cdots, S_{\Delta}\}$ such that,
		\[S_j=\{i\mid 2^{j-1} r_{\min}\leq r_i < 2^{j}r_{\min}\}\qquad \forall j\in[\Delta].\]
		For $j\in[\Delta]$, let $\mathcal{X}_j$ denote the set of all placements using only the products in $S_j$ and let $\bar{r}_j=2^{j}r_{\min}$. Since every product in $S_j$ has a price in the range $[0.5\bar{r}_j,\bar{r}_j)$, from Theorem \ref{uniform} 
		we have an algorithm for finding a placement solution $Y_j\in \mathcal{X}_j$ such that,
		\begin{equation}\label{given}
			\mathcal{R}(Y_j)\geq 0.5(1-1/e)\max_{X\in \mathcal{X}_j} \mathcal{R}(X),
		\end{equation}
		here the additional factor of $0.5$ comes from the fact that $r_e\geq 0.5 \bar{r}_j\,\, \forall e\in S_j$.
		Now, let $X^* \in \mathcal{X}$ denote the optimal placement of products in $N$. We will use the standard fact that the revenue function $R(\cdot)$ is subadditive for a monotone and substitutable choice model, i.e.,
		\[R(A\cup B)=\sum_{i\in A\cup B}r_i\phi(i,A\cup B)\leq\sum_{i\in A} r_i \phi(i,A)+\sum_{i\in B}r_i \phi(i,B)= R(A)+R(B)\qquad \forall A,B\subseteq N.\]
		Using this in combination with \eqref{given}, we have
		\begin{eqnarray*}
			\mathcal{R}(X^*)&=&\sum_{L\in 2^G} \mathbb{P}_{B}(L)\,\, R(X^*(L)),\\
			&\leq & \sum_{L\in 2^G} \mathbb{P}_{B}(L)\,\, \left[\sum_{j\in[\Delta]} R(S_j\cap X^*(L))\right],\\
			&= &\sum_{j\in[\Delta]}\left[\sum_{L\in 2^G} \mathbb{P}_{B}(L)\,\, R(S_j\cap X^*(L))\right],\\
			&\leq &\sum_{j\in[\Delta]} \max_{X_j\in \mathcal{X}_j} \mathcal{R}(X_j),\\
			&\leq &\frac{1}{0.5(1-1/e)}\sum_{j\in[\Delta]} \mathcal{R}(Y_j) ,\\
			&\leq &\frac{\Delta}{0.5(1-1/e)} \cdot \max_{j\in [\Delta]}\mathcal{R}(Y_j).
		\end{eqnarray*}
		Therefore, the best placement out of $Y_j$ for $j \in [\Delta]$ gives  $\frac{O(1)}{\log \frac{r_{\max}}{r_{\min}}}$-approximation for  \place .
		\hfill\Halmos		\end{proof}

	\subsection{ Proof of Theorem \ref{thm:hardness}}\label{appx:hard}

	
	\begin{proof}{Proof of Theorem \ref{thm:hardness}.}
		
		We use a reduction from the maximum coverage problem in a similar fashion that has been used to show the inapproximability result for the {\em customized assortment problem} in \cite{el2023joint}. The problem in \cite{el2023joint} is slightly different but the proof can be adjusted to the setting of our theorem. We are providing the details below for completeness. We start by presenting the maximum coverage problem.

		\noindent
		{\em{Maximum coverage problem.}} Given elements $\{1,2,\ldots, q\}$ and sets $\{ S_1,S_2, \ldots,  S_n \}$ with \mbox{$S_j \subseteq \{1,2,\ldots,q\}$} for each $j=1,\ldots,n$, we say that set $S_j$ covers element $i$ if $i \in S_j$. For a given integer $K$,
		the goal of the maximum coverage problem is to find at most $K$ sets such that the total number of covered elements is maximized. This problem is NP-hard to approximate within a factor better than $(1- \frac{1}{e})$ unless $\text{P} = \text{NP}$ (\cite{feige1998threshold}).

		Fix $\epsilon>0$. Consider an instance of the maximum coverage problem with sets $\{S_1,\ldots,S_n\}$, elements $\{1,\ldots,q\}$ and an integer $K$. 
		We construct an instance of cardinality constrained assortment optimization  as follows. The choice model is given by MMNL. There are $n$ products and $q$ customer types. We use $N$ to denote the set of products and $[q]$ to denote the set of customer types. The products correspond to the sets $\{S_1,\ldots,S_n\}$ and the customers types correspond to the elements $\{1,\ldots,q\}$. All the products have the same price $r_i=1$ for all $i \in N$.  Denote $v_{ij}$ the preference weight of product $i$ for customer type $j$ under the MMNL model. Let $M=\frac{1}{\epsilon}-1$. For all $i\in N$ and $j \in [q]$, let
		$$
		v_{ij} = \left\{
		\begin{array}{ll}
			M  & \mbox{if }  j \in S_i \\
			0  & \mbox{otherwise.}
		\end{array}
		\right.
		$$
		The arrival probabilities of the customer types are given by $\theta_j=1/q$ for all $j \in [q]$. Note that the cardinality constrained assortment optimization problem under MMNL with uniformly priced products is given by 
		\begin{equation*}\label{mmnl} \tag{\sf MMNL}
			\begin{aligned}
				\max_{S \subseteq N, \;\vert S \vert \leq K \; }  \; \;   \frac{1}{q}   \sum_{j\in [q]}    \frac{  \sum_{i \in S}   v_{ij} }{1+\sum_{i \in S}  v_{i j} }.
			\end{aligned}
		\end{equation*}

		Suppose there exists a maximum coverage solution with an objective value $z$, i.e., there are $z$ covered elements. We construct a solution for \ref{mmnl} using exactly the $K$ products corresponding to the $K$ sets in this  maximum coverage solution. There are $z$ customer types that are covered by these sets. Hence, for each customer type $j$ among these $z$ customer types, there exists a product~$i$ such that $v_{ij}=M$. Therefore, we get at least an expected revenue of $\frac{M}{1+M}$ from each of these customer types.
		Let $R$ be the objective value of our solution for \ref{mmnl}. We have
		$$ R \geq \frac{1}{q} \cdot  \frac{M}{1+M} \cdot z = \frac{1}{q} \cdot (1-\epsilon) z.$$
		
		Now, let us consider a solution of \ref{mmnl}. Without loss of generality, the solution has $K$ products and let  $R$ be its objective value. We construct a feasible solution for the maximum coverage problem by choosing exactly the   sets corresponding to the $K$ products in the solution of \ref{mmnl}. Let  $z$ be the resulting total number of covered elements, and for each $j \in [q]$, let $\gamma_j$ be the number of sets that cover element $j$.
		If an element $j$ is not covered, then $\gamma_j =0$ and the expected revenue of the customer type $j$ is 0. Otherwise, if $\gamma_j \geq 1$, then the expected revenue from   customer type $j$  is given by $  \frac{\gamma_j M}{1+\gamma_j M}  \leq 1$. 
		Therefore, the objective value of our solution for \ref{mmnl} is given by
		$$R = \frac{1}{q} \sum_{j\in [q]} \frac{\gamma_j M}{1+\gamma_j M}    \leq \frac{1}{q} \cdot z .$$

		Consequently, a $\beta$-approximation for the \place\ problem implies a $\beta (1-\epsilon)$-approximation for the maximum coverage problem.
		We know that unless $\text{P} = \text{NP}$, it is NP-hard to approximate the maximum coverage problem with a factor better than $(1-\frac{1}{e})$ (\cite{feige1998threshold}). Thus, it is NP-hard to approximate the \place\ problem within a factor better than $(1- \frac{1}{e})/(1-\epsilon) \geq (1- \frac{1}{e})(1+\epsilon)  $ for any $\epsilon>0$. By rescaling $\epsilon$, we conclude it is NP-hard to approximate the \place\ problem within a factor better than $(1- \frac{1}{e}+ \epsilon)  $ for any $\epsilon>0$.

		\hfill\Halmos	\end{proof}

\section{Missing Proofs from Section \ref{sec:hetero}} \label{appx:hetero}

   \subsection{Proof of Lemma \ref{lem:hardness} } \label{apx:lem:hardness}
\begin{proof}{Proof of Lemma \ref{lem:hardness}.}
\cite{gallego2020approximation} formulate a product framing problem for MNL choice model and show that it is weakly NP-hard (see Theorem 1 in \cite{gallego2020approximation}). To show that \placel\ is NP-hard, it suffices to reduce product framing to \placel.

To distinguish the two settings, we refer to locations in the framing problem as \emph{pages}. A page may carry multiple products. Consider an arbitrary instance of framing with $m$ pages on a line, each with up to $p$ products, probability $\theta_j$ that the customer sees all products on the first $j$ pages (for $j\in [m]$), and an MNL choice model $\phi$. In the corresponding instance of \placel\ we have $m\, p$ locations and choice model $\phi$. The browsing distribution has support only on sets $[j\, p]$ for $j\in [m]$. In particular, the probability that customer visits the first $jp$ locations equals $\theta_j$ (for $j\in [m]$). This instance of \placel\ is not immediately equivalent to the instance of the framing problem because in framing: $(i)$ each product is placed on at most one page and $(ii)$ some pages may be left empty. 

Let $X_f$ denote an optimal solution for the framing problem with revenue $R_f$. Similarly, let $X_l$ denote an optimal solution for \placel\ with revenue $R_l$. It suffices to show that, $(a)$ one can (in polynomial time) convert $X_f$ into a feasible solution for \placel\ with revenue at least $R_f$, and $(b)$ the solution $X_l$ can be converted (in polynomial time) into a feasible solution for framing with revenue at least $R_l$. 

First, we convert $X_f$ to a solution for \place\ by placing the products on page $j$ at locations $ \{j\,p+1, j\, p+2,\cdots, (j+1)\,p\}$ (the exact order of placement does not affect revenue). This is a feasible solution for \place\ since each location has at most one product. 

We can convert $X_l$ into a solution for framing by placing on page $\ell$ the (up to) $p$ products in locations $\{\ell\,p+1,\ell\,p+2,\cdots, (\ell+1)p\}$. While this solution has revenue $R_l$, it may not be feasible for the framing problem because some products may be repeated. Therefore, we need to convert $X_l$ into a placement solution $X_{l^*}$ where no product is repeated and $R_{l^*}\geq R_{l}$.
We obtain $X_{l^*}$ by starting with $X_l$ and for each product that is repeated in $X_l$, we remove the product from all locations except the first location that the product is shown. We claim that $R_{l^*}=R_{l}$, i.e., removing product copies does not affect the revenue. To see this, suppose that a product is placed at two locations $j_1$ and $j_2$ where $j_2$ is below location $j_1$ on the line. Since every customer who visits $j_2$ also visits $j_1$, removing the product from $j_2$ does not change the revenue. 

\hfill\Halmos	\end{proof}



\subsection{Proof of Lemma \ref{single}}

\begin{proof}{Proof of Lemma \ref{single}.} 
Fix $k \in [m]$. First, let us show that the optimal placement of $S^*_k$ is at most $ \frac{O(1)}{k} \cdot \opt$ in the worst case. 
Consider the following instance on a line and choice given by MNL model. We use $v_i$ to denote the preference weight of product $i$ under the MNL model. We have $m$ locations in the line: $\{1,2,\ldots,m\}$ with $1$ being the topmost. Recall $\theta_j$ is the probability of visiting the subset of locations $(1:j)$. Let $\theta_1=1$ and $\theta_j=0$ for $j\in \{2,\ldots,m\}$. 
Let $N$ be a universe of $k+1$ distinct products with $N=U \cup \{q\}$. There are $k$ products in $U$ where for all $i \in U$, $r_i=k$  and $v_i=1/k$. Additionally, $r_q=k/2$ and $v_q=1$. Under MNL, we can compute optimally cardinality-constrained assortments.  So,  $S^*_k$ here denotes the optimal assortment of size $k$ under MNL.
First, let us show that $S^*_k=U$. We have $R(U)=k/2$. Moreover, let $p<k$ and let $S \subseteq U $ such that $S$ has $p$ products. Since all products in $U$ are similar, it does not matter which $p$ products we pick. We have
$R(S \cup \{q\})= \frac{p+k/2}{1+p/k +1}$, which, for $p<k$, is increasing in $p$. This implies that  $R(S \cup \{q\})< k/2$. We can also verify that $R(S )< k/2$ for any $p$. Therefore, we have that $S_k^*=U$. Now consider $X_1$ a placement of $U$ on the line. Note that all locations have $0$ visiting  probability except the first one. Since, all products in $U$ are similar, it doesn't matter which product we place in location $1$, we get the following revenue
$$\mathcal{R}(X_1) =  \frac{1}{1+1/k}= \frac{k}{k+1}.$$
On the other hand, denote $X_2$ a placement of product $q$ in the first location. The rest of locations are empty. We have
$$\mathcal{R}(X_2) =  R(\{q\})=  \frac{k/2}{1+1}= \frac{k}{4}.$$
Therefore, for any placement $X_1$, we have $\mathcal{R}(X_1) = \frac{O(1)}{k}  \cdot \mathcal{R}(X_2)    \leq \frac{O(1)}{k} \cdot \opt $.


Second, let us show that the optimal placement of $S^*_k$ is at most $O(1) \cdot \frac{k}{m} \cdot \opt$ in the worst case.  Consider the following instance on a line and choice given by MNL model. Let $\theta_j=1/m$ be the probability of  visiting $(1:j)$. 
Consider $N$ a universe of $m$ distinct products such that for all $i \in N$,  $r_i=m$  and $v_i=1/m$. Clearly $S_k^*=\{1,\ldots,k\}$ and $S_m^*=N$. 
We have
$R(S_k^*)=k/(1+k/m)$ and $R(S_m^*)=m/2$. 
Let $X_1$ be a placement of products $\{1,\ldots,k\}$ on the line. The expected  revenue of placement $X_1$ is given by 
\begin{align*}
	\mathcal{R}(X_1) = \sum_{j=1}^m \theta_j R(X_1(1:j))   = \sum_{j=1}^k \theta_j R(X_1(1:j)) +  \sum_{j=k+1}^m \theta_j R(X_1(1:j)).
\end{align*}
We have for any $j \leq k$, $R(X_1(1:j)) \leq  R(\{1,\ldots,j\})=j/(1+j/m) \leq j$. We have for any $   j \geq k+1$, $R(X_1(1:j)) \leq  R(\{1,\ldots,k\})=k/(1+k/m) \leq k$. Therefore,
\begin{align*}
	\mathcal{R}(X_1)  \leq \frac{1}{m} \left( \sum_{j=1}^k j +  \sum_{j=k+1}^{m} k    \right)= \frac{k(k+1)}{2m}+ \frac{(m-k)k}{m}= O(1) k.
\end{align*}

On the other hand, by considering $X_2$, the placement where product $j \in N$ is placed at location $j$ for every $j\in N$, we get 

$$\mathcal{R}(X_2)= \sum_{j=1}^m \theta_j R(\{1,\ldots,j\}) = 1/m \sum_{j=1}^m \frac{j}{1+j/m} \geq 1/m \sum_{j=1}^m \frac{j}{2} =\frac{m+1}{4}.$$
Therefore, for any placement $X_1$, we have $\mathcal{R}(X_1) = O(1) \cdot \frac{k}{m}   \cdot \mathcal{R}(X_2)   \leq O(1) \cdot \frac{k}{m} \cdot \opt $. 

\hfill\Halmos		
\end{proof}

\subsection{Proof of Lemma \ref{helper}} \label{apx:lem:helper}
\begin{proof}{Proof of Lemma \ref{helper}.} 
Let $\onee(i\in S^r_j)$ be an indicator random variable that is 1 if $i\in S^r_j$ and 0 otherwise. Using weak-rationality of $\phi$ (Assumption \ref{ration}), we have,
\[\ep[R(S^r_j)]=\sum_{i\in S^*_j}r_i \ep[\onee(i\in S^r_j)\,\phi(i,S^r_j)]\geq \sum_{i\in S^*_j} r_i \ep[\onee(i\in S^r_j)]\phi(i,S^*_j).\]
We claim that, $\ep[\onee(i\in S^r_j)]= 1-\left(1-\frac{1}{j}\right)^j$. Note that $\ep[1-\onee(i\in S^r_j)]$ is the probability that $i$ is never sampled when we independently sample (with replacement) $j$ products from $S^*_j$. Observe that,
\[\ep[1-\onee(i\in S^r_j)]=\left(1-\frac{1}{j}\right)^j,\]
here $1-\frac{1}{j}$ is the probability that a randomly sampled product from $S^*_j$ is not $i$. Note that we have equality due to the fact that $|S^*_j|=j$ (as discussed in Section \ref{sec:oracle}). Finally, note that $$1-\left(1-\frac{1}{j}\right)^j\geq 1-\frac{1}{e}\qquad \forall j\geq 1.$$ This completes the proof.
\hfill\Halmos		\end{proof}


\subsection{Proof of Lemma \ref{randline}} \label{apx:lem:randline}
\begin{proof}{Proof of Lemma \ref{randline}.}		

The proof mimics the main steps in the proof of Lemma \ref{line}. The main difference is that we lose a constant factor in the guarantee since some products may be repeated due to randomized placement (Lemma \ref{helper}). Let $O:G\to N$ denote an optimal placement, i.e., $\mathcal{R}(O)=\opt$. Let $O(j_1:j_2)$ denote the set of products placed in locations $\{j_1,j_1+2,\cdots, j_2\}$. 
{  For any given $\ell\in[\log_2 (\rho)]$, consider the placement solution where we place products $O(2^{\ell-1}j_{\min}:2^{\ell}j_{\min}-1)$ between $2^{\ell-1}j_{\min}$ and $2^\ell j_{\min} -1$, same as $O$, and keep all other positions vacant. Let $O_{\ell}$ denote this solution.
	From the proof of Lemma~\ref{line}, we have,
	\[\mathcal{R}(O)\leq \sum_{\ell=1}^{\log_2 (\rho)} \mathcal{R}(O_\ell).\]
	Consider some $\ell  \in [\log_2 (\rho)]$. As in Lemma~\ref{line}, we define the subset of locations  $G_{\ell,k}$ as 
	\[G_{\ell,k}= (2^{\ell-1}j_{\min}:\min\{k,2^{\ell}j_{\min}-1\}) .\]
	The total revenue of random placement $X^r_{2^{\ell-1} j_{\min}}$ can be lower bounded by the revenue of placement $O_\ell$ as follows,
	\begin{eqnarray*}
		\mathcal{R}(O_\ell)&= &\sum_{j=2^{\ell-1}j_{\min}}^{2^{\ell}j_{\min}-1}r_{O(j)} \sum_{k=j}^{j_{\max}} \theta_k \phi\left(O(j), O(G_{\ell,k}) \right),\\
		&= & \sum_{k= 2^{\ell-1}j_{\min}}^{j_{\max}} \theta_k\, R(O(G_{\ell,k})),\\
		&\leq& \frac{1}{\alpha}\sum_{k= 2^{\ell-1}j_{\min}}^{j_{\max}} \theta_k\, R\left(S^*_{2^{\ell-1}j_{\min}}\right),\\
		&\leq & \frac{1}{\alpha(1-1/e)} \sum_{k= 2^{\ell-1}j_{\min}}^{j_{\max}} \theta_k\, \ep[R\left(S^r_{2^{\ell-1}j_{\min}}\right)],\\
		&=&\frac{1}{\alpha(1-1/e)} \ep[\mathcal{R}(X^r_{2^{\ell-1} j_{\min}})],
	\end{eqnarray*}
	where the first inequality follows from \eqref{eq:approx} and the second inequality follows from Lemma \ref{helper}.
	Thus,
	\[\alpha(1-1/e) \mathcal{R}(O)\leq \sum_{\ell=1}^{\log_2(\rho)}\ep[\mathcal{R}(X^r_{2^{\ell-1}j_{\min}})].\]}
\hfill\Halmos		\end{proof}

\subsection{Proof of Theorem \ref{thm:tigh}}\label{appx:tight}


To prove Theorem \ref{thm:tigh} we provide the following three technical lemmas. 

\begin{lemma} \label{lem:Skstar}
For each $k \in [m]$, the optimal assortment $S^*_k$ of size at most $k$ in our instance is the set of the $k$ products with preference weights $v_k$. 
\end{lemma}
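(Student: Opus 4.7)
The plan is to invoke the standard first-order characterization for cardinality-constrained MNL and verify it directly for the claimed set. Under MNL, a subset $S$ with $|S|\le k$ is optimal with value $z^* = R(S)$ if and only if $S \in \arg\max_{|T| \le k} \sum_{i \in T} v_i (r_i - z^*)$ and this maximum equals $z^*$. Consequently, $S_k^*$ must consist of (any) $k$ items with the largest values of $\psi_i := v_i(r_i - z^*)$, provided at least $k$ items have $\psi_i > 0$. The proof thus reduces to (i) computing $z^*$ under the conjectured structure of $S_k^*$, and (ii) ranking all products by $\psi_i$ to confirm the claimed $k$ products sit on top.

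For step (i), plugging in the $k$ products with weight $v_k = \epsilon/k$ and revenue $(k/\epsilon)^{1/(1+\epsilon)}$ gives
\[z^* \;=\; R(S_k^*) \;=\; \frac{k\cdot (k/\epsilon)^{1/(1+\epsilon)}\cdot (\epsilon/k)}{1+\epsilon}\;=\;\frac{\epsilon^{\epsilon/(1+\epsilon)}\,k^{1/(1+\epsilon)}}{1+\epsilon}.\]
For step (ii), across the $v_j$-family, the $\psi$-value is $f(j) := (\epsilon/j)\bigl[(j/\epsilon)^{1/(1+\epsilon)}-z^*\bigr]$, which I would treat as a function of real $j > 0$. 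Solving $f'(j) = 0$ yields $j^{1/(1+\epsilon)} = (1+\epsilon)\,z^*/\epsilon^{\epsilon/(1+\epsilon)}$, and this evaluates to exactly $j = k$ under the value of $z^*$ above. Since $f$ is smooth, tends to $-\infty$ as $j \to 0^+$ and to $0$ from below as $j \to \infty$, this critical point is the unique global maximum, so every $v_j$-family product with $j\ne k$ has strictly smaller $\psi$. Direct substitution also gives $k\,f(k) = z^*$, closing the consistency condition of the characterization.

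The remaining step, which is the main obstacle, is to rule out the $u_j$-family products, whose $\psi$-value is
\[g(j)\;:=\;\frac{1}{j\log^{1+\epsilon}j}\bigl[j^{1/(1+\epsilon)}\log j - z^*\bigr]\;\le\;\frac{1}{j^{\epsilon/(1+\epsilon)}\log^\epsilon j}.\]
Comparing this upper bound with $f(k)=z^*/k=\Theta\bigl(\epsilon^{\epsilon/(1+\epsilon)}\,k^{-\epsilon/(1+\epsilon)}\bigr)$ reduces the required inequality $g(j)\le f(k)$ to a clean form of the type $(k/j)^{\epsilon/(1+\epsilon)}\le C_\epsilon\,\log^\epsilon j$ for all $j\in[m]$, where $C_\epsilon$ depends only on $\epsilon$. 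This captures the design intent of Instance I: the $u_j$ products are \emph{just barely} suboptimal, with the logarithmic factor in their preference weights providing precisely enough slack for the $v_k$-family to dominate. I would verify the inequality by splitting into the regimes $j\le k$ and $j>k$, and where the loose upper bound is insufficient, fall back to the exact form of $g(j)$ (which can be negative when $j^{1/(1+\epsilon)}\log j<z^*$). The heart of the argument is the clean first-order condition that simultaneously pins down $z^*$ and $j=k$ for the $v_j$-family; the $u_j$-family comparison is the technically delicate piece.
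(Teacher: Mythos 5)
Your route---the parametric fixed-point characterization of the cardinality-constrained MNL optimum, i.e., that $S$ is optimal iff it maximizes $\sum_{i\in T}v_i(r_i-z^*)$ over $|T|\le k$ and the maximum equals $z^*$---is a legitimate and genuinely different strategy from the paper's. The paper instead bounds $R(S)$ directly for an \emph{arbitrary} assortment $S$ of size at most $k$: writing $R(S)=\bigl(\sum_i\beta_i^{\epsilon/(1+\epsilon)}\bigr)/\bigl(1+\sum_i\beta_i\bigr)$ and applying the weighted AM--GM inequality $\beta_i+\tfrac{1}{k}\ge\tfrac{1+\epsilon}{\epsilon}\,\beta_i^{\epsilon/(1+\epsilon)}(\epsilon/k)^{1/(1+\epsilon)}$ termwise, it obtains $R(S)\le\tfrac{\epsilon}{1+\epsilon}(k/\epsilon)^{1/(1+\epsilon)}$ for \emph{every} $S$ in one stroke---both product families handled simultaneously---and then checks that the claimed set attains the bound. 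Your step (i) and the $v_j$-family analysis in step (ii) are correct (one small slip: $f(j)\to 0^+$, not $0^-$, as $j\to\infty$; the conclusion that $j=k$ is the global maximizer survives because $f(k)=z^*/k>0$).

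The genuine gap is exactly the step you flag as the main obstacle: the $u_j$-family comparison $g(j)\le f(k)$ is never established, and the plan you sketch does not close as stated. The reduced inequality $(k/j)^{\epsilon/(1+\epsilon)}\le C_\epsilon\log^{\epsilon}j$ is \emph{false} whenever $j\ll k$ (the left side grows with $k/j$ while the right side is bounded), and ``fall back to the exact form of $g(j)$'' is not an argument---for $j$ around $k/\log^{1+\epsilon}k$ the exact $g(j)$ is positive and the comparison is genuinely tight, so regime-splitting with the loose bound will not finish it. The clean way to close your argument: for \emph{any} product of weight $v$ in the universe, $\psi(v)=v^{\epsilon/(1+\epsilon)}-vz^*$, and the required bound $\psi(v)\le z^*/k$ with $z^*=\tfrac{\epsilon^{\epsilon/(1+\epsilon)}k^{1/(1+\epsilon)}}{1+\epsilon}$ rearranges, after substituting $w=vk/\epsilon$, to $(1+\epsilon)w^{\epsilon/(1+\epsilon)}\le 1+\epsilon w$, which is precisely weighted AM--GM with equality iff $w=1$, i.e., $v=\epsilon/k$. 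This handles both families uniformly and completes your proof---but it also reveals that your route ultimately rests on the very same inequality the paper applies directly to $R(S)$, with the fixed-point characterization and the separate $f$-versus-$g$ analysis adding length rather than leverage.
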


\begin{lemma} \label{lem:first}
Let $k \in [m]$. Consider $X_k$ a placement of the products in $S^*_k$  to the locations of the line (possibly a randomized placement). Let $ \mathcal{R}(X_k)$ be the revenue obtained by this placement. Then, $ \mathcal{R}(X_k)=O(1)$.
\end{lemma}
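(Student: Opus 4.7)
\textbf{Proof plan for Lemma \ref{lem:first}.}
The plan is to exploit the very special structure of $S^*_k$: every product in $S^*_k$ has the \emph{same} MNL preference weight $v_k=\epsilon/k$ and the same revenue $r_k=(k/\epsilon)^{1/(1+\epsilon)}$. Because of MNL symmetry, the revenue $R(X_k(1:j))$ depends on the displayed multiset only through the number of \emph{distinct} products it contains. Writing $d_j$ for the number of distinct products among $X_k(1:j)$, I will use
\[
R(X_k(1:j))\;=\;\frac{d_j\,(\epsilon/k)^{\epsilon/(1+\epsilon)}}{1+d_j\,\epsilon/k}\;=:\;R_{d_j},
\]
and note the trivial bound $d_j\le \min(j,k)$. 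Since $R_d$ is increasing in $d$, for every realization of a (possibly randomized) placement we have $R(X_k(1:j))\le R_{\min(j,k)}$, so it suffices to prove the $O(1)$ bound for the canonical round-robin placement that attains equality.

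First I would split the sum $W(X_k)=\sum_{j=1}^m \theta_j R_{\min(j,k)}$ into the ``growing'' regime $j\le k$ and the ``saturated'' regime $j>k$. For the growing regime, using $1+j\epsilon/k\ge 1$,
\[
\sum_{j=1}^k \theta_j R_j\;\le\;(\epsilon/k)^{\epsilon/(1+\epsilon)}\sum_{j=1}^{k}\frac{j}{j^{1+\frac{1}{1+\epsilon}}}\;=\;(\epsilon/k)^{\epsilon/(1+\epsilon)}\sum_{j=1}^k j^{-1/(1+\epsilon)}.
\]
The inner sum is bounded by the integral $\int_0^k x^{-1/(1+\epsilon)}dx=\frac{1+\epsilon}{\epsilon}\,k^{\epsilon/(1+\epsilon)}$, and multiplying by $(\epsilon/k)^{\epsilon/(1+\epsilon)}$ cancels the $k$-dependence, leaving a constant depending only on $\epsilon$.

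For the saturated regime, $R_{\min(j,k)}=R_k=\frac{\epsilon^{\epsilon/(1+\epsilon)}}{1+\epsilon}\,k^{1/(1+\epsilon)}$, and
\[
\sum_{j=k+1}^m \theta_j\;\le\;\int_k^\infty x^{-1-\frac{1}{1+\epsilon}}dx\;=\;(1+\epsilon)\,k^{-1/(1+\epsilon)},
\]
so the product is again a constant depending only on $\epsilon$. Adding the two regimes gives $W(X_k)=O(1)$, uniformly in $k$ and $m$. The only delicate step is choosing the right crude bounds ($1+d\epsilon/k\ge 1$ in one regime and $1+\epsilon$ in the other) so that the powers of $k$ cancel exactly; this is precisely why the exponent $1+\frac{1}{1+\epsilon}$ was built into $\theta_j$ in the construction of Instance~I. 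Nothing here requires $X_k$ to be deterministic, so the same bound applies to any randomized placement by taking expectations at the end.
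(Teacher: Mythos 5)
Your proposal is correct and follows essentially the same route as the paper's proof: bound $R(X_k(1:j))$ by the revenue of $\min(j,k)$ distinct products of weight $\epsilon/k$ (valid pointwise, hence also in expectation for randomized placements), split the sum at $j=k$, and check that the exponents of $k$ cancel in each regime. Your version just makes the integral comparisons and the resulting $\epsilon$-dependent constants explicit where the paper leaves them as $O(1)$.
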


\begin{lemma} \label{lem:second}
Let $ X$ be a placement that, for each $j \in [m]$, assigns the product of preference weight $u_j$ to location $j$. Then, $ \mathcal{R}(X)=\Omega(\log^{1-\epsilon} m)$.
\end{lemma}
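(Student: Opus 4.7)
The approach is to compute $W(X)$ directly from the MNL revenue formula and then apply standard integral estimates to the resulting sums. First I would rewrite the per-product quantities: the product placed at location $i$ has preference weight $u_i = 1/(i \log^{1+\epsilon} i)$ and revenue $u_i^{-1/(1+\epsilon)} = i^{1/(1+\epsilon)} \log i$, so revenue times weight simplifies nicely to
\[
u_i^{-1/(1+\epsilon)} \cdot u_i \;=\; \frac{1}{i^{\,\epsilon/(1+\epsilon)}\,\log^{\epsilon} i}.
\]
Under MNL this gives
\[
R(X(1{:}j)) \;=\; \frac{\sum_{i=1}^j 1/\bigl(i^{\epsilon/(1+\epsilon)}\log^{\epsilon} i\bigr)}{1 + \sum_{i=1}^j 1/\bigl(i \log^{1+\epsilon} i\bigr)}.
\]

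The next step is to bound the two pieces separately. The denominator is $O(1)$ uniformly in $j$, because the series $\sum_{i \geq 2} 1/(i \log^{1+\epsilon} i)$ converges (integral test) for any $\epsilon > 0$. For the numerator, writing $a := \epsilon/(1+\epsilon) < 1$, the standard integration-by-parts asymptotic $\int_{x_0}^{j} x^{-a}(\log x)^{-\epsilon}\,dx \sim j^{1-a}/\bigl((1-a)\log^{\epsilon} j\bigr)$ yields
\[
\sum_{i=2}^j \frac{1}{i^{a}\log^{\epsilon} i} \;=\; \Theta\!\left(\frac{j^{1/(1+\epsilon)}}{\log^{\epsilon} j}\right),
\]
so $R(X(1{:}j)) = \Omega\bigl(j^{1/(1+\epsilon)}/\log^{\epsilon} j\bigr)$ for all $j$ past an absolute constant.

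Plugging this lower bound into $W(X) = \sum_{j=1}^m \theta_j R(X(1{:}j))$ with $\theta_j = j^{-1-1/(1+\epsilon)}$, the $j^{1/(1+\epsilon)}$ factors cancel and leave
\[
W(X) \;=\; \Omega\!\left( \sum_{j=2}^{m} \frac{1}{j\,\log^{\epsilon} j} \right) \;=\; \Omega(\log^{1-\epsilon} m).
\]
In the interesting regime $0 < \epsilon < 1$, the last equality follows from the substitution $u = \log j$ that converts the sum into $\Theta\bigl(\int_{O(1)}^{\log m} u^{-\epsilon}\,du\bigr) = \Theta(\log^{1-\epsilon} m)$; for $\epsilon \geq 1$ the target bound reduces to $\Omega(1)$, which is immediate from any constant-size contribution.

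The only technical point I expect to navigate is supplying matching constants in the integral asymptotics, and in particular handling the first few indices where $\log i$ is small or vanishes. I would handle this by discarding an initial block of small indices (absorbed into a multiplicative constant) before applying the integral estimate to the tail; everything else is mechanical.
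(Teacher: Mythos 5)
Your proposal is correct and follows essentially the same route as the paper's proof: bound the MNL denominator $1+\sum_i u_i$ by a constant via convergence of $\sum_i 1/(i\log^{1+\epsilon}i)$, lower-bound the numerator $\sum_i u_i^{\epsilon/(1+\epsilon)}$ by $\Theta\bigl(j^{1/(1+\epsilon)}\log^{-\epsilon}j\bigr)$ (the paper uses the cruder but equivalent bound $j\,u_j^{\epsilon/(1+\epsilon)}$ from monotonicity of the terms, where you invoke an integral asymptotic), and then observe that the $j^{1/(1+\epsilon)}$ factor cancels against $\theta_j$ to leave $\sum_j j^{-1}\log^{-\epsilon}j=\Omega(\log^{1-\epsilon}m)$. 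The differences are cosmetic, and your handling of the small-index boundary terms is a harmless refinement.
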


 Lemma \ref{lem:Skstar} characterizes the optimal assortments $S^*_k$ for each $k$.  Lemma \ref{lem:first} shows that every placement of these assortments has constant revenue. Lemma \ref{lem:second} shows that a placement solution made of the products that are not included in $\cup_{k\in[m]}S^*_k$ has substantially higher revenue. 
Combining the results of Lemma \ref{lem:first} and \ref{lem:second} directly yields Theorem~\ref{thm:tigh}. The proofs of the three lemmas are presented below.

\subsubsection{Proof of Lemma \ref{lem:Skstar}}

\begin{proof}{Proof of Lemma \ref{lem:Skstar}.}

	Consider an  assortment $S$ from our universe with cardinality $k$ and let $\beta_1, \ldots, \beta_k$ be the preferences weights of  products in $S$. We have,
	$$R(S)=  \frac{\sum_{i=1}^k \beta_i^{\frac{\epsilon}{1+\epsilon}}}{1+ \sum_{i=1}^k \beta_i}.    $$
	Note that $S$ can also have less than $k$ products by setting $\beta_i=0$.
	By the weighted AM-GM inequality, we have
	$$ \beta_i + \frac{1}{k} =              \frac{1+\epsilon}{\epsilon} \left(     \frac{\epsilon}{1+\epsilon} \cdot \beta_i + \frac{1}{1+\epsilon} \cdot \frac{\epsilon}{k}              \right)  \geq 
	\frac{1+\epsilon}{\epsilon} \left(     \beta_i^{ \frac{\epsilon}{1+\epsilon}}     \left( \frac{\epsilon}{k}\right)^{\frac{1}{1+\epsilon} }             \right).
	$$
	By taking the sum over $i$ from $1$ to $k$, we get
	$$ R(S) \leq   \frac{\epsilon}{1+\epsilon}   \left( \frac{k}{\epsilon}\right)^{\frac{1}{1+\epsilon} }     .               $$
	The upper bound is tight for $S_k^*$ because by letting $S^*_k$ be the set of the $k$ products with preference weights $v_k$, we have
	$$ R(S_k^*)= \frac{k v_k^{\frac{\epsilon}{\epsilon+1}}}{1+k v_k} =  \frac{k (\epsilon/k)^{\frac{\epsilon}{1+\epsilon}}}{1+ \epsilon } = \frac{\epsilon}{1+\epsilon}   \left( \frac{k}{\epsilon}\right)^{\frac{1}{1+\epsilon} }       .   $$
	\hfill\Halmos		\end{proof}

\subsubsection{Proof of Lemma \ref{lem:first}}


\begin{proof}{Proof of Lemma \ref{lem:first}.}
	Let $S_j$ be the assortment of products that the placement $X_k$ assigns to locations $\{1,\ldots,j\}$. Note that $S_j$ can be random. We have
	$$  \mathcal{R}(X_k)= \sum_{j=1}^m \theta_j \mathbb{E}(R(S_j)).$$
	Note that  $S_j \subseteq S_k^*$ where $S_k^*$ is given by Lemma \ref{lem:Skstar}. Therefore,
	$$\mathbb{E}(R(S_j)) =\mathbb{E}     \left(    \frac{| S_j | }{1+ | S_j | \epsilon /k }  \left(   \frac{\epsilon}{k}\right)^{\frac{\epsilon}{1+\epsilon}} \right) \leq
	\left(    \frac{\min(j,k) }{1+ \min(j,k)  \epsilon /k }  \left(   \frac{\epsilon}{k}\right)^{\frac{\epsilon}{1+\epsilon}} \right)
	\leq     \lambda \min (j,k) k^{-\frac{\epsilon}{1+\epsilon}}
	$$
	where $\lambda$ is the constant $\epsilon^{\frac{\epsilon}{1+\epsilon}}$.
	Therefore,
	$$\sum_{j=1}^k \theta_j \mathbb{E}(R(S_j)) \leq  \lambda k^{-\frac{\epsilon}{1+\epsilon}} \sum_{j=1}^k j \theta_j =  \lambda k^{-\frac{\epsilon}{1+\epsilon}} \sum_{j=1}^k j^{\frac{-1}{1+\epsilon}} = O(1).  $$
	Moreover, 
	$$\sum_{j=k+1}^m \theta_j \mathbb{E}(R(S_j)) \leq  \lambda k^{-\frac{\epsilon}{1+\epsilon}} k \sum_{j=k+1}^m  \theta_j =  \lambda k^{\frac{1}{1+\epsilon}} \sum_{j=k+1}^m j^{-1-\frac{1}{1+\epsilon}} = O(1).  $$
	Adding the last two equations give the desired result.
	\hfill\Halmos		\end{proof}

\subsubsection{Proof of Lemma \ref{lem:second}}


\begin{proof}{Proof of Lemma \ref{lem:second}.}
	Let $S_j$ be the assortment of products that $X$ places  in locations $\{1,\ldots,j\}$. 
	We have $$R(S_k)=  \frac{\sum_{j=1}^k u_j^{\frac{\epsilon}{1+\epsilon}}}{1+ \sum_{j=1}^k u_j} .   $$
	We have $$\sum_{j=1}^k u_j = \sum_{j=1}^k  1/( j \log^{1+\epsilon} j) = O(1) (\log_2 k)^{-\epsilon} ,    $$ hence $1+\sum_{j=1}^k u_j =O(1)$.
	On the other hand, we have $$\sum_{j=1}^k u_j^{\frac{\epsilon}{1+\epsilon}} \geq k  u_k^{\frac{\epsilon}{1+\epsilon}} = k^{\frac{1}{1+\epsilon}} (\log_2 k)^{-\epsilon}.
	$$
	Therefore, 
	$$R(S_k)= \Omega(1) k^{\frac{1}{1+\epsilon}} (\log_2 k)^{-\epsilon}.$$
	Finally,
	$$ \mathcal{R}(X)=  \sum_{j=1}^m \theta_j R(S_j) =   \Omega(1) \sum_{j=1}^m j^{-1-\frac{1}{1+\epsilon}}         j^{\frac{1}{1+\epsilon}} (\log_2 j)^{-\epsilon}    =    \Omega(1) \sum_{j=1}^m j^{-1}         (\log_2 j)^{-\epsilon}=  \Omega(\log^{1-\epsilon} m).  $$
	
	\hfill\Halmos		\end{proof}

\subsection{Missing Proofs from Section \ref{subsec:deran}}\label{appx:deran}

 Our derandomized algorithm is based on properties of optimal unconstrained assortments for Markov choice models (\cite{desir, SO}). In particular, we use the notion of \emph{compatibility} given in \cite{SO}. A choice model $\phi$ on ground set $N$ with optimal unconstrained assortment $S$ ($\subseteq N$) is compatible if,
\begin{eqnarray}
	R(A\cup C)-R(A)&\geq &0 \qquad  \forall  A\subseteq N,\, C\subseteq S, \label{prop2}\\
	R(A\cup C)-R(A)&\leq &R(B\cup C)-R(B) \qquad \forall B\subseteq A\subseteq S,\, C\subseteq N. \label{prop1}
\end{eqnarray}

\begin{lemma}[Part of Theorem 7 in \cite{SO}.]\label{compat}
	Every Markov choice model is compatible.
\end{lemma}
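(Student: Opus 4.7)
The plan is to establish both compatibility properties~\eqref{prop2} and~\eqref{prop1} by leveraging the absorbing random-walk interpretation of the Markov choice model together with the optimality characterization of the unconstrained optimal assortment $S$ due to \cite{blanchet2016markov}. Recall that in this model, products (and the outside option) are states of a Markov chain with initial distribution $\lambda$ and transition matrix $\rho$; offering an assortment $T$ makes states in $T$ and the outside option absorbing, and $\phi(i,T)$ equals the absorption probability at $i$. The change in revenue induced by adding or removing products is therefore encoded in how the corresponding absorbing chain redirects probability mass between states, which makes coupling arguments natural.

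For property~\eqref{prop2}, I would invoke the LP-based stationarity conditions satisfied by the optimal unconstrained assortment $S$. These produce a threshold value $z^* = R(S)$ such that every product $j \in S$ has an effective marginal value of at least $z^*$ against any subset of $S$, while every product outside $S$ has effective value strictly below $z^*$. Adding a single $j \in S$ to an arbitrary base assortment $A$ therefore yields nonnegative marginal revenue: any probability mass in the absorbing chain on $A$ that gets rerouted to $j$ is moved either from the outside option (revenue zero) or from a state whose revenue is dominated by that of $j$. Telescoping over the elements of $C \subseteq S$ then establishes $R(A \cup C) \geq R(A)$.

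Property~\eqref{prop1} is the main obstacle: it is a submodularity-type inequality restricted to base sets $B \subseteq A \subseteq S$, and the revenue function of the Markov chain model is not submodular in general. I would proceed by coupling two absorbing chains started from the same initial state, one with assortment $B \cup C$ and one with $A \cup C$, and compare the two revenue functionals path by path. The key step is to argue that, under the coupling, the additional revenue earned by enlarging the base from $B$ to $A$ is larger when $C$ is absent than when $C$ is present, because the ``extra'' absorbing states in $A \setminus B$ sit inside $S$ and hence dominate every state in $C \setminus S$ in effective value, while the overlap $C \cap S$ only dilutes marginal contribution. Aggregating these path-wise revenue differences yields $R(A \cup C) - R(A) \leq R(B \cup C) - R(B)$.

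The delicate part is the bookkeeping on absorption mass in the coupling and the use of the LP characterization of $S$ to bound the revenues of redirected flows; these technical estimates are exactly what is carried out in the proof of Theorem~7 of \cite{SO}, so for the detailed calculations I would appeal to that proof rather than reproduce them here.
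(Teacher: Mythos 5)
The paper does not prove this lemma at all: it is imported as a black box, quoted verbatim as part of Theorem~7 of \cite{SO}, so there is no internal argument to compare your sketch against. In that sense your proposal is consistent with the paper's treatment, since you too ultimately defer every technical step to the proof in \cite{SO}. Your outline for property~\eqref{prop2} --- using the LP/threshold characterization of the optimal unconstrained assortment $S$ under the Markov chain model of \cite{blanchet2016markov}, so that adding $j\in S$ only reroutes absorption mass from the outside option or from lower-value states --- is a reasonable account of why that half holds.

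However, if your sketch were read as an actual proof, property~\eqref{prop1} would be a genuine gap. The ``key step'' of your coupling argument --- that ``the additional revenue earned by enlarging the base from $B$ to $A$ is larger when $C$ is absent than when $C$ is present'' --- is not an argument; it is the inequality $R(A\cup C)-R(A)\leq R(B\cup C)-R(B)$ restated in words, justified only by an informal dominance claim about states in $A\setminus B$ versus $C$. Since the revenue function of a Markov chain choice model is not submodular in general, exactly this restricted submodularity (over base sets inside $S$) is the entire content of the theorem, and it is where all the work in \cite{SO} lies. You acknowledge this by deferring to that proof, which is acceptable here only because the paper itself does the same; but be aware that your proposal contains no independent verification of \eqref{prop1}.
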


\subsubsection{Proof of Lemma  \ref{lem:redsubmod}}
\begin{proof}{Proof of Lemma  \ref{lem:redsubmod}.} Fix a $k\in[m]$ and let
	\[N_k=\argmax_{A\subseteq S^*_k} R(A).\]
	Given $S^*_k$, one can compute $N_k$ by solving an unconstrained assortment optimization problem. Recall that this can be done in polynomial time for the Markov choice model \cite{blanchet2016markov}. Now, we restrict our attention to $N_k$ and use it as a (reduced) ground set for the rest of the proof. By definition,
	\[N_k=\argmax_{A\subseteq N_k} R(A).\]
	
	Similar to the proof of Theorem \ref{uniform}, 
	consider the expanded ground set $N_k\times G=\{(i,j)\mid i\in N_k, j\in G\}$. A placement $X$ corresponds to set $\left\{(X(j),j)\mid j\in G \right\}\subseteq N_k\times G$. Given a set $U\subseteq N_k\times G$, for each $j \in G$, we define the following set of products
	\[U(j)=\{  i\in N_k    \mid (i,j)\in U \}.\] 
	Consider the matroid with family of independent sets $\mathcal{F}=\{U\mid\, |U(j)|\leq 1\,\, \forall j\in G\}$. Each placement $X$ corresponds to a unique set in $\mathcal{F}$. Similarly, given a set $U\in \mathcal{F}$, we have a unique placement solution where location $j\in G$ has product $i\in N_k$ if $(i,j)\in U$ for some $i$, and $j$ has no product otherwise.
	Finally, given an arbitrary set $U\subseteq N_k\times G$, let
	\begin{equation*}
		\mathcal{R}(U)=\sum_{L \in 2^G}  \mathbb{P}_B(L) \,\, R\left(\cup_{j \in L} U(j)\right).
	\end{equation*}
	
	The placement problem is equivalent to the following matroid constrained subset selection problem on ground set $N_k\times G$, 
	\begin{equation}\label{spot}
		\max_{U\in \mathcal{F}} \mathcal{R}(U).
	\end{equation}
	To complete the proof, we show that $R(\cdot)$ is monotone and submodular on the ground set $N_k$. Similar to the proof of Theorem \ref{uniform}, this establishes the monotonicity and submodularity of $\mathcal{R}(\cdot)$ on the ground set $N_k\times G$ and reduces \eqref{spot} to an instance of matroid constrained monotone submodular function maximization. Existing algorithms~(\cite{calinescu2011maximizing}) give a $(1-1/e)$-approximation to this problem. 
	
	To show that $R(\cdot)$ is monotone and submodular, we use Lemma \ref{compat} with both $N$ and $S$ replaced by $N_k$ to get,
	\begin{eqnarray*}
		R(A\cup\{e\})-R(A)&\geq &0 \qquad  \forall A\subseteq N_k,\, e\in N_k,\\
		R(A\cup\{e\})-R(A)&\leq &R(B\cup\{e\})-R(B) \qquad \forall B\subseteq A\subseteq N_k,\, e\in N_k, 
	\end{eqnarray*}
	here the first inequality follows from \eqref{prop2} and the second inequality follows from \eqref{prop1}. The first inequality establishes monotonicity and the second establishes submodularity of $R(\cdot)$ on the ground set $N_k$, as desired.
	\hfill\Halmos	\end{proof}


\subsubsection{Proof of Theorem  \ref{thm:morocco}}
\begin{proof}{Proof of Theorem  \ref{thm:morocco}.}
	Let $\hat{O}_k$ denote the optimal placement of $S^*_k$. From Lemma \ref{lem:redsubmod}, we have, 
	\[\mathcal{R}(\hat{Y})\geq     \max_{k\in[m]} \mathcal{R}(\hat{X}_k) \geq   (1-1/e) \ \cdot \max_{k\in[m]} \mathcal{R}(\hat{O}_k).\]
	Theorem \ref{randgen} states that there exists a placement of one of the assortments from the collection $\{S^*_k\}_{k\in[m]}$, that is $\alpha\frac{1-1/e}{\log_2 \rho} $- approximate. Thus,
	\[\max_{k\in[m]} \mathcal{R}(\hat{O}_k)\geq \alpha\frac{1-1/e}{\log_2 \rho} \opt,\] 
	where $\opt$ is the optimal objective value of \place.
	Combining these inequalities, we have that
	$\hat{Y}$ is a $\alpha\frac{(1-1/e)^2}{\log_2 \rho}$ approximation of the optimal placement solution. Recall that, $\alpha=0.5-\epsilon$ for the Markov choice model~(\cite{desir,SO}). Therefore, we get $\frac{\Omega(1)}{\log_2 \rho}$-approximation (deterministic) algorithm for \place.
	
	\hfill\Halmos
\end{proof}

	{
	\section{Relaxing Assumption \ref{decouple}}\label{appx:decouple}
	Assumption \ref{decouple} states that the browsing distribution $B$ and choice model $\phi$ are both independent of the placement $X$. In this section, we partially relax this assumption and consider a setting 
	where product placement influences customer choice even after the customer has determined his/her consideration set, i.e., the choice model depends on the placement $X$ but the browsing distribution is independent of $X$. We consider placement dependent MNL choice model and show that our algorithm and its approximation guarantee seamlessly generalizes in this setting.  

\subsubsection*{Model Formulation.} 
Our goal is to capture the phenomenon that customers may be more likely to purchase a product that they saw earlier on their path even after they have formed their consideration set~(\cite{chandon2009does, gallego2020approximation}). To this end, we consider a generalization of MNL choice model where the attraction parameter of every product is a function of \emph{when} the customer first saw the product on their path through the store. Specifically, we have a set of attraction parameters $\{v_{i,k}\}_{k\in[m]}$ for every product $i\in N$ where $v_{i,k}$ is the attraction parameter of $i$ when the customer sees $k$ ($\geq 0$) distinct products before the first time that they see product $i$ on their path. We assume that,
\begin{equation}\label{assumemnl}
	v_{i,k}\geq v_{i,k+1}\quad	 \forall k\geq 0,\, i\in N.
\end{equation}
This is a reasonable assumption for capturing customers' tendency to give higher valuation to products that are seen earlier on the path. Note that $v_{i,k}=v_{i,k+1}\,\, \forall k\geq 0$ for the standard MNL model. A special case of the model (and the assumption) above were introduced by \cite{gallego2020approximation} for the \placel\ problem.

Recall that, we can describe a customer's path through the store by a permutation $\pi$ over the set of location $[m]$ and a stopping location $j$. Given a placement $X$, consider a customer who visits the $j$ locations $\pi(1),\pi(2),\cdots,\pi(j)$ (in order) and sees assortment $X^{\pi}(1:j)$. For every product $i\in X^{\pi}(1:j)$, let $p^{\pi}_i(X)$ denote the number of distinct products placed before product $i$ on the path $\pi(1),\cdots,\pi(j)$. The expected revenue of this customer is given by,
\[R^{\pi}(X^{\pi}(1:j))=\frac{\sum_{i\in X^{\pi}(1:j)}r_iv_{i,p^{\pi}_i(X)}}{1+\sum_{i\in X^{\pi}(1:j)}v_{i,p^{\pi}_i(X)}},\]
here the 1 in the denominator represents the attraction parameter of the outside option. We use notation $R^{\pi}$ to embellish the fact that the revenue function depends on the permutation $\pi$. 
Now, the expected total revenue of placement $X$ is given by
\begin{eqnarray*}
	\mathcal{R}(X)&=& \sum_{\pi\in \Pi}\sum_{j\in [m]} \theta^{\pi_j}\, R^{\pi}(X^{\pi}(1:j)).
\end{eqnarray*}
Recall that our goal  is to find a placement with the highest expected revenue. 
We refer to this problem as \place+, 
\begin{equation*} \label{place1}
	\opt=\qquad\max_{X\in \mathcal{X}}\, \mathcal{R}(X)
\end{equation*}
When $v_{i,k}=v_{i,k+1}\,\, \forall k\geq 0,i\in N$, we recover the original \place\ formulation with MNL choice model.

To introduce the algorithm for \place+, we start by generalizing some notation and basic assumptions from Section \ref{sec:prelim}.
We assume an oracle that outputs the expected revenue $\mathcal{R}(X)$ for any placement solution $X$. Similar to the original formulation, w.l.o.g., we allow some locations to remain empty.  For every $j\in[m]$, let $\phi_j$ denote the MNL choice model with attraction parameters $\{v_{i,j}\}_{i\in[n]}$ and let 
\[R_j(S)=\frac{\sum_{i\in S} r_i\,v_{i,j}}{1+\sum_{i\in S} v_{i,j}}\quad \forall S\subseteq N.\]
Recall that the optimal cardinality constrained assortment can be computed in polynomial time for MNL choice model (\cite{rusmevichientong2010dynamic}). Let $S^*_{k}$ denote the optimal assortment of at most $k$ products for choice model $\phi_k$. W.l.o.g., $|S^*_{k}|=k$.  Formally,
\begin{equation} \label{eq:approx1}
	R_k(S^*_{k})= \max_{|S|=k}R_k(S).
\end{equation}

\subsubsection*{Algorithm.} 
Given assortment $S^*_{k}$ as defined in Equation \eqref{eq:approx1}, let $X^r_k$ denote a uniformly random placement of products in $S^*_{k}$, i.e., at each location we place a uniformly randomly (and interdependently) chosen product from $S^*_{k}$. 
Let, 
\[Y=\argmax_{X\in \{X^r_k\}_{k \in [m]}} \mathcal{R}(X),\]
denote the final (randomized) placement solution. 



\subsubsection*{Approximation Guarantee.}

We establish the following approximation guarantee for the algorithm.  Recall that, $j_{\min}$ ($\geq 1$) and $j_{\max}$ ($\leq m$) denote the lengths of the shortest and longest paths with non-zero support in the browsing distribution, and $\rho=\frac{j_{\max}+1}{j_{\min}}$ ($\leq m+1$). W.l.o.g., $\rho=\frac{j_{\max}+1}{j_{\min}}=2^d$ for some integer $d\in \{1,2,\cdots,{\log_2 (m+1)}\}$.

\begin{theorem}\label{randgen1}
	The randomized placement $Y$ gives $\frac{\Omega(1)}{\log \rho}$-approximation for \place+, i.e.,  
	\[\ep[\mathcal{R}(Y)]=\ep[\max_{k \in [m]} \mathcal{R}(X^r_k)]\geq \,\frac{1-1/e}{\,\log_2 \rho} \opt.\]	
\end{theorem} 

Similar to the proof of Theorem \ref{randgen}, we first show the result for a line graph (Lemma \ref{randline1}) and then use the result for a line graph as a building block for establishing Theorem \ref{randgen1}. We start by stating and proving two lemmas (\ref{mnl} and \ref{helper1}) that will help us prove the result for a line.	
\begin{lemma}\label{mnl}
	Let $\phi_1$ and $\phi_2$ be MNL choice models with attraction parameters $\{v_{i,1}\}_{i\in N}$ and $\{v_{i,2}\}_{i\in N}$ respectively such that $v_{i,1}\geq v_{i,2}\,\, \forall i\in N$. For $\ell\in\{1,2\}$, let $R_\ell$ denote the revenue function for $\phi_\ell$ and let $Z_\ell=\argmax_{|S|\leq k} R_\ell(S)$ for some $k\in[n]$. Then,
	\begin{enumerate}[$(i)$]
		\item For every $Z\subset Z_2$ and $e\in Z_2\backslash Z$, we have, $R_2(Z\cup\{e\})\geq R_2(Z)$.
		\item Assortment $Z_2$ generates more revenue under choice model $\phi_1$, i.e., $R_1(Z_2)\geq R_2(Z_2)$.
		\item The optimal revenue in $\phi_1$ is higher than in $\phi_2$, i.e., $R_1(Z_1)\geq R_2(Z_2)\geq R_2(Z)\,\, \forall Z\subseteq N$.
	\end{enumerate}
\end{lemma}
\begin{proof}{Proof.}
	To prove claims $(i)$ and $(iii)$, we use the following property of optimal assortments under MNL choice model: the price of every product in the optimal assortment $Z_2$ is at least as much as the expected revenue of the assortment, i.e., for all $i \in Z_2$, 
	\begin{equation}\label{helper2}
		r_i\geq R_2(Z_2).
	\end{equation} 
	We start by proving \eqref{helper2}.
	Consider an arbitrary element $e\in Z_2$. We have,
	\begin{eqnarray}
		R_2(Z_2)-R_2(Z_2\backslash \{e\})&= &\frac{\sum_{i\in Z_2} r_i v_{i,2}}{1+\sum_{i\in Z_2} v_{i,2}}-R_2(Z_2\backslash \{e\}),\nonumber\\
		&= &\frac{R_2(Z_2\backslash \{e\})+\frac{v_e r_e}{1+\sum_{i\in Z_2\backslash \{e\}} v_{i,2}}}{1+\frac{v_e}{1+\sum_{i\in Z_2\backslash \{e\}} v_{i,2}}}-R_2(Z_2\backslash \{e\}),\nonumber\\
		&= &\frac{v_e}{1+\sum_{i\in Z_2} v_{i,2}}\left(r_e- R_2(Z_2\backslash \{e\})\right),\nonumber\\
		R_2(Z_2)&=&  \left(1-\frac{v_e}{1+\sum_{i\in Z_2} v_{i,2}}\right)\,R_2(Z_2\backslash \{e\}) + \left(\frac{v_e}{1+\sum_{i\in Z_2} v_{i,2}}\right)\, r_e.\label{convexcomb1}
	\end{eqnarray}
	From \eqref{convexcomb1}, we have that $R_2(Z_2)$ is a convex combination of $r_e$ and $R_2(Z_2\backslash \{e\})$. Since $Z_2$ is an optimal assortment, we have $R_2(Z_2)- R_2(Z_2\backslash \{e\})\geq 0$. 
	Thus, $r_e\geq R_2(Z_2)$.
	\smallskip
	
	\noindent \emph{Proof of Claim $(i)$:}  From \eqref{helper2}, we have $r_e\geq R_2(Z_2)$. Further, $R_2(Z_2)\geq R_2(Z)$ because $Z_2\supset Z$ and $Z_2$ is an optimal assortment. We have,
	\begin{eqnarray}
		R_2(Z\cup\{e\})-R_2(Z)&= &\frac{\sum_{i\in Z} r_i v_{i,2} +r_ev_{e,2}}{1+\sum_{i\in Z} v_{i,2} + v_{e,2}}-R_2(Z),\nonumber\\
		&= &\frac{R_2(Z)+\frac{v_e r_e}{1+\sum_{i\in Z} v_{i,2}}}{1+\frac{v_e}{1+\sum_{i\in Z} v_{i,2}}}-R_2(Z),\nonumber\\
		&= &\frac{v_e}{1+\sum_{i\in Z} v_{i,2} +\, v_{e,2}}\left(r_e- R_2(Z)\right),\nonumber\\
		&\geq & 0.\nonumber
	\end{eqnarray}

	\noindent \emph{Proof of Claims $(ii)$ and $(iii)$:}	First, note that the third claim follows from the second because,
	\[R_1(Z_1)\geq R_1(Z_2)\geq R_2(Z_2).\]	
	To show the second claim, let $\Delta_i= v_{i,1}-v_{i,2}\,\, \forall i\in N$. We have,
	\begin{eqnarray*}
		R_1(Z_2)&= &\frac{\sum_{i\in Z_2} r_i v_{i,1}}{1+\sum_{i\in Z_2} v_{i,1}},\\
		&=& \frac{\sum_{i\in Z_2} r_i (v_{i,2}+\Delta_i)}{1+\sum_{i\in Z_2} (v_{i,2}+\Delta_i)},\\
		&=& \frac{R_2(Z_2)+\frac{\sum_{i\in Z_2}r_i\,\Delta_i}{1+\sum_{i\in Z_2} v_{i,2}}}{1+\frac{\sum_{i\in Z_2} \Delta_i}{1+\sum_{i\in Z_2} v_{i,2}}},\\
		&\geq & \frac{R_2(Z_2)+R_2(Z_2)\frac{\sum_{i\in Z_2}\Delta_i}{1+\sum_{i\in Z_2} v_{i,2}}}{1+\frac{\sum_{i\in Z_2} \Delta_i}{1+\sum_{i\in Z_2} v_{i,2}}},\\
		&=& R_2(Z_2).
	\end{eqnarray*}
	Here, the inequality follows from \eqref{helper2}. 
	
	\hfill\Halmos\end{proof}

\begin{lemma}\label{helper1}
	Given the assortment $S^*_{k}$ for $k\in[m]$, let $S^r_{k}$ denote  a random assortment of $k$ products where each product is independently sampled from $S^*_{k}$ (with replacement). We have, $\ep[R_k(S^r_{k})] \geq (1-1/e) R_k(S^*_{k}).$
\end{lemma}
\begin{proof}{Proof.} 
	Except for some differences in notation, the proof is identical to the proof of Lemma~\ref{helper}.	Let $\onee(i\in S^r_{k})$ be an indicator random variable that is 1 if $i\in S^r_{k}$ and 0 otherwise. Using weak-rationality of $\phi$ (Assumption \ref{ration}), we have,
	\[\ep[R_j(S^r_{k})]=\sum_{i\in S^*_{k}}r_i \ep[\onee(i\in S^r_{k})\,\phi_j(i,S^r_{k})]\geq \sum_{i\in S^*_{k}} r_i \ep[\onee(i\in S^r_{k})]\phi_j(i,S^*_{k}).\]
	Same as in the proof Lemma \ref{helper}, we have, $\ep[\onee(i\in S^r_{k})]= 1-\left(1-\frac{1}{k}\right)^k$. Now, using the fact that $1-\left(1-\frac{1}{k}\right)^k\geq 1-\frac{1}{e}\,\, \forall k\geq 1,$ completes the proof.
	\hfill\Halmos		\end{proof}
\begin{lemma}\label{randline1}
	For a line graph,	$
	\sum_{\ell=1}^{\log_2(\rho)}\ep[\mathcal{R}(X^r_{2^{\ell-1}j_{\min}})]\geq (1-1/e) \opt$.
\end{lemma}
\begin{proof}{Proof.}		
	
	On a line graph the order $\pi$ in which locations can be visited is fixed and the browsing distribution reduces to a distribution $\{\theta_j\}_{j\in[m]}$ over the last location visited by the customer. Let $O:G\to N$ denote an optimal placement, i.e., $\mathcal{R}(O)=\opt$. Let $O(j_1:j_2)$ denote the set of products placed in locations $\{j_1,j_1+2,\cdots, j_2\}$. Let $O(j)$ denote the product placed at location $j$. Because we are on a line, w.l.o.g., the optimal placement has a distinct product in every location and there are $j-1$ distinct products placed before the product $O(j)$.
	
	{ For any given $\ell\in[\log_2 (\rho)]$, consider the placement solution where we place products $O(2^{\ell-1}j_{\min}:2^{\ell}j_{\min}-1)$ between $2^{\ell-1}j_{\min}$ and $2^\ell j_{\min} -1$, same as $O$, and keep all other positions vacant. Let $O_{\ell}$ denote this solution. We have,
		\begin{eqnarray*}
			\mathcal{R}(O)&=&\sum_{\ell=1}^{\log_2 \rho} \sum_{j=2^{\ell-1} j_{\min}}^{2^{\ell}j_{\min}-1}r_{O(j)}\,\sum_{k\geq j} \theta_k \,\frac{v_{O(j),j}}{1+\sum_{i\in [k]}v_{O(i),i}},\\
			&\leq &\sum_{\ell=1}^{\log_2 \rho} \sum_{j=2^{\ell-1}j_{\min}}^{2^{\ell}j_{\min}-1}r_{O(j)} \sum_{k=j}^{j_{\max}} \theta_k \frac{v_{O(j),j}}{1+\sum_{i= 2^{\ell-1}j_{\min} }^{\min\{k,2^{\ell}j_{\min}-1\}}v_{O(i),i}},\\
			&=&\sum_{\ell=1}^{\log_2 (\rho)} \mathcal{R}(O_\ell).
		\end{eqnarray*}
		The inequality follows from the fact that $\sum_{i= 2^{\ell-1}j_{\min} }^{\min\{k,2^{\ell}j_{\min}-1\}}v_{O(i),i}\leq \sum_{i\in [k]}v_{O(i),i}$.
		
		Now, consider the randomized placement $X^r_{2^{\ell-1}j_{\min}}$, or $X^r_\ell$ for short. We obtain this placement by randomly placing a product from $S^*_{2^{\ell-1}j_{\min}}$ at every location. Let $X^r_\ell (1:k)$ denote the assortment of products in the first $k$ locations in the placement $X^r_\ell$. Note that $X^r_\ell (1:2^{\ell-1}j_{\min})$ is stochastically equivalent to an assortment generated by drawing $2^{\ell-1}j_{\min}$ i.i.d.\ samples from the uniform distribution over products in $S^*_{2^{\ell-1}j_{\min}}$. Let $p_i(X^r_\ell)$ denote the number of distinct products that are placed before product $i$ in $X^r_\ell$. Then, the total revenue of random placement $X^r_{2^\ell j_{\min}}$ can be lower bounded by $\mathcal{R}(O_\ell)$ as follows,
		\begin{eqnarray*}
			\mathcal{R}(O_\ell)&= &\sum_{j=2^{\ell-1}j_{\min}}^{2^{\ell}j_{\min}-1}r_{O(j)} \sum_{k=j}^{j_{\max}} \theta_k \frac{v_{O(j),j}}{1+\sum_{i= 2^{\ell-1}j_{\min} }^{\min\{k,2^{\ell}j_{\min}-1\}}v_{O(i),i}},\\
			&= & \sum_{k= 2^{\ell-1}j_{\min}}^{j_{\max}} \theta_k\, \sum_{j=2^{\ell-1}j_{\min}}^{\min\{k,2^{\ell}j_{\min}-1\}}\frac{r_{O(j)}\,v_{O(j),j}}{1+\sum_{i= 2^{\ell-1}j_{\min} }^{\min\{k,2^{\ell}j_{\min}-1\}}v_{O(i),i}},\\
			&\leq& \sum_{k= 2^{\ell-1}j_{\min}}^{j_{\max}} \theta_k\, R_{2^{\ell-1}j_{\min}}\left(S^*_{2^{\ell-1}j_{\min}}\right),\\
			&\leq & \frac{1}{(1-1/e)} \sum_{k= 2^{\ell-1}j_{\min}}^{j_{\max}} \theta_k\, \ep\left[R_{2^{\ell-1}j_{\min}}\left(X^r_\ell(1:2^{\ell-1}j_{\min})\right)\right],\\
			&\leq & \frac{1}{(1-1/e)} \sum_{k= 2^{\ell-1}j_{\min}}^{j_{\max}} \theta_k\,\ep\left[ \sum_{e\in X^r_\ell(1:2^{\ell-1}j_{\min})}\frac{r_e\,v_{e,p_e(X^r_\ell)}}{1+\sum_{i\in X^r_\ell(1:2^{\ell-1}j_{\min})}v_{i,p_{i}(X_\ell^r)}}\right],\\
			&\leq & \frac{1}{(1-1/e)} \sum_{k= j_{\min}}^{j_{\max}} \theta_k\,\ep\left[ \sum_{e\in X^r_\ell(1:k)}\frac{r_e\,v_{e,p_e(X^r_\ell)}}{1+\sum_{i\in X^r_\ell(1:k)}v_{i,p_{i}(X_\ell^r)}}\right],\\
			&=&\frac{1}{(1-1/e)} \ep[\mathcal{R}(X^r_{2^{\ell-1} j_{\min}})],
		\end{eqnarray*}
		where the first inequality follows from our assumption \eqref{assumemnl} and Lemma \ref{mnl} $(iii)$. The second inequality follows from Lemma \ref{helper1}. The third inequality follows from assumption \eqref{assumemnl} and Lemma \ref{mnl} $(ii)$. The final inequality follows from Lemma \ref{mnl} $(i)$. 
		Thus,
		\[(1-1/e) \mathcal{R}(O)\leq \sum_{\ell=1}^{\log_2(\rho)}\ep[\mathcal{R}(X^r_{2^{\ell-1}j_{\min}})].\]}
	\hfill\Halmos		\end{proof}
{ Using the fact that, 
	\[\sum_{\ell=1}^{\log_2(\rho)}\ep[\mathcal{R}(X^r_{2^{\ell-1}j_{\min}})]\leq (\log_2 \rho)E[\mathcal{R}(Y)],\] 
	we have that $Y$ is a $\frac{(1-1/e)}{\log_2 \rho}$-approximation for a line graph as a direct corollary of Lemma \ref{randline1}.}

\begin{proof}{Proof of Theorem \ref{randgen1}.}
	Apart from minor differences in notation, we essentially mimic the proof of Theorem \ref{randgen}.
	Fix an arbitrary permutation $\pi:[m]\to[m]$ and let $W^\pi(X)$ denote the expected total revenue from placement $X$ when customers visit a subset of locations $\{\pi(1), \pi(2),\cdots, \pi(m)\}$ in order, i.e., 
	\[W^{\pi}(X)=\sum_{j\in[m]} \theta^{\pi}_jR^{\pi}(X^{\pi}(1:j)).\] 
	Let $W^{\pi}(O)$ denote the expected total revenue of the optimal placement when customers visit locations in the order given by $\pi$.	Note that fixing $\pi$ reduces the problem to an instance of \placel. 	From Lemma \ref{randline1}, we have 
	\[ \sum_{\ell=1}^{\log_2(\rho)}\ep[W^{\pi}(X^r_{2^{\ell-1}j_{\min}})]\geq  (1-1/e)W^{\pi}(O).\]	
	Now, the total revenue of placement $X$ can be written as,
	\[\mathcal{R}(X)=\sum_{\pi\in\Pi} W^{\pi}(X).\]
	Thus, for the randomized placement $Y$, we have,

	\begin{eqnarray*}
		(\log_2 \rho)\, \ep[\mathcal{R}(Y)]= (\log_2 \rho)\,\sum_{\pi\in \Pi}\ep[W^{\pi}(Y)]
		& \geq &\sum_{\pi\in \Pi}\sum_{\ell=1}^{\log_2 \rho}\ep[W^{\pi}(X^r_{2^{\ell-1}j_{\min}})],\\
		&\geq &\sum_{\pi\in \Pi}(1-1/e)\,W^{\pi}(O) = (1-1/e) \opt.		\end{eqnarray*}
	\hfill\Halmos		\end{proof}
}

\section{Mixed Integer Linear Program for \place\ under MNL} \label{appendix:IP}

In this section, we provide a MILP formulation for the \place\ problem for MNL choice model. We consider a general browsing distribution. Suppose the browsing distribution specifies a certain number of paths, say $\Gamma$, where each path $j \in [\Gamma]$ corresponds to a subset of locations visited with probability $\theta_j$. For each product $i \in [n]$ and each location $\ell \in [m]$, we use the binary variable $x_{i\ell}$ to indicate if product $i$ is placed in location $\ell$. Thus, the binary variables $(x_{i\ell})_{i\in[n],\, \ell\in[m]}$ represent the placement solution. For each product $i \in [n]$ and each path $j \in [\Gamma]$, we use the binary variable $y_{ij}$ to indicate if product $i$ is placed in at least one of the locations in path $j$ in our placement. The variables $x_{i\ell} \in \{0,1\}$ and $y_{ij} \in \{0,1\}$ are connected through the following constraints for all $i \in [n]$ and $j \in [\Gamma]$
\[
x_{i\ell} \leq y_{ij}, \quad \forall \ell \in {\sf path}(j), \quad \quad
\text{and} \quad \quad
y_{ij} \leq \sum_{\ell \in {\sf path}(j)} x_{i\ell}.
\]
The first constraint ensures that if product $i$ is placed in at least one location in path $j$ (i.e., $x_{i\ell} = 1$ for some $\ell \in {\sf path}(j)$), then $y_{ij} = 1$, implying product $i$ is visited in path $j$. Conversely, if product $i$ is not placed in any location in path $j$, then $\sum_{\ell \in {\sf path}(j)} x_{i\ell} = 0$, which implies $y_{ij} = 0$.

Moreover, we need to ensure that each location gets at most one product through the constraint $\sum_{\ell=1}^m x_{i \ell} \leq 1$ for all $ i \in [n].$

Let $v_i$ denote the preference weight of product $i$ under MNL and $r_i$ denote the price of product~$i$. Observe that the revenue from path $j$ is given by,
\[
\frac{\sum_{i=1}^n r_i v_i y_{ij}}{1 + \sum_{i=1}^n v_i y_{ij}}.
\]
Therefore, the objective function (total expected revenue) is:
\[
\sum_{j=1}^{\Gamma} \theta_j \frac{\sum_{i=1}^n r_i v_i y_{ij}}{1 + \sum_{i=1}^n v_i y_{ij}},
\]
where we sum over all paths weighted by their corresponding probabilities. This leads to the following formulation of the placement problem:
\begin{equation}
\begin{aligned}
& \max && \sum_{j=1}^{\Gamma} \theta_j \frac{\sum_{i=1}^n r_i v_i y_{ij}}{1 + \sum_{i=1}^n v_i y_{ij}} &\\
& \text{s.t.} && x_{i\ell} \leq y_{ij}, \quad& \forall i \in [n], \forall j \in [\Gamma], \forall \ell \in {\sf path}(j), \\
&&& y_{ij} \leq \sum_{\ell \in {\sf path}(j)} x_{i\ell}, \quad &\forall i \in [n], \forall j \in [\Gamma], \\
&&& \sum_{\ell=1}^m x_{i \ell} \leq 1, \quad &\forall i \in [n], \\
&&& y_{ij} \in \{0,1\}, \quad x_{i\ell} \in \{0,1\},   \quad & \forall i \in [n], \forall \ell \in [m], \forall j \in [\Gamma].
\end{aligned}
\end{equation}
We proceed to linearize the above problem as follows. First, we replace the revenue from each path $j$ with a new variable $z_j$ and move the fraction to the constraints. In particular, we get
\[
z_j + \sum_{i=1}^n v_i z_j y_{ij} \leq \sum_{i=1}^n r_i v_i y_{ij}, \quad \forall j \in [\Gamma].
\]
Note that the term $z_j y_{ij}$ is bilinear. Since $y_{ij}$ is binary, we linearize this term by introducing a new variable $w_{ij}$ and ensuring $w_{ij} = z_j y_{ij}$ through the following big-M constraints for all $i \in [n]$ and $j \in [\Gamma]$:
\[
0 \leq w_{ij} \leq M y_{ij}, \quad \text{and} \quad -M (1-y_{ij}) + z_j \leq w_{ij} \leq z_j,
\]
where $M$ is a sufficiently large number that bounds all $z_j$. It is sufficient to set $M = \max_{i \in [n]} r_i$.

To summarize, the MILP for the \place\ problem is:
\begin{equation}
\label{prob:ip}
\begin{aligned}
& \max && \sum_{j=1}^{\Gamma} \theta_j z_j \\
& \text{s.t.} && z_j + \sum_{i=1}^n v_i w_{ij} \leq \sum_{i=1}^n r_i v_i y_{ij}, \quad & \forall j \in [\Gamma], \\
&&& 0 \leq w_{ij} \leq M y_{ij}, \quad & \forall i \in [n], \forall j \in [\Gamma], \\
&&& -M (1-y_{ij}) + z_j \leq w_{ij} \leq z_j, \quad & \forall i \in [n], \forall j \in [\Gamma], \\
&&& x_{i\ell} \leq y_{ij}, \quad & \forall i \in [n], \forall j \in [\Gamma], \forall \ell \in {\sf path}(j), \\
&&& y_{ij} \leq \sum_{\ell \in {\sf path}(j)} x_{i\ell}, \quad & \forall i \in [n], \forall j \in [\Gamma], \\
&&& \sum_{\ell=1}^m x_{i \ell} \leq 1, \quad &\forall i \in [n], \\
&&& y_{ij} \in \{0,1\}, \quad x_{i\ell} \in \{0,1\}, \quad w_{ij} \geq 0, \quad z_j \geq 0, \quad  & \forall i \in [n], \forall \ell \in [m], \forall j \in [\Gamma].
\end{aligned}
\end{equation}

Note that our formulation is valid for general browsing distributions. We use it for both the line and the random walk in our numerical experiments in Section~\ref{sec:numerics}. In particular, for a line with $m$ locations, $\Gamma = m$, i.e., there are $m$ paths where path $j$ corresponds to locations $1$ to $j$ and has a probability of $\theta_j$. For the 2D random walk, we generate $\Gamma$ random walks, where each random walk gives the subsets of locations visited in the grid. Each random walk has uniform probability $\theta_j = \frac{1}{\Gamma}$.

{\color{black}
\section{Additional numerical experiments} \label{appx:exp}
In Figure~\ref{fig:occ-scores}, products are ranked by revenue, with product 1 being the most expensive, and the bars report the average number of occurrences of each product across all instances. We show six panels: three corresponding to ALG and three to OPT, each for a different value of $n$, with bar colors distinguishing the number of locations. Several patterns clearly emerge. First, the most duplicated product is consistently the highest-revenue product, and the number of occurrences decreases as the product rank decreases, producing a revenue-ordered structure. This behavior is consistent across all values of $n$ and $m$. Second, ALG systematically uses more duplicates than OPT, reflecting its randomized construction. We would like to note that  the plots are truncated at the first ten products for $n=15$ and $n=20$, because beyond that point the average occurrences are zero. This is a direct consequence of the revenue distribution: we generate revenues from an exponential distribution, so products in the tail have very small values and are rarely, if ever, used. Overall, these findings align with the well-known properties of the MNL model, where the optimal assortment tends to be revenue-ordered.  

Figure~\ref{fig:occ-scores} focuses solely on occurrences, independent of the probability that locations are visited. To account for the value of location, Figure~\ref{fig:vis-scores} reports the \emph{visibility score} of each product, defined as the sum of the probabilities of the locations at which the product is placed. 
The conclusions parallel those from Figure~\ref{fig:occ-scores}: high-revenue products are systematically assigned to the most visible locations, the allocation remains revenue-ordered, and ALG again produces higher duplication  than OPT. Together, these figures illustrate that both ALG and OPT favor duplicating top-revenue products and placing them in prime, high-visibility locations, but ALG does so more aggressively.

\begin{figure*}[]
  \centering
  \includegraphics[width=0.9\textwidth]{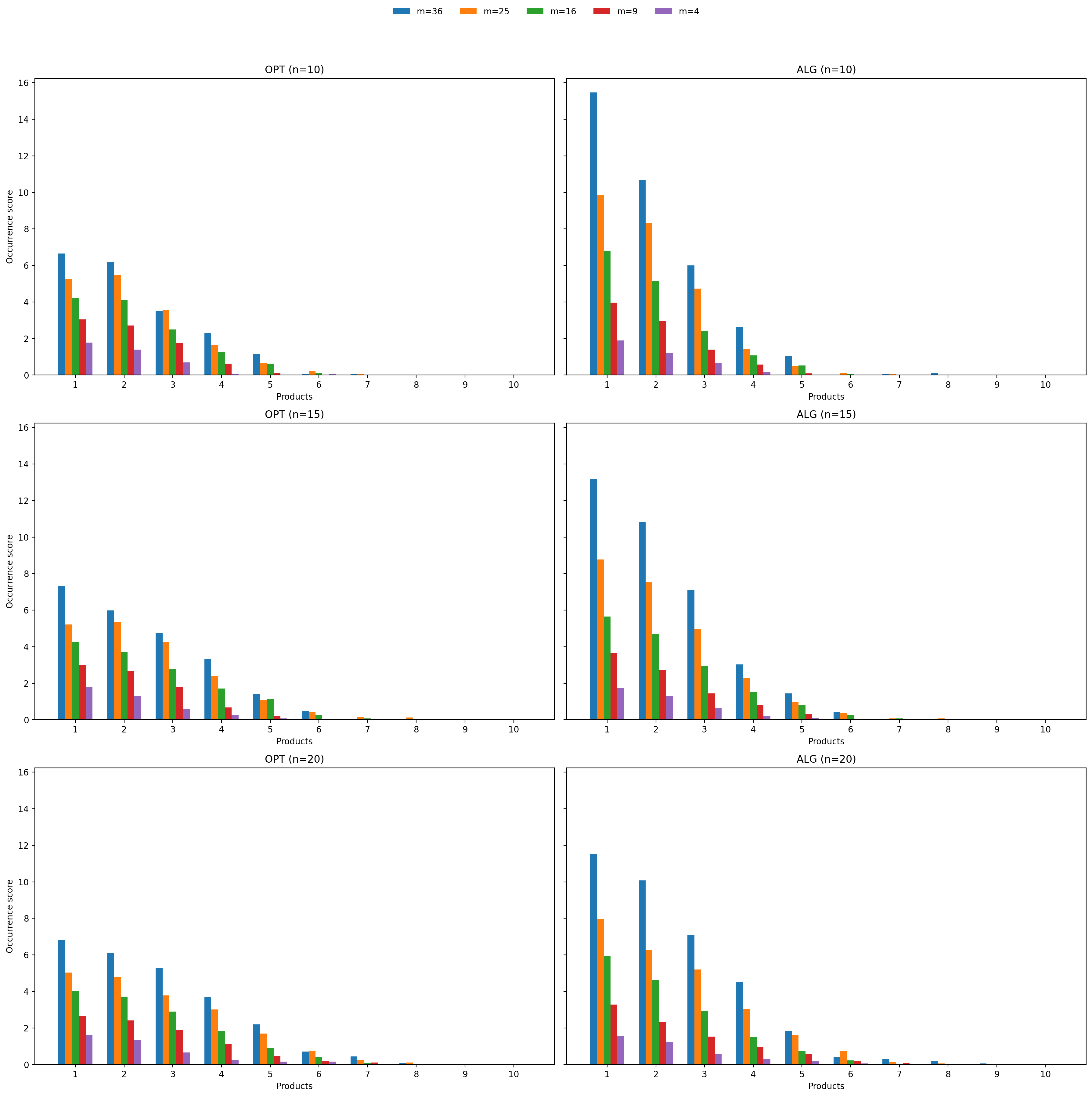}
  \caption{Occurrence score by product for OPT and ALG  for $n=10,15,20$.}
  \label{fig:occ-scores}
\end{figure*}

\begin{figure*}[]
  \centering
  \includegraphics[width=0.9\textwidth]{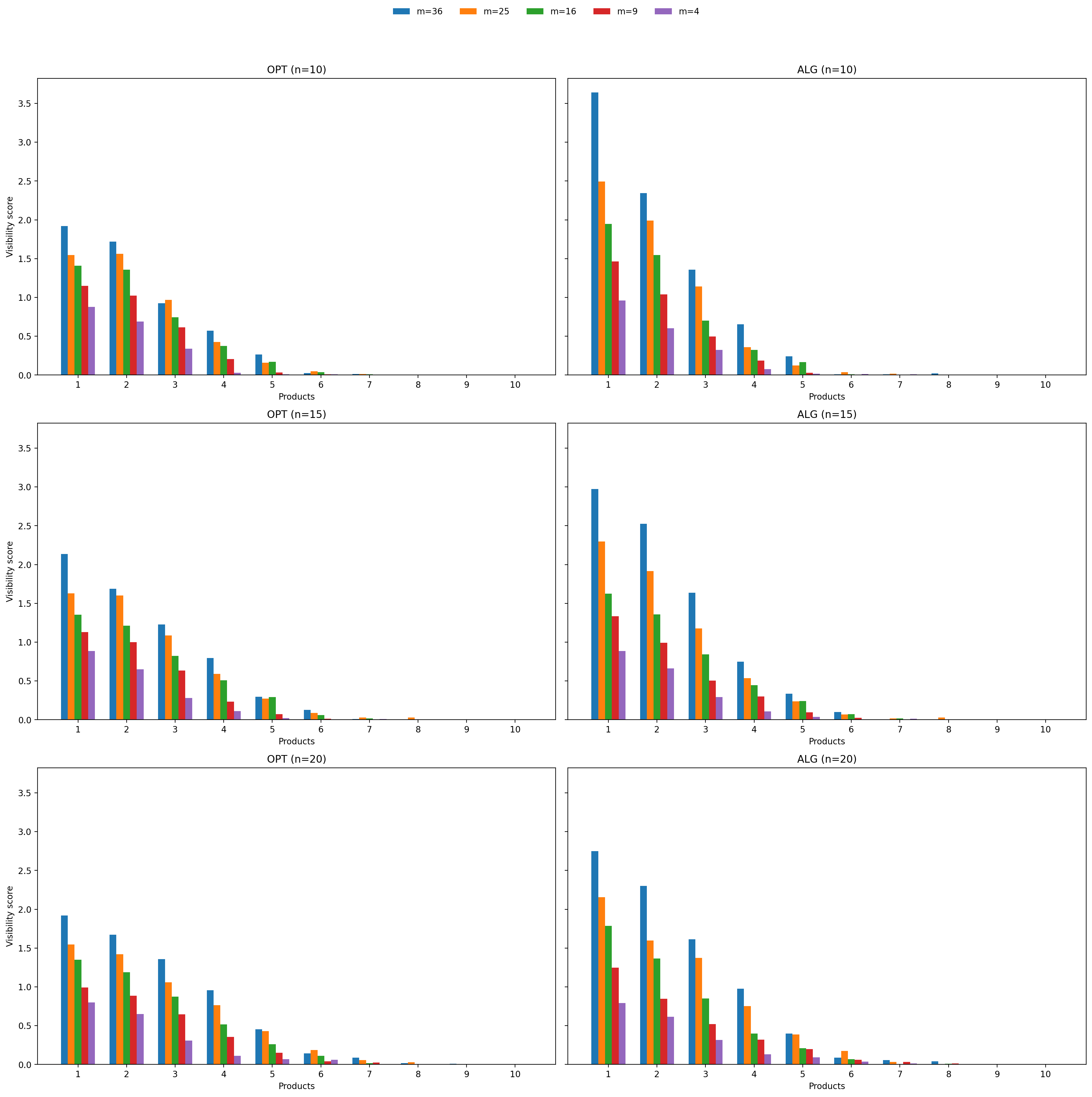}
  \caption{Visibility score by product for OPT and ALG for $n=10,15,20$.}
  \label{fig:vis-scores}
\end{figure*}

}

\end{APPENDICES}
\end{document}